\documentclass[showpacs,amsmath,amssymb,twocolumn,superscriptaddress,notitlepage,preprintnumbers,longbibliography,pra]{revtex4-1}

\usepackage{braket}
\usepackage{physics}
\usepackage{mathtools}
\usepackage{diagbox}
\usepackage[utf8]{inputenc}
\usepackage{algorithm}
\usepackage{color}
\usepackage{amssymb}
\usepackage{amsmath}%
\usepackage{algpseudocode}
\usepackage{algorithm}
\usepackage{qcircuit}
\usepackage{orcidlink}
\usepackage{titlesec}
\usepackage{cprotect}
\usepackage{graphicx} 
\usepackage{tikz-network} 
\usepackage[dvipsnames]{xcolor}
\usepackage{amsthm}

\newcommand{\BB}{\mathcal{B}}

\renewcommand{\tr}{\text{Tr}}

\newcommand{\OO}{\mathcal{O}}

\newcommand{\LL}{\mathcal{L}}
\newcommand{\II}{\boldsymbol{1}}

\newcommand{\bos}{\boldsymbol}
\newcommand{\shortxymatrix}[1]{\xymatrix@1@C=.6cm{#1}}
\newcommand{\EZero}{\shortxymatrix{\ar@{-}[r]&}}

\newcommand{\EE}{\mathcal{E}}

\newcommand{\CC}{\mathcal{C}}

\newcommand{\NN}{\mathcal{N}}
\newcommand{\ii}{\mathrm{i}}
\newcommand{\im}{\mathrm{i}}
\renewcommand{\SS}{\mathcal{S}}
\newcommand{\ZZ}{\mathbb{Z}}

\usepackage{hyperref}
\hypersetup{colorlinks=true, linkcolor=blue, citecolor=blue, urlcolor=magenta }

\newtheorem{proposition}{Proposition}

\usepackage[author=cxx,color=blue]{changes}

\makeatletter
\renewcommand{\@fnsymbol}[1]{%
  \ifcase#1\or
    $*$, $\dagger$\or
    $*$\or
    $\ddagger$\or
    $\S$\or
    \P\or
    \|\or
    **\or
    \dagger\dagger
  \fi
}
\makeatother

\begin{document}

\title{Phase transitions in parametrized quantum circuits}

\author{Xiaoyang Wang\orcidlink{0000-0002-2667-1879}}
\thanks{These authors contributed equally to the work,\\ \href{xiaoyang.wang@riken.jp}{xiaoyang.wang@riken.jp}}

\affiliation{RIKEN Center for Interdisciplinary Theoretical and Mathematical Sciences (iTHEMS), Wako 351-0198, Japan}
\affiliation{RIKEN Center for Computational Science (R-CCS), Kobe 650-0047, Japan}
\author{Han Xu}
\thanks{These authors contributed equally to the work.}
\affiliation{RIKEN Center for Computational Science (R-CCS), Kobe 650-0047, Japan}
\author{Lukas Broers}
\thanks{These authors contributed equally to the work.}
\affiliation{RIKEN Center for Computational Science (R-CCS), Kobe 650-0047, Japan}
\author{Tomonori Shirakawa\orcidlink{0000-0001-7923-216X}}
\affiliation{RIKEN Center for Computational Science (R-CCS), Kobe 650-0047, Japan}
\affiliation{RIKEN Center for Quantum Computing (RQC), Wako 351-0198, Japan}
\affiliation{RIKEN Center for Interdisciplinary Theoretical and Mathematical Sciences (iTHEMS), Wako 351-0198, Japan}
\affiliation{RIKEN Pioneering Research Institute (PRI), Wako 351-0198, Japan}
\author{Seiji Yunoki}
\email{yunoki@riken.jp}
\affiliation{RIKEN Center for Computational Science (R-CCS), Kobe 650-0047, Japan}
\affiliation{RIKEN Center for Quantum Computing (RQC), Wako 351-0198, Japan}
\affiliation{RIKEN Cluster for Pioneering Research (CPR), Wako 351-0198, Japan}
\affiliation{RIKEN Center for Emergent Matter Science (CEMS), Wako 351-0198, Japan}

\begin{abstract}
    Phase transitions are among the most intriguing phenomena in physical systems, yet their behavior near criticality remain challenging to study using classical algorithms. Parameterized quantum circuits (PQCs) offer a promising approach to investigating such regimes on practical quantum computers. However, in order to use it to probe critical behavior, a PQC itself should be non-trivial and exhibit a phase transition and non-analyticity~---~a property that has not yet been clearly identified. In this work, we identify a mechanism for generating non-analyticities intrinsically in PQCs. As a concrete realization, we construct a class of sequential PQCs whose observable expectation value is a non-analytic function of the circuit parameter in the infinite volume limit, showing that the prepared PQC states undergo a phase transition at the non-analytic points. The entanglement and the identified order parameter have distinct behaviors in different phases, revealing a phase diagram of the PQC state. We show that classical simulation of this PQC based on tensor networks and Pauli propagation gets less efficient in the vicinity of the phase transition point, indicating a physically motivated route towards practical quantum advantage using PQCs with phase transitions. 
\end{abstract}

\date{\today}
\maketitle

Phase transitions are fundamental phenomena in various physical systems, characterized by the emergence of long-range correlations and non-analyticity in macroscopic properties as the external parameters, such as temperature or applied field strength, are varied~\cite{huang2008statistical,Sachdev2011,negele2018quantum,brankov2000theory,guenther2021overviewqcdphasediagram,Belkin_2019,nakkiran2019deepdoubledescentbigger,kaplan2020scalinglawsneurallanguage}. 
These correlated effects make phase transitions notoriously difficult to study using classical algorithms like tensor networks and Monte-Carlo~\cite{ORUS2014117,WOLFF199093}.

Quantum computers provide an opportunity for investigating long-range correlated systems beyond the reach of classical simulation. In particular, parametrized quantum circuits (PQCs) are promising to prepare highly entangled quantum states and to demonstrate the practical utility of currently available quantum computers~\cite{Kandala2017,McArdle_20,TILLY2022,Di_Meglio_2024,Abbas_2024,Guo2024,Moreno25}. 
One use case of PQCs is to prepare ground or thermal states of quantum systems within the variational quantum eigensolver (VQE) framework~\cite{Peruzzo:2014,Wecker_2015,Cerezo_vqa,Wu_2019}. Recent numerical studies have demonstrated that PQCs can exhibit signatures of phase transitions in various model systems~\cite{zfdt-1k63,Wang23,guo2023performancevqephasetransition,Takis_2025}. However, due to the limited variational-state-preparation accuracy~\cite{farhi2014quantum,PhysRevA.103.042612,Bravyi_2020}, it remains unclear to what extent PQCs are capable of generating strong correlations and demonstrating the phase transition behavior, especially in the infinite volume limit. Additionally, the utility of PQCs beyond classical algorithms is challenged by classical simulation algorithms, including tensor network and Pauli propagation methods~\cite{Berezutskii_25,doi:10.1126/science.ado6285,Cerezo2025,10.1145/3564246.3585234,PhysRevLett.133.120603,Fontana_2025,Angrisani2025,lerch2024efficientquantumenhancedclassicalsimulation,rudolph2025paulipropagationcomputationalframework,broers2024exclusiveorencodedalgebraicstructure,orqa}. 

\begin{figure}
    \centering
    \includegraphics[width=0.47\textwidth]{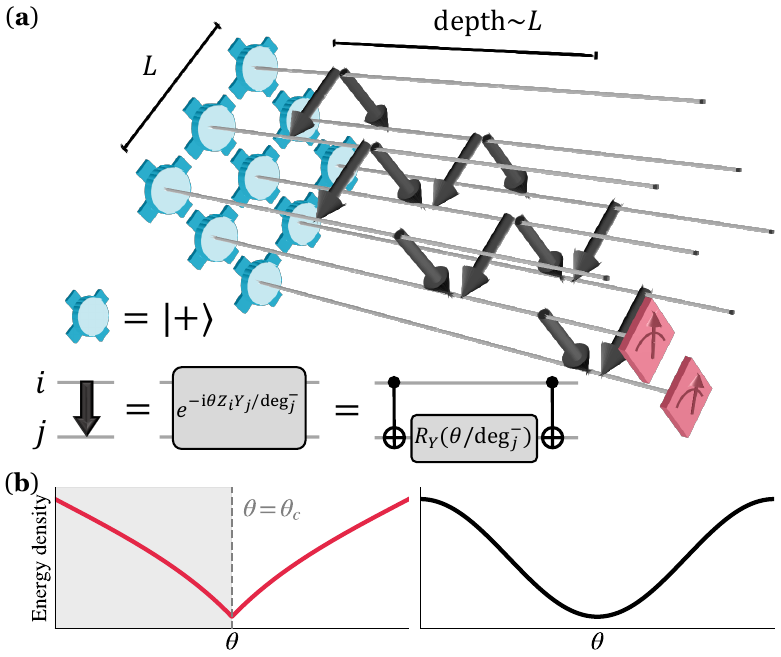}
    \caption{(\textbf{a}) Illustration of a sequential parametrized quantum circuit (PQC) that exhibits a phase transition. Qubits on a 2-d lattice are initialized in a $\ket{+}$ product state. Black arrows denote $ZY$ rotations applied sequentially to the qubits. (\textbf{b}) Energy landscape of PQCs with (left panel) and without (right panel) phase transitions. The energy cusp singularity at $\theta=\theta_c$ signals a phase transition dividing the shaded and blank parameter regions in the left panel.}
    \label{fig:intro-figure}
\end{figure}

In this work, we provide positive evidence on the capacity of PQCs to generate long-range correlated quantum states by showing the existence of phase transitions in a class of PQCs, which consists of a ladder-like sequence of parameterized gates, applied to an initial product state, as illustrated in Fig.~\ref{fig:intro-figure}(\textbf{a}). This sequential structure plays an important role in the state preparation using tensor network~\cite{PhysRevLett.95.110503,PhysRevA.75.032311,PhysRevLett.128.010607} and quantum algorithms~\cite{PRXQuantum.2.010342,PhysRevLett.101.180506,Chen_2024,wang2025performanceguaranteeslightconevariational,PhysRevB.107.L041109,Cochran_2025}. 
Using the language of Pauli propagation, we rigorously prove the emergence of energy cusp singularities, as illustrated in the left panel of Fig.~\ref{fig:intro-figure}(\textbf{b}). This shows that despite being constructed from analytic gates, sequential PQCs can exhibit non-analytic expectation values in the infinite volume limit, closely paralleling conventional phase transitions in statistical physics~\cite{PhysRev.87.404,PhysRev.87.410,PhysRevB.111.045139}. We further confirm that the cusp singularities divide the circuit states into distinct phases characterized by sharply different entanglement properties and order parameter values, revealing a phase diagram at the circuit level.
\\\\
\textbf{Non-analyticity in PQC}~---~We first show how the non-analytic behavior arises in PQCs using the language of Pauli propagation~\cite{10.1145/3564246.3585234,PhysRevLett.133.120603,Fontana_2025,Angrisani2025,lerch2024efficientquantumenhancedclassicalsimulation,rudolph2025paulipropagationcomputationalframework,broers2024exclusiveorencodedalgebraicstructure,orqa}. The physical properties of a quantum system are quantified by local observables $\hat{O}$ represented in the Pauli basis as $\hat{O}=\sum_I c_I P_I$, with $c_I\in\mathbb{R}$ and $P_I$ denoting Pauli strings.
We denote the quantum state produced by a PQC as $\ket{\phi(\theta)} = V(\theta)\ket{\phi_0}$. $V(\theta)\equiv  U_K(\theta)\cdots U_1(\theta)$ is composed of $K$ non-commuting layers of quantum gates. The Pauli string expectation is defined as
\begin{align}
    \langle P_I\rangle_{\theta} \equiv \bra{\phi(\theta)}P_I\ket{\phi(\theta)}=\bra{\phi_0}V^{\dagger}(\theta)P_I V(\theta)\ket{\phi_0}.\nonumber
\end{align}
Here we consider circuits in which all layers share the same parameter $\theta$. Using Pauli propagation methods, the expectation value is computed layer-by-layer: after the first layer $U_K^{\dagger}P_I U_K$, the initial Pauli string $P_I$ is transformed into a weighted sum of many Pauli strings, each of which is then acted on by the next layer $U_{K-1}$, generating more strings, and so on. Finally, the expectation $\langle P_I\rangle_{\theta}$ becomes the weighted sum of expectation values of individual Pauli strings with respect to $\ket{\phi_0}$.

\begin{figure}
    \centering
    \includegraphics[width=0.48\textwidth]{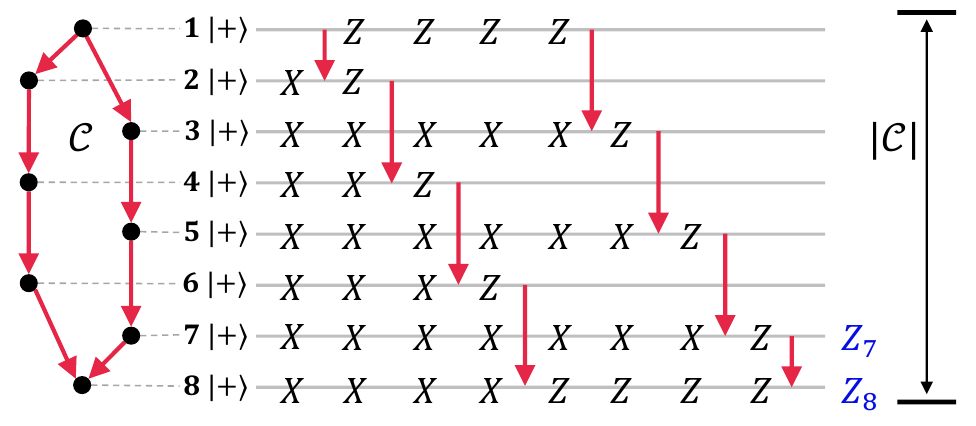}
    \caption{Pauli propagation of the observable $Z_7Z_8$ in a PQC that consists of a $ZY$ rotation cycle $\CC$. Each red arrow denotes a $ZY$ rotation $e^{-\im\theta Z_iY_j/2}$ specified in Fig.~\ref{fig:intro-figure}(\textbf{a}). The vertically arranged Pauli strings together form a Pauli path contributing to the expectation $\langle Z_7Z_8\rangle_{\theta}$.}
    \label{fig:non_local_path}
\end{figure}

We start with a minimal one-dimensional example of a PQC composed of Pauli-$ZY$ rotations $e^{-\im\theta Z_iY_j/2}$, acting on the initial product state $\ket{\phi_0}=\ket{+}^{\otimes N}$, which is the $+1$ eigenstate of Pauli-$X$ strings.
The $ZY$ rotations are arranged sequentially along a closed cycle $\CC$ of length $|\CC|=N$, as illustrated in Fig.~\ref{fig:non_local_path}. Consider the expectation of a local observable $\hat{O} = Z_7Z_8$ located at the bottom of the cycle. Under successive application of the $ZY$ rotations, the observable is deformed into a sequence of vertically arranged Pauli strings in Fig.~\ref{fig:non_local_path}, which we refer to as a \emph{Pauli path}. Each transformation along the path contributes a factor of either $\sin\theta$ or $\cos\theta$, depending on whether the Pauli string is deformed or not. After all $ZY$ rotations have been performed, the expectation takes a closed-form expression
\begin{align}
    \langle\hat{O}\rangle_\theta = -\sin\theta\cos\theta+(-\sin\theta)^{|\CC|-1}\cos\theta,
\end{align}
where the second term comes from the non-local Pauli path in Fig.~\ref{fig:non_local_path}. 

In this PQC, non-analyticity arises from the proliferation of non-local Pauli paths. 
For a single closed cycle, the $k_0$th derivative of the observable expectation is
\begin{align}
    \partial^{k_0}_{\theta}\langle \hat{O}\rangle_\theta=|\CC|^{k_0}(-\sin\theta)^{|\CC|-k_0-1}\cos^{k_0+1}\theta+\OO(|\CC|^{k_0-1}).\nonumber
\end{align}
The key factor is the number $N_{\mathcal C}$ of distinct non-local cycles with respect to the cycle length $|\mathcal C|$. If $N_{\CC}$ grows only polynomially with $|\CC|$, i.e., $N_{\CC}=\mathcal{O}(\mathrm{poly}(|\CC|))$, then the total contribution $N_{\CC}\,\partial^{k_0}_{\theta}\langle \hat{O}\rangle_\theta$ remains well-behaved. In contrast, if $N_\CC$ proliferates exponentially as $N_{\CC}\sim\OO({b^{|\CC|}}/{|\CC|^{k}})$, then the total contribution in the $|\CC|\to\infty$ limit scales as
\begin{align}
   N_{\CC}\,\partial^{k_0}_{\theta}\langle \hat{O}\rangle_\theta\sim\OO(|\CC|^{k_0-k}(b\sin\theta)^{|\CC|}),
   \label{eq:divergent-derivative}
\end{align}
where non-analyticity occurs whenever $b\sin\theta\ge 1$ and the polynomial factor $|\CC|^{k_0-k}$ fails to suppress the growth. For example, if $b\sin\theta=1, k<2$, the 2nd derivative is divergent at the point $\theta=\arcsin(1/b)$. A divergent 2nd derivative corresponds to the discontinuity in a 1st derivative, and thus a cusp singularity in the observable expectation value, as illustrated in Fig.~\ref{fig:intro-figure}(\textbf{b}).

The Pauli paths are non-local only if the depth of the PQC grows with the system size, which is satisfied in sequential PQCs. Additionally, the sequential circuit has finite local depth that is provably barren-plateau-free~\cite{zhang_absence_2024}, and therefore well-suited for efficient optimization. In the following content, we prove that sequential PQCs can have exponentially many non-local Pauli paths and thus non-analytic observable expectations.
\\\\
\textbf{Ising whip circuit}~---~We extend the above example to a sequential PQC on a 2-d $L \times L$ lattice, where $N=L^2$ qubits are initialized to $\ket{\phi_0}=\ket{+}^{\otimes N}$, and Pauli-$ZY$ rotations act layer-by-layer from the top to the bottom of the lattice, as shown in Fig.~\ref{fig:intro-figure}(\textbf{a}). To account for boundary effects, each $ZY$ rotation applied to a qubit pair $(i,j)$ is parametrized by $\theta/\deg^-_j$, where $\deg^-_j$ represents the number of $ZY$ rotations in which the generators acts on $j$ with $Y_j$. The action of the circuit can be understood readily at the two-site level.
A Pauli-$ZY$ rotation acting on qubit pair $(1,2)$ takes the form $e^{-\im \theta Z_1Y_2} = \ket{0}\bra{0}_1\otimes e^{-\im \theta Y_2} +\ket{1}\bra{1}_1\otimes e^{\im \theta Y_2}$, 
indicating that the second qubit is whipped around the $Y$-axis of the Bloch sphere in the direction controlled by the state of the first qubit.
Taking $\theta=-\pi/4$ leads to the Bell state $(\ket{00}+\ket{11})/\sqrt{2}$, as illustrated in Fig.~\ref{fig:whip_circuit}.
For the full PQC, the effective control qubit is the top qubit in Fig.~\ref{fig:intro-figure}(\textbf{a}), and the remaining qubits are whipped, such that the PQC state reads
\begin{equation}
    \begin{aligned}
    \ket{\phi_{\text{w}}(\theta)}&\equiv \prod_{\langle i,j\rangle} \left. e^{-\im\theta Z_i Y_j/\deg^-_j} \ket{+}^{\otimes N}\right|_{\theta =-\pi/4}\\
    &=\frac{1}{\sqrt{2}}(\ket{0}^{\otimes N}+\ket{1}^{\otimes N}).
    \label{eq:GHZ-state}
\end{aligned}
\end{equation}
This state is the ground state of the $2$-d ferromagnetic Ising Hamiltonian $H= -\frac{1}{|\EE|}\sum_{\langle i,j\rangle }Z_iZ_j$. The normalization factor $|\EE|\equiv2L(L-1)$ ensures that the expectation $\langle H \rangle_{\theta}\equiv\bra{\phi_{\text{w}}(\theta)}H\ket{\phi_{\text{w}}(\theta)}$ represents the energy density of the Ising system. We refer to this sequential PQCs as 2-d \textit{whip circuits}. Generalizations to arbitrary spatial dimensions are discussed in the Supplemental Material~\cite{supp}. The $ZY$ rotation is derived from the imaginary-time evolution of the Ising Hamiltonian~\cite{PhysRevA.111.032612}. In the same way, the sequential PQC can be constructed for other models like $\ZZ_2$ gauge theory, as detailed in the End Note.

\begin{figure}
    \centering
    \includegraphics[width=0.48\textwidth]{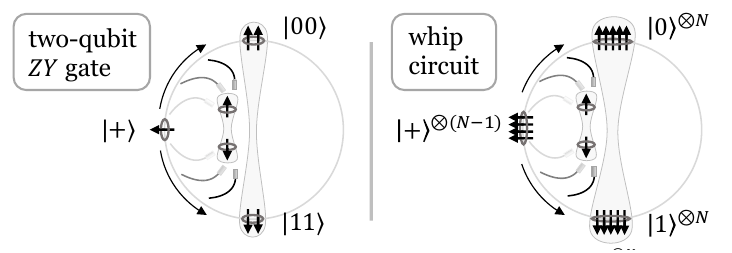}
    \caption{Whip circuit preparing the Ising ground state. The control qubit at the center of the circle whips the remaining qubits to align with it. The circle is the Bloch sphere in the $x$-$z$ plane. The gray dumbbells indicate superposition.}
    \label{fig:whip_circuit}
\end{figure}
\begin{figure}
    \centering
    \includegraphics[width=0.48\textwidth]{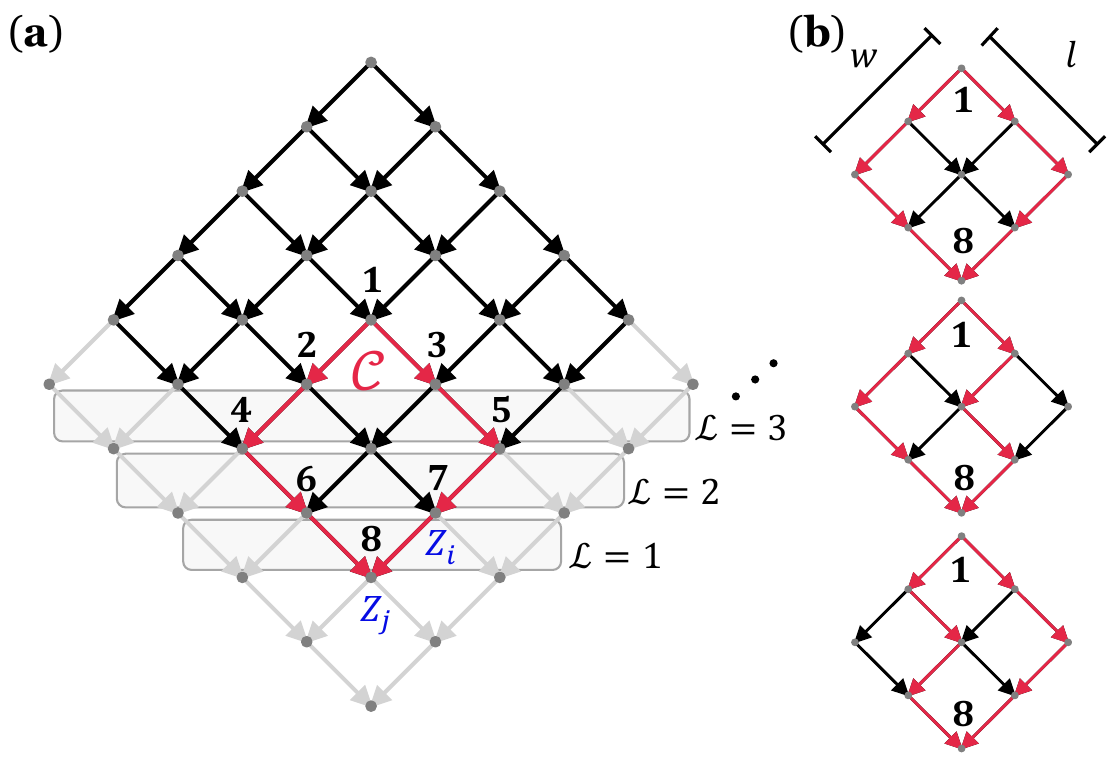}
    \caption{Non-local Pauli paths of the 2-d whip circuit. (\textbf{a}) Black arrows denote $ZY$ rotations that act non-trivially on the observable $Z_iZ_j$. The red-highlighted cycle corresponds to a non-local Pauli path similar to that in Fig.~\ref{fig:non_local_path}. (\textbf{b}) Three non-local Pauli paths in the length-$l$ width-$w$ rectangle contributing to the expectation $\langle Z_i Z_j\rangle_{\theta}$.}
    \label{fig:Pauli-paths}
\end{figure}

\begin{figure*}
    \centering
    \includegraphics[width=1\textwidth]{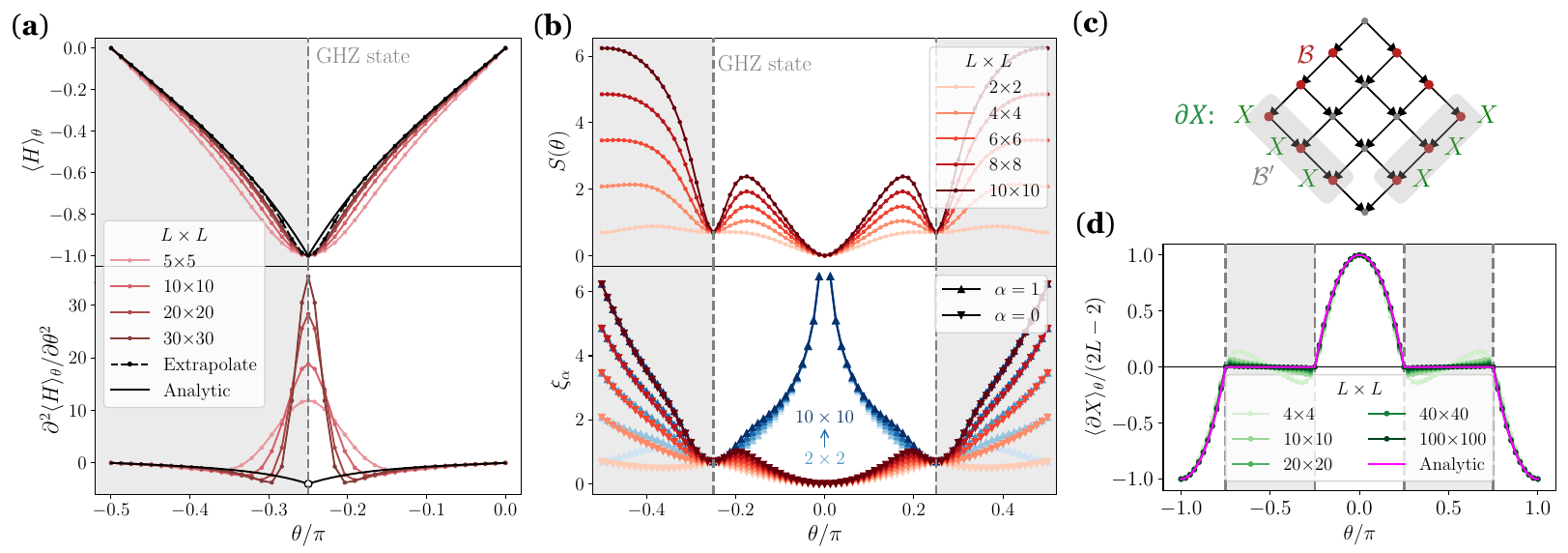}
    \caption{(\textbf{a}) Energy density of the Ising whip circuit and its 2nd derivative as a function of the whip angle $\theta$. Numerical curves on $L\times L$ lattices and their infinite volume extrapolated curve converge to the analytic one. $\theta_c=-\pi/4$ is the non-analytic point with the divergent derivative. (\textbf{b}) Entanglement entropy and entanglement spectrum of the Ising whip circuit on lattices from $2\times 2$ to $10\times 10$. They have distinguished behavior in sectors divided by the non-analytic points $\theta_c=\pm\pi/4$. (\textbf{c}) Boundary ($\mathcal{B}$) and lower boundary ($\mathcal{B}'$) of the 2-d lattice. The order parameter operator is the summation of Pauli-$X$ operators at the lower boundary. (\textbf{d}) Order parameter of the Ising whip circuit as a function of the whip angle. The transitions from zero to non-zero at $\theta_c=\pi/4+m\pi/2, m\in\ZZ$ signal spontaneous symmetry breaking.}
    \label{fig:energy-landscape}
\end{figure*}

\begin{figure}[b]
    \centering
    \includegraphics[width=0.36\textwidth]{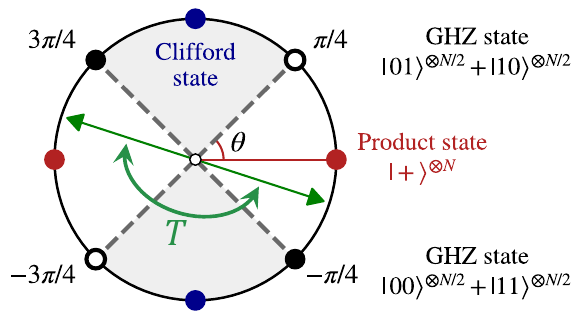}
    \caption{Phase diagram of the Ising whip circuit with the parameter $\theta\in(-\pi,\pi]$. The blank and shaded regions are two phases divided by the non-analytic points $\theta_c$. $T:\theta\rightarrow \theta+\pi$ denotes the symmetry transformation.}
    \label{fig:phase}
\end{figure}

The 2-d whip circuit generates exponentially many non-local Pauli paths as required by the PQC non-analyticity. The local observable $Z_iZ_j$ is transformed non-trivially by $ZY$ rotations denoted by black arrows in Fig.~\ref{fig:Pauli-paths}(\textbf{a}). The expectation value $\langle Z_iZ_j\rangle_{\theta}$ is a linear combination of non-local Pauli paths. Similar to Fig.~\ref{fig:non_local_path}, each Pauli path corresponds to a red-highlighted oriented cycle supported on a rectangular region with length $l$ and width $w$, as illustrated in Fig.~\ref{fig:Pauli-paths}(\textbf{b}). These cycles have the same length $|\CC|=2(l+w)$. The number of cycles supported on this rectangle is counted using the Lindstr\"om-Gessel-Viennot (LGV) lemma~\cite{Gessel1989DeterminantsPA}
\begin{align}
    N_{\CC}(l,w) = \frac{lw}{(l+w-1)(l+w)^2}\binom{l+w}{l}^2.
    \label{eq:main-LGV-lemma}
\end{align}
Using the Stirling formula, for example, consider the special case $l=w\to\infty$,
\begin{align}
    \lim_{l=w\to\infty}N_{\CC}(l,w) = \frac{2}{\pi|\CC|^2}2^{|\CC|}.
    \label{eq:n-pauli-path-scaling}
\end{align}
Thus, the exponentially growing $N_{\CC}(l,w)$ with respect to $|\CC|$ satisfies the criterion for divergent derivatives in Eq.~\eqref{eq:divergent-derivative}.

Using Eq.~\eqref{eq:main-LGV-lemma}, in the infinite volume limit $L\to \infty$, the expectation value $\langle Z_iZ_j\rangle_{\theta}$ of the whip circuit can be expressed as an infinite series~\cite{supp}
\begin{align}
    \langle Z_iZ_j\rangle_{\theta}=-\sum_{\LL=1}^{\infty}\frac{4^{\LL-1}\Gamma(\LL-\frac{1}{2})}{\sqrt{\pi}\LL !}(\cos\theta\sin\theta)^{2\LL-1},
    \label{eq:ZZ-series}
\end{align}
where $\LL$ labels the $ZY$-rotation layers [see gray boxes in Fig.~\ref{fig:Pauli-paths}(\textbf{a})], and $\Gamma(x)$ is the Gamma function. 
This series has a closed form, yielding the Ising energy density
\begin{align}
\lim_{L\to \infty}\langle H\rangle_{\theta}= \langle Z_iZ_j\rangle_{\theta}=\frac{1-|\cos 2\theta|}{\sin 2\theta},
    \label{eq:analytic-formula}
\end{align}
which has cusp singularities at $\theta_c= \pi/4+m\pi/2, m\in\mathbb{Z}$. Numerical results in Fig.~\ref{fig:energy-landscape}(\textbf{a}) confirm convergence to the analytic prediction as the system size increases, with the second derivative $\partial_\theta^2\langle H\rangle_\theta$ growing rapidly near $\theta=-\pi/4$.
\\\\
\textbf{Phase diagram}~---~The PQC state $\ket{\phi_{\text{w}}(\theta)}$ has distinct behavior in two phases separated by the non-analytic points $\theta_c$. We illustrate this by studying the entanglement properties and order parameter.

Figure~\ref{fig:energy-landscape}(\textbf{b}) shows the entanglement entropy $S(\theta)=-\tr (\rho_A\log\rho_A)$, where $\rho_A$ is the reduced density matrix of half of the 2-d lattice, together with the entanglement spectrum $\xi_{\alpha}(\theta)$ from the Schmidt decomposition $\ket{\phi_{\text{w}}(\theta)}=\sum_{\alpha} e^{-\xi_{\alpha}(\theta)/2}\ket{\phi_{\alpha}^A}\otimes |\phi_{\alpha}^{\bar{A}}\rangle$. We find that $S(\theta)$ exhibits different scaling behaviors in sectors divided by $\theta_c=\pm\pi/4$:
$S(0)$ is constant to the system size, whereas $S(\pm\pi/2)$ increase linearly with $L$, indicating an area-law entanglement. 
The low-lying entanglement spectrum changes its degeneracy across $\theta_c$, from non-degenerate within $[-\pi/4,\pi/4]$ to twofold degenerate outside this interval, which is typically observed during quantum phase transitions~\cite{PhysRevLett.108.227201,PhysRevA.78.032329,PhysRevB.83.245134,PhysRevLett.109.237208,PhysRevB.106.014306,Lepori2013}.

The order parameter of the Ising whip circuit can be identified from the symmetry transformation of $\ket{\phi_{\text{w}}(\theta)}$. The energy expectation $\langle H\rangle_{\theta}$ in Eq.~\eqref{eq:analytic-formula} is invariant under a discrete transformation $T:\theta\rightarrow \theta+\pi$. In the End Note, we prove that this transformation corresponds to a symmetry operator $\hat{T}\equiv \prod_{i\in \mathcal{B}}Z_i$, such that 
\begin{align}
    \ket{\phi_{\text{w}}(\theta+\pi)} = e^{i\phi}\hat{T} \ket{\phi_{\text{w}}(\theta)},
    \label{eq:symmetry-transformation}
\end{align}
where $e^{i\phi}$ is a global phase. Here, $\mathcal{B}$ includes the boundary sites of the 2-d lattice as illustrated in Fig.~\ref{fig:energy-landscape}(\textbf{c}). To signal the symmetry breaking, the order parameter operator should anti-commute with the symmetry operator. Thus, we choose the order parameter operator to be the summation of the Pauli-$X$ operators at the \textit{lower} boundary $\BB'$ of the 2-d lattice illustrated in Fig.~\ref{fig:energy-landscape}(\textbf{c})~\footnote{The upper boundary is omitted. Because local observables at sites $i\in \BB/\BB'$ do not commute with $ZY$ rotations only at the 1-d upper boundary of the whip circuit, such that the 2-d structure of the whip circuit is not ``seen'' by these observables. }
\begin{align}
    \partial X \equiv \sum_{i\in\BB'}X_i.
\end{align}
This operator includes $2L-2$ sites and satisfies the anti-commutation relation $\{\hat{T},\partial X\}=0$.

In the infinite volume limit, we derive the analytic expression of the order parameter~\cite{supp}
\begin{align}
    \lim_{L\to\infty}\frac{\langle \partial X\rangle_{\theta}}{2L-2}=\frac{\cos 2\theta +|\cos 2\theta|}{ 2\cos\theta},
\end{align}
which also has cusp singularities at $\theta_c=\pi/4+m\pi/2,m\in\ZZ$. 
As plotted in Fig.~\ref{fig:energy-landscape}(\textbf{d}), the order parameter is zero in the symmetry preserving region $\theta\in[-3\pi/4,-\pi/4]\cup [\pi/4,3\pi/4]$, and non-zero in the symmetry breaking region $\theta\in(-\pi/4,\pi/4)\cup (3\pi/4,-\pi/4)$. The numerical results of 2-d projected entangled pair state (PEPS) converge to the analytic prediction as the lattice size increases from $4\times 4$ to $100\times 100$.

The resulting phase diagram of the Ising whip circuit as a function of $\theta$ is shown in Fig.~\ref{fig:phase}, which implicitly admits a period of $2\pi$ in $\theta$. More details of the phase diagram and discussion on the correlation length are presented in Supplemental Material~\cite{supp}.
\\\\
\textbf{Classical simulability}~---~
Phase transitions can indicate increased classical simulation complexity of PQC. For the Pauli propagation method, the expectation $\langle Z_iZ_j\rangle_{\theta}$ in Eq.~\eqref{eq:ZZ-series} has an asymptotic form for $\LL\gg1$
\begin{align}
    \langle Z_iZ_j\rangle_{\theta}\sim-\sum_{\LL}\frac{1}{\LL^{3/2}}(\sin2\theta)^{2\LL-1}, 
    \label{eq:ZZ-series-scaling}
\end{align}
where the addends transit from exponential to power-law decay with respect to $\LL$ at $\theta=\theta_c$. Since $\LL$ is proportional to the weight of Pauli strings contributing to $\langle Z_iZ_j\rangle_{\theta}$, the transition makes the Pauli propagation method using the weight-based truncation ineffective. Specifically, we prove that naive Pauli propagation displays a time complexity of $T_{\text{naive PP}}\sim \OO(4^{\text{poly}(L)})$ on an $L\times L$ lattice to evaluate $\langle Z_iZ_j\rangle_{\theta}$ within an error $\epsilon \sim \OO(1/\text{poly} (L))$. The exponential scaling is reached at and only at the phase transition point $\theta=\theta_c$. In contrast, on a quantum computer, the same task displays a time complexity of $T_{\text{QC}}\sim \OO(\text{poly}(L))$~\cite{supp}.

To overcome this classical inefficiency, we introduce an early-evaluation strategy for Pauli propagation methods, which exploits the 2-d PQC structure and the initial product state by tracing out qubits mid-circuit when possible. This allows us to classically evaluate $\langle Z_iZ_j\rangle_{\theta}$ on an $L\times L$ lattice exactly using time $T_{\text{PP}}\sim\OO(L^4)$~\cite{supp}. We implement this early-evaluation strategy, and numerically verify the time scaling, using the or-represented quantum algebra (ORQA) framework~\cite{orqa}. However, this polynomial scaling is not straightforwardly generalized to three- and higher-dimensional Ising whip circuits~\cite{supp}.

For the tensor-network method, the whip circuit simulation using matrix product state (MPS) is not tractable due to the area-law entanglement shown in Fig.~\ref{fig:energy-landscape}(\textbf{b}). Instead, we parametrize the Ising whip circuit by PEPS, and numerically find that evaluating $\langle Z_iZ_j\rangle_{\theta}$ with constant truncation cutoffs requires time $T_{\text{TN}}\sim\OO(L^2)$, and the largest evaluation error occurs in the vicinity of the phase transition points~\cite{supp}. However, the efficient PEPS simulation is also not easily applicable to the higher-dimensional whip circuits. Therefore, we conjecture the classical intractability of simulating the high-dimensional Ising whip circuits.

As an example of a higher-dimensional PQC with phase transition, the sequential PQC of $\ZZ_2$ gauge theory in the End Note is constructed on a 3-d lattice, where a cusp singularity similar to that in the Ising whip circuit is observed. 
\\\\
\textbf{Conclusion}~---~This work presents a phase transition that occurs intrinsically in a class of sequential PQCs. This circuit-based phase transition is characterized by  non-analytic points originating from the exponential proliferation of non-local Pauli paths. We introduce an exemplary Ising-type PQC whose energy density exhibits cusp singularities. The PQC states on the two sides of the cusp singularity are in different phases with distinct entanglement properties and order parameter values, such that the phase diagram of the Ising PQC state is obtained. Similar non-analyticity is observed in other models, including the $\ZZ_2$ gauge theory. Moreover, we discuss the difficulty of classically simulating the Ising PQC near criticality using Pauli propagation and tensor network methods, indicating that high-dimensional PQCs with phase transitions are classically intractable and thus promising to demonstrate quantum advantage using practical quantum computers. However, in practice, strong hardware noise has an impact on the non-local Pauli paths of the sequential PQC~\cite{PhysRevLett.133.120603}, and thus on the potential quantum advantage. The hardware demonstration of the phase transition and the study of the noise effects are left for future work.
\\\\
\textbf{Acknowledgment}~---~We thank Tetsuo Hatsuda, Yasumichi Aoki, Lingxiao Wang, Tongyang Li, and Xiao Yuan for insightful discussions and valuable comments. 
This work was partially supported by Project No.~JPNP20017, funded by the New Energy and Industrial Technology Development Organization (NEDO), Japan. 
We acknowledge support from JSPS KAKENHI (Grant No.~JP21H04446) 
from the Ministry of Education, Culture, Sports, Science and Technology (MEXT), Japan.
Additional funding was provided by 
JST COI-NEXT (Grant No.~JPMJPF2221) and by 
the Program for Promoting Research of the Supercomputer Fugaku (Grant No.~MXP1020230411) from MEXT, Japan. 
We further acknowledge support from 
the RIKEN TRIP initiative (RIKEN Quantum), the UTokyo Quantum Initiative, and
the COE research grant in computational science from Hyogo Prefecture and Kobe City through the Foundation for Computational Science. 
\\\\
\textbf{Author contributions}~---~X.W. and H.X. conceived the original study. H.X., L.B., and X.W. carried out the numerical calculations to support the study. L.B. and X.W. performed the analytical derivations. All authors X.W., H.X., L.B., T.S., and S.Y. discussed the results of the manuscript, and all authors contributed to the writing of the manuscript.

\bibliography{main.bib}

@misc{supp,
howpublished = {See Supplemental Material at \url{Supplemental/Material/website} for constructing Ising whip circuit on arbitrary dimensions, properties of 1-d and 2-d Ising whip circuits, details on the whip circuit phase diagram, construction of the 3-d WALA, and the classical simulation time complexity of the Ising whip circuits, which includes Ref. [64,65]} }

@inproceedings{Gessel1989DeterminantsPA,
  title={Determinants, Paths, and Plane Partitions},
  author={Gessel, Ira M. and Viennot, X. G.},
  year={1989},
  url={https://people.brandeis.edu/~gessel/homepage/papers/pp.pdf}
}

@article{ORUS2014117,
	author = {Rom{\'a}n Or{\'u}s},
	doi = {https://doi.org/10.1016/j.aop.2014.06.013},
	issn = {0003-4916},
	journal = {Annals of Physics},
	pages = {117-158},
	title = {A practical introduction to tensor networks: Matrix product states and projected entangled pair states},
	url = {https://www.sciencedirect.com/science/article/pii/S0003491614001596},
	volume = {349},
	year = {2014}}

@article{Berezutskii_25,
	author = {Berezutskii, Aleksandr and Liu, Minzhao and Acharya, Atithi and Ellerbrock, Roman and Gray, Johnnie and Haghshenas, Reza and He, Zichang and Khan, Abid and Kuzmin, Viacheslav and Lyakh, Dmitry and Lykov, Danylo and Mandr{\`a}, Salvatore and Mansell, Christopher and Melnikov, Alexey and Melnikov, Artem and Mironov, Vladimir and Morozov, Dmitry and Neukart, Florian and Nocera, Alberto and Perlin, Michael A. and Perelshtein, Michael and Steinberg, Matthew and Shaydulin, Ruslan and Villalonga, Benjamin and Pflitsch, Markus and Pistoia, Marco and Vinokur, Valerii and Alexeev, Yuri},
	da = {2025/10/01},
	doi = {10.1038/s42254-025-00853-1},
	id = {Berezutskii2025},
	isbn = {2522-5820},
	journal = {Nature Reviews Physics},
	number = {10},
	pages = {581--593},
	title = {Tensor networks for quantum computing},
	ty = {JOUR},
	url = {https://doi.org/10.1038/s42254-025-00853-1},
	volume = {7},
	year = {2025}}

@article{KITAEV20032,
	author = {A.Yu. Kitaev},
	doi = {https://doi.org/10.1016/S0003-4916(02)00018-0},
	issn = {0003-4916},
	journal = {Annals of Physics},
	number = {1},
	pages = {2-30},
	title = {Fault-tolerant quantum computation by anyons},
	url = {https://www.sciencedirect.com/science/article/pii/S0003491602000180},
	volume = {303},
	year = {2003}}

@article{PhysRevLett.101.180506,
  title = {Sequential Implementation of Global Quantum Operations},
  author = {Lamata, L. and Le\'on, J. and P\'erez-Garc\'{\i}a, D. and Salgado, D. and Solano, E.},
  journal = {Phys. Rev. Lett.},
  volume = {101},
  issue = {18},
  pages = {180506},
  numpages = {4},
  year = {2008},
  month = {Oct},
  publisher = {American Physical Society},
  doi = {10.1103/PhysRevLett.101.180506},
  url = {https://link.aps.org/doi/10.1103/PhysRevLett.101.180506}
}

@article{PRXQuantum.2.010342,
  title = {Real- and Imaginary-Time Evolution with Compressed Quantum Circuits},
  author = {Lin, Sheng-Hsuan and Dilip, Rohit and Green, Andrew G. and Smith, Adam and Pollmann, Frank},
  journal = {PRX Quantum},
  volume = {2},
  issue = {1},
  pages = {010342},
  numpages = {15},
  year = {2021},
  month = {Mar},
  publisher = {American Physical Society},
  doi = {10.1103/PRXQuantum.2.010342},
  url = {https://link.aps.org/doi/10.1103/PRXQuantum.2.010342}
}

@article{PhysRevA.75.032311,
  title = {Sequential generation of matrix-product states in cavity QED},
  author = {Sch\"on, C. and Hammerer, K. and Wolf, M. M. and Cirac, J. I. and Solano, E.},
  journal = {Phys. Rev. A},
  volume = {75},
  issue = {3},
  pages = {032311},
  numpages = {10},
  year = {2007},
  month = {Mar},
  publisher = {American Physical Society},
  doi = {10.1103/PhysRevA.75.032311},
  url = {https://link.aps.org/doi/10.1103/PhysRevA.75.032311}
}

@article{PhysRevLett.95.110503,
  title = {Sequential Generation of Entangled Multiqubit States},
  author = {Sch\"on, C. and Solano, E. and Verstraete, F. and Cirac, J. I. and Wolf, M. M.},
  journal = {Phys. Rev. Lett.},
  volume = {95},
  issue = {11},
  pages = {110503},
  numpages = {4},
  year = {2005},
  month = {Sep},
  publisher = {American Physical Society},
  doi = {10.1103/PhysRevLett.95.110503},
  url = {https://link.aps.org/doi/10.1103/PhysRevLett.95.110503}
}

@article{Takis_2025,
	author = {Angelides, Takis and Naredi, Pranay and Crippa, Arianna and Jansen, Karl and K{\"u}hn, Stefan and Tavernelli, Ivano and Wang, Derek S.},
	da = {2025/01/18},
	doi = {10.1038/s41534-024-00950-6},
	id = {Angelides2025},
	isbn = {2056-6387},
	journal = {npj Quantum Information},
	number = {1},
	pages = {6},
	title = {First-order phase transition of the Schwinger model with a quantum computer},
	ty = {JOUR},
	url = {https://doi.org/10.1038/s41534-024-00950-6},
	volume = {11},
	year = {2025}}

@misc{guo2023performancevqephasetransition,
      title={The Performance of VQE across a phase transition point in the $J_1$-$J_2$ model on kagome lattice}, 
      author={Yuheng Guo and Mingpu Qin},
      year={2023},
      eprint={2306.04851},
      archivePrefix={arXiv},
      primaryClass={cond-mat.str-el},
      url={https://arxiv.org/abs/2306.04851}, 
}

@article{zfdt-1k63,
  title = {Variational simulation of quantum phase transitions induced by boundary fields},
  author = {Duriez, Alan and Saguia, Andreia and Sarandy, Marcelo S.},
  journal = {Phys. Rev. B},
  volume = {112},
  issue = {9},
  pages = {094401},
  numpages = {10},
  year = {2025},
  month = {Sep},
  publisher = {American Physical Society},
  doi = {10.1103/zfdt-1k63},
  url = {https://link.aps.org/doi/10.1103/zfdt-1k63}
}

@article{Moreno25,
	author = {Javier Robledo-Moreno and Mario Motta and Holger Haas and Ali Javadi-Abhari and Petar Jurcevic and William Kirby and Simon Martiel and Kunal Sharma and Sandeep Sharma and Tomonori Shirakawa and Iskandar Sitdikov and Rong-Yang Sun and Kevin J. Sung and Maika Takita and Minh C. Tran and Seiji Yunoki and Antonio Mezzacapo},
	doi = {10.1126/sciadv.adu9991},
	eprint = {https://www.science.org/doi/pdf/10.1126/sciadv.adu9991},
	journal = {Science Advances},
	number = {25},
	pages = {eadu9991},
	title = {Chemistry beyond the scale of exact diagonalization on a quantum-centric supercomputer},
	url = {https://www.science.org/doi/abs/10.1126/sciadv.adu9991},
	volume = {11},
	year = {2025}}

@article{Guo2024,
	author = {Guo, Shaojun and Sun, Jinzhao and Qian, Haoran and Gong, Ming and Zhang, Yukun and Chen, Fusheng and Ye, Yangsen and Wu, Yulin and Cao, Sirui and Liu, Kun and Zha, Chen and Ying, Chong and Zhu, Qingling and Huang, He-Liang and Zhao, Youwei and Li, Shaowei and Wang, Shiyu and Yu, Jiale and Fan, Daojin and Wu, Dachao and Su, Hong and Deng, Hui and Rong, Hao and Li, Yuan and Zhang, Kaili and Chung, Tung-Hsun and Liang, Futian and Lin, Jin and Xu, Yu and Sun, Lihua and Guo, Cheng and Li, Na and Huo, Yong-Heng and Peng, Cheng-Zhi and Lu, Chao-Yang and Yuan, Xiao and Zhu, Xiaobo and Pan, Jian-Wei},
	da = {2024/08/01},
	doi = {10.1038/s41567-024-02530-z},
	id = {Guo2024},
	isbn = {1745-2481},
	journal = {Nature Physics},
	number = {8},
	pages = {1240--1246},
	title = {Experimental quantum computational chemistry with optimized unitary coupled cluster ansatz},
	ty = {JOUR},
	url = {https://doi.org/10.1038/s41567-024-02530-z},
	volume = {20},
	year = {2024}}

@misc{guenther2021overviewqcdphasediagram,
      title={Overview of the QCD phase diagram -- Recent progress from the lattice}, 
      author={Jana N. Guenther},
      year={2021},
      eprint={2010.15503},
      archivePrefix={arXiv},
      primaryClass={hep-lat},
      url={https://arxiv.org/abs/2010.15503}, 
}

@misc{kaplan2020scalinglawsneurallanguage,
      title={Scaling Laws for Neural Language Models}, 
      author={Jared Kaplan and Sam McCandlish and Tom Henighan and Tom B. Brown and Benjamin Chess and Rewon Child and Scott Gray and Alec Radford and Jeffrey Wu and Dario Amodei},
      year={2020},
      eprint={2001.08361},
      archivePrefix={arXiv},
      primaryClass={cs.LG},
      url={https://arxiv.org/abs/2001.08361}, 
}

@misc{nakkiran2019deepdoubledescentbigger,
      title={Deep Double Descent: Where Bigger Models and More Data Hurt}, 
      author={Preetum Nakkiran and Gal Kaplun and Yamini Bansal and Tristan Yang and Boaz Barak and Ilya Sutskever},
      year={2019},
      eprint={1912.02292},
      archivePrefix={arXiv},
      primaryClass={cs.LG},
      url={https://arxiv.org/abs/1912.02292}, 
}

@article{Belkin_2019,
   title={Reconciling modern machine-learning practice and the classical bias–variance trade-off},
   volume={116},
   ISSN={1091-6490},
   url={http://dx.doi.org/10.1073/pnas.1903070116},
   DOI={10.1073/pnas.1903070116},
   number={32},
   journal={Proceedings of the National Academy of Sciences},
   publisher={Proceedings of the National Academy of Sciences},
   author={Belkin, Mikhail and Hsu, Daniel and Ma, Siyuan and Mandal, Soumik},
   year={2019},
   month=jul, pages={15849–15854} }

@misc{wang2025performanceguaranteeslightconevariational,
      title={Performance guarantees of light-cone variational quantum algorithms for the maximum cut problem}, 
      author={Xiaoyang Wang and Yuexin Su and Tongyang Li},
      year={2025},
      eprint={2504.12896},
      archivePrefix={arXiv},
      primaryClass={quant-ph},
      url={https://arxiv.org/abs/2504.12896}, 
}

@article{PhysRevB.107.L041109,
  title = {Parametrized quantum circuit for weight-adjustable quantum loop gas},
  author = {Sun, Rong-Yang and Shirakawa, Tomonori and Yunoki, Seiji},
  journal = {Phys. Rev. B},
  volume = {107},
  issue = {4},
  pages = {L041109},
  numpages = {5},
  year = {2023},
  month = {Jan},
  publisher = {American Physical Society},
  doi = {10.1103/PhysRevB.107.L041109},
  url = {https://link.aps.org/doi/10.1103/PhysRevB.107.L041109}
}

@article{Cochran_2025,
	author = {Cochran, T. A. and Jobst, B. and Rosenberg, E. and Lensky, Y. D. and Gyawali, G. and Eassa, N. and Will, M. and Szasz, A. and Abanin, D. and Acharya, R. and Aghababaie Beni, L. and Andersen, T. I. and Ansmann, M. and Arute, F. and Arya, K. and Asfaw, A. and Atalaya, J. and Babbush, R. and Ballard, B. and Bardin, J. C. and Bengtsson, A. and Bilmes, A. and Bourassa, A. and Bovaird, J. and Broughton, M. and Browne, D. A. and Buchea, B. and Buckley, B. B. and Burger, T. and Burkett, B. and Bushnell, N. and Cabrera, A. and Campero, J. and Chang, H. -S. and Chen, Z. and Chiaro, B. and Claes, J. and Cleland, A. Y. and Cogan, J. and Collins, R. and Conner, P. and Courtney, W. and Crook, A. L. and Curtin, B. and Das, S. and Demura, S. and De Lorenzo, L. and Di Paolo, A. and Donohoe, P. and Drozdov, I. and Dunsworth, A. and Eickbusch, A. and Elbag, A. Moshe and Elzouka, M. and Erickson, C. and Ferreira, V. S. and Burgos, L. Flores and Forati, E. and Fowler, A. G. and Foxen, B. and Ganjam, S. and Gasca, R. and Genois, {\'E}. and Giang, W. and Gilboa, D. and Gosula, R. and Grajales Dau, A. and Graumann, D. and Greene, A. and Gross, J. A. and Habegger, S. and Hansen, M. and Harrigan, M. P. and Harrington, S. D. and Heu, P. and Higgott, O. and Hilton, J. and Huang, H. -Y. and Huff, A. and Huggins, W. and Jeffrey, E. and Jiang, Z. and Jones, C. and Joshi, C. and Juhas, P. and Kafri, D. and Kang, H. and Karamlou, A. H. and Kechedzhi, K. and Khaire, T. and Khattar, T. and Khezri, M. and Kim, S. and Klimov, P. and Kobrin, B. and Korotkov, A. and Kostritsa, F. and Kreikebaum, J. and Kurilovich, V. and Landhuis, D. and Lange-Dei, T. and Langley, B. and Lau, K. -M. and Ledford, J. and Lee, K. and Lester, B. and Le Guevel, L. and Li, W. and Lill, A. T. and Livingston, W. and Locharla, A. and Lundahl, D. and Lunt, A. and Madhuk, S. and Maloney, A. and Mandr{\`a}, S. and Martin, L. and Martin, O. and Maxfield, C. and McClean, J. and McEwen, M. and Meeks, S. and Megrant, A. and Miao, K. and Molavi, R. and Molina, S. and Montazeri, S. and Movassagh, R. and Neill, C. and Newman, M. and Nguyen, A. and Nguyen, M. and Ni, C. -H. and Ottosson, K. and Pizzuto, A. and Potter, R. and Pritchard, O. and Quintana, C. and Ramachandran, G. and Reagor, M. and Rhodes, D. and Roberts, G. and Sankaragomathi, K. and Satzinger, K. and Schurkus, H. and Shearn, M. and Shorter, A. and Shutty, N. and Shvarts, V. and Sivak, V. and Small, S. and Smith, W. C. and Springer, S. and Sterling, G. and Suchard, J. and Sztein, A. and Thor, D. and Torunbalci, M. and Vaishnav, A. and Vargas, J. and Vdovichev, S. and Vidal, G. and Vollgraff Heidweiller, C. and Waltman, S. and Wang, S. X. and Ware, B. and White, T. and Wong, K. and Woo, B. W. K. and Xing, C. and Yao, Z. Jamie and Yeh, P. and Ying, B. and Yoo, J. and Yosri, N. and Young, G. and Zalcman, A. and Zhang, Y. and Zhu, N. and Zobrist, N. and Boixo, S. and Kelly, J. and Lucero, E. and Chen, Y. and Smelyanskiy, V. and Neven, H. and Gammon-Smith, A. and Pollmann, F. and Knap, M. and Roushan, P.},
	da = {2025/06/01},
	doi = {10.1038/s41586-025-08999-9},
	id = {Cochran2025},
	isbn = {1476-4687},
	journal = {Nature},
	number = {8067},
	pages = {315--320},
	title = {Visualizing dynamics of charges and strings in (2 + 1)D lattice gauge theories},
	ty = {JOUR},
	url = {https://doi.org/10.1038/s41586-025-08999-9},
	volume = {642},
	year = {2025}}

@article{PhysRevB.106.014306,
  title = {Entanglement spectrum and quantum phase diagram of the long-range XXZ chain},
  author = {Schneider, J. T. and Thomson, S. J. and Sanchez-Palencia, L.},
  journal = {Phys. Rev. B},
  volume = {106},
  issue = {1},
  pages = {014306},
  numpages = {14},
  year = {2022},
  month = {Jul},
  publisher = {American Physical Society},
  doi = {10.1103/PhysRevB.106.014306},
  url = {https://link.aps.org/doi/10.1103/PhysRevB.106.014306}
}

@article{PhysRevLett.109.237208,
  title = {Entanglement Spectrum, Critical Exponents, and Order Parameters in Quantum Spin Chains},
  author = {De Chiara, G. and Lepori, L. and Lewenstein, M. and Sanpera, A.},
  journal = {Phys. Rev. Lett.},
  volume = {109},
  issue = {23},
  pages = {237208},
  numpages = {5},
  year = {2012},
  month = {Dec},
  publisher = {American Physical Society},
  doi = {10.1103/PhysRevLett.109.237208},
  url = {https://link.aps.org/doi/10.1103/PhysRevLett.109.237208}
}

@article{PhysRevB.83.245134,
  title = {Entanglement spectrum and boundary theories with projected entangled-pair states},
  author = {Cirac, J. Ignacio and Poilblanc, Didier and Schuch, Norbert and Verstraete, Frank},
  journal = {Phys. Rev. B},
  volume = {83},
  issue = {24},
  pages = {245134},
  numpages = {12},
  year = {2011},
  month = {Jun},
  publisher = {American Physical Society},
  doi = {10.1103/PhysRevB.83.245134},
  url = {https://link.aps.org/doi/10.1103/PhysRevB.83.245134}
}

@article{PhysRevA.78.032329,
  title = {Entanglement spectrum in one-dimensional systems},
  author = {Calabrese, Pasquale and Lefevre, Alexandre},
  journal = {Phys. Rev. A},
  volume = {78},
  issue = {3},
  pages = {032329},
  numpages = {4},
  year = {2008},
  month = {Sep},
  publisher = {American Physical Society},
  doi = {10.1103/PhysRevA.78.032329},
  url = {https://link.aps.org/doi/10.1103/PhysRevA.78.032329}
}

@article{PhysRevLett.108.227201,
  title = {Boundary-Locality and Perturbative Structure of Entanglement Spectra in Gapped Systems},
  author = {Alba, Vincenzo and Haque, Masudul and L\"auchli, Andreas M.},
  journal = {Phys. Rev. Lett.},
  volume = {108},
  issue = {22},
  pages = {227201},
  numpages = {5},
  year = {2012},
  month = {May},
  publisher = {American Physical Society},
  doi = {10.1103/PhysRevLett.108.227201},
  url = {https://link.aps.org/doi/10.1103/PhysRevLett.108.227201}
}

@article{Fontana_2025,
	author = {Fontana, Enrico and Rudolph, Manuel S. and Duncan, Ross and Rungger, Ivan and C{\^\i}rstoiu, Cristina},
	da = {2025/05/22},
	doi = {10.1038/s41534-024-00955-1},
	id = {Fontana2025},
	isbn = {2056-6387},
	journal = {npj Quantum Information},
	number = {1},
	pages = {84},
	title = {Classical simulations of noisy variational quantum circuits},
	ty = {JOUR},
	url = {https://doi.org/10.1038/s41534-024-00955-1},
	volume = {11},
	year = {2025}}

@misc{rudolph2025paulipropagationcomputationalframework,
      title={Pauli Propagation: A Computational Framework for Simulating Quantum Systems}, 
      author={Manuel S. Rudolph and Tyson Jones and Yanting Teng and Armando Angrisani and Zoë Holmes},
      year={2025},
      eprint={2505.21606},
      archivePrefix={arXiv},
      primaryClass={quant-ph},
      url={https://arxiv.org/abs/2505.21606}, 
}

@misc{lerch2024efficientquantumenhancedclassicalsimulation,
      title={Efficient quantum-enhanced classical simulation for patches of quantum landscapes}, 
      author={Sacha Lerch and Ricard Puig and Manuel S. Rudolph and Armando Angrisani and Tyson Jones and M. Cerezo and Supanut Thanasilp and Zoë Holmes},
      year={2024},
      eprint={2411.19896},
      archivePrefix={arXiv},
      primaryClass={quant-ph},
      url={https://arxiv.org/abs/2411.19896}, 
}

@misc{broers2024exclusiveorencodedalgebraicstructure,
      title={Exclusive-or encoded algebraic structure for efficient quantum dynamics}, 
      author={Lukas Broers and Ludwig Mathey},
      year={2024},
      eprint={2404.09312},
      archivePrefix={arXiv},
      primaryClass={cond-mat.other},
      url={https://arxiv.org/abs/2404.09312}, 
}

@article{doi:10.1126/science.ado6285,
	author = {Andrew D. King and Alberto Nocera and Marek M. Rams and Jacek Dziarmaga and Roeland Wiersema and William Bernoudy and Jack Raymond and Nitin Kaushal and Niclas Heinsdorf and Richard Harris and Kelly Boothby and Fabio Altomare and Mohsen Asad and Andrew J. Berkley and Martin Boschnak and Kevin Chern and Holly Christiani and Samantha Cibere and Jake Connor and Martin H. Dehn and Rahul Deshpande and Sara Ejtemaee and Pau Farre and Kelsey Hamer and Emile Hoskinson and Shuiyuan Huang and Mark W. Johnson and Samuel Kortas and Eric Ladizinsky and Trevor Lanting and Tony Lai and Ryan Li and Allison J. R. MacDonald and Gaelen Marsden and Catherine C. McGeoch and Reza Molavi and Travis Oh and Richard Neufeld and Mana Norouzpour and Joel Pasvolsky and Patrick Poitras and Gabriel Poulin-Lamarre and Thomas Prescott and Mauricio Reis and Chris Rich and Mohammad Samani and Benjamin Sheldan and Anatoly Smirnov and Edward Sterpka and Berta Trullas Clavera and Nicholas Tsai and Mark Volkmann and Alexander M. Whiticar and Jed D. Whittaker and Warren Wilkinson and Jason Yao and T.J. Yi and Anders W. Sandvik and Gonzalo Alvarez and Roger G. Melko and Juan Carrasquilla and Marcel Franz and Mohammad H. Amin},
	doi = {10.1126/science.ado6285},
	eprint = {https://www.science.org/doi/pdf/10.1126/science.ado6285},
	journal = {Science},
	number = {0},
	pages = {eado6285},
	title = {Beyond-classical computation in quantum simulation},
	url = {https://www.science.org/doi/abs/10.1126/science.ado6285},
	volume = {0},
        year = {2025}}

@article{Abbas_2024,
	author = {Abbas, Amira and Ambainis, Andris and Augustino, Brandon and B{\"a}rtschi, Andreas and Buhrman, Harry and Coffrin, Carleton and Cortiana, Giorgio and Dunjko, Vedran and Egger, Daniel J. and Elmegreen, Bruce G. and Franco, Nicola and Fratini, Filippo and Fuller, Bryce and Gacon, Julien and Gonciulea, Constantin and Gribling, Sander and Gupta, Swati and Hadfield, Stuart and Heese, Raoul and Kircher, Gerhard and Kleinert, Thomas and Koch, Thorsten and Korpas, Georgios and Lenk, Steve and Marecek, Jakub and Markov, Vanio and Mazzola, Guglielmo and Mensa, Stefano and Mohseni, Naeimeh and Nannicini, Giacomo and O'Meara, Corey and Tapia, Elena Pe{\~n}a and Pokutta, Sebastian and Proissl, Manuel and Rebentrost, Patrick and Sahin, Emre and Symons, Benjamin C. B. and Tornow, Sabine and Valls, V{\'\i}ctor and Woerner, Stefan and Wolf-Bauwens, Mira L. and Yard, Jon and Yarkoni, Sheir and Zechiel, Dirk and Zhuk, Sergiy and Zoufal, Christa},
	da = {2024/12/01},
	doi = {10.1038/s42254-024-00770-9},
	id = {Abbas2024},
	isbn = {2522-5820},
	journal = {Nature Reviews Physics},
	number = {12},
	pages = {718--735},
	title = {Challenges and opportunities in quantum optimization},
	ty = {JOUR},
	url = {https://doi.org/10.1038/s42254-024-00770-9},
	volume = {6},
	year = {2024}}

@article{PhysRevLett.128.010607,
  title = {Sequential Generation of Projected Entangled-Pair States},
  author = {Wei, Zhi-Yuan and Malz, Daniel and Cirac, J. Ignacio},
  journal = {Phys. Rev. Lett.},
  volume = {128},
  issue = {1},
  pages = {010607},
  numpages = {6},
  year = {2022},
  month = {Jan},
  publisher = {American Physical Society},
  doi = {10.1103/PhysRevLett.128.010607},
  url = {https://link.aps.org/doi/10.1103/PhysRevLett.128.010607}
}

@inproceedings{10.1145/3564246.3585234,
author = {Aharonov, Dorit and Gao, Xun and Landau, Zeph and Liu, Yunchao and Vazirani, Umesh},
title = {A Polynomial-Time Classical Algorithm for Noisy Random Circuit Sampling},
year = {2023},
isbn = {9781450399135},
publisher = {Association for Computing Machinery},
address = {New York, NY, USA},
url = {https://doi.org/10.1145/3564246.3585234},
doi = {10.1145/3564246.3585234},
booktitle = {Proceedings of the 55th Annual ACM Symposium on Theory of Computing},
pages = {945–957},
numpages = {13},
location = {Orlando, FL, USA},
series = {STOC 2023}
}

@article{Di_Meglio_2024,
   title={Quantum Computing for High-Energy Physics: State of the Art and Challenges},
   volume={5},
   ISSN={2691-3399},
   url={http://dx.doi.org/10.1103/PRXQuantum.5.037001},
   DOI={10.1103/prxquantum.5.037001},
   number={3},
   journal={PRX Quantum},
   publisher={American Physical Society (APS)},
   author={Di Meglio, Alberto and Jansen, Karl and Tavernelli, Ivano and Alexandrou, Constantia and Arunachalam, Srinivasan and Bauer, Christian W. and Borras, Kerstin and Carrazza, Stefano and Crippa, Arianna and Croft, Vincent and de Putter, Roland and Delgado, Andrea and Dunjko, Vedran and Egger, Daniel J. and Fernández-Combarro, Elias and Fuchs, Elina and Funcke, Lena and González-Cuadra, Daniel and Grossi, Michele and Halimeh, Jad C. and Holmes, Zoë and Kühn, Stefan and Lacroix, Denis and Lewis, Randy and Lucchesi, Donatella and Martinez, Miriam Lucio and Meloni, Federico and Mezzacapo, Antonio and Montangero, Simone and Nagano, Lento and Pascuzzi, Vincent R. and Radescu, Voica and Ortega, Enrique Rico and Roggero, Alessandro and Schuhmacher, Julian and Seixas, Joao and Silvi, Pietro and Spentzouris, Panagiotis and Tacchino, Francesco and Temme, Kristan and Terashi, Koji and Tura, Jordi and Tüysüz, Cenk and Vallecorsa, Sofia and Wiese, Uwe-Jens and Yoo, Shinjae and Zhang, Jinglei},
   year={2024},
   month=aug }

@article{PhysRevB.111.045139,
  title = {Yang-Lee zeros in quantum phase transitions: An entanglement perspective},
  author = {Li, Hongchao},
  journal = {Phys. Rev. B},
  volume = {111},
  issue = {4},
  pages = {045139},
  numpages = {12},
  year = {2025},
  month = {Jan},
  publisher = {American Physical Society},
  doi = {10.1103/PhysRevB.111.045139},
  url = {https://link.aps.org/doi/10.1103/PhysRevB.111.045139}
}

@article{PhysRev.87.410,
  title = {Statistical Theory of Equations of State and Phase Transitions. II. Lattice Gas and Ising Model},
  author = {Lee, T. D. and Yang, C. N.},
  journal = {Phys. Rev.},
  volume = {87},
  issue = {3},
  pages = {410--419},
  numpages = {0},
  year = {1952},
  month = {Aug},
  publisher = {American Physical Society},
  doi = {10.1103/PhysRev.87.410},
  url = {https://link.aps.org/doi/10.1103/PhysRev.87.410}
}

@article{PhysRev.87.404,
  title = {Statistical Theory of Equations of State and Phase Transitions. I. Theory of Condensation},
  author = {Yang, C. N. and Lee, T. D.},
  journal = {Phys. Rev.},
  volume = {87},
  issue = {3},
  pages = {404--409},
  numpages = {0},
  year = {1952},
  month = {Aug},
  publisher = {American Physical Society},
  doi = {10.1103/PhysRev.87.404},
  url = {https://link.aps.org/doi/10.1103/PhysRev.87.404}
}

@article{PhysRevLett.133.120603,
  title = {Simulating Noisy Variational Quantum Algorithms: A Polynomial Approach},
  author = {Shao, Yuguo and Wei, Fuchuan and Cheng, Song and Liu, Zhengwei},
  journal = {Phys. Rev. Lett.},
  volume = {133},
  issue = {12},
  pages = {120603},
  numpages = {7},
  year = {2024},
  month = {Sep},
  publisher = {American Physical Society},
  doi = {10.1103/PhysRevLett.133.120603},
  url = {https://link.aps.org/doi/10.1103/PhysRevLett.133.120603}
}

@article{PhysRevA.111.032612,
  title = {Imaginary Hamiltonian variational Ansatz for combinatorial optimization problems},
  author = {Wang, Xiaoyang and Chai, Yahui and Feng, Xu and Guo, Yibin and Jansen, Karl and T\"uys\"uz, Cenk},
  journal = {Phys. Rev. A},
  volume = {111},
  issue = {3},
  pages = {032612},
  numpages = {20},
  year = {2025},
  month = {Mar},
  publisher = {American Physical Society},
  doi = {10.1103/PhysRevA.111.032612},
  url = {https://link.aps.org/doi/10.1103/PhysRevA.111.032612}
}

@article{Cerezo_vqa,
	author = {Cerezo, M. and Arrasmith, Andrew and Babbush, Ryan and Benjamin, Simon C. and Endo, Suguru and Fujii, Keisuke and McClean, Jarrod R. and Mitarai, Kosuke and Yuan, Xiao and Cincio, Lukasz and Coles, Patrick J.},
	da = {2021/09/01},
	doi = {10.1038/s42254-021-00348-9},
	id = {Cerezo2021},
	isbn = {2522-5820},
	journal = {Nature Reviews Physics},
	number = {9},
	pages = {625--644},
	title = {Variational quantum algorithms},
	ty = {JOUR},
	url = {https://doi.org/10.1038/s42254-021-00348-9},
	volume = {3},
	year = {2021}}

@article{PhysRevA.103.042612,
  title = {MaxCut quantum approximate optimization algorithm performance guarantees for $p>1$},
  author = {Wurtz, Jonathan and Love, Peter},
  journal = {Phys. Rev. A},
  volume = {103},
  issue = {4},
  pages = {042612},
  numpages = {15},
  year = {2021},
  month = {Apr},
  publisher = {American Physical Society},
  doi = {10.1103/PhysRevA.103.042612},
  url = {https://link.aps.org/doi/10.1103/PhysRevA.103.042612}
}

@article{Bravyi_2020,
  title = {Obstacles to Variational Quantum Optimization from Symmetry Protection},
  author = {Bravyi, S. and Kliesch, A. and Koenig, R. and Tang, E.},
  journal = {Phys. Rev. Lett.},
  volume = {125},
  issue = {26},
  pages = {260505},
  numpages = {6},
  year = {2020},
  month = {Dec},
  publisher = {American Physical Society},
  doi = {10.1103/PhysRevLett.125.260505},
  url = {https://link.aps.org/doi/10.1103/PhysRevLett.125.260505}
}

@article{TILLY2022,
	author = {Jules Tilly and Hongxiang Chen and Shuxiang Cao and Dario Picozzi and Kanav Setia and Ying Li and Edward Grant and Leonard Wossnig and Ivan Rungger and George H. Booth and Jonathan Tennyson},
	journal = {Physics Reports},
	pages = {1-128},
	title = {The Variational Quantum Eigensolver: A review of methods and best practices},
	volume = {986},
	year = {2022}
}

@article{Kandala2017,
	author = {Kandala, A. and Mezzacapo, A. and Temme, K. and Takita, M. and Brink, M. and Chow, Jerry M. and Gambetta, J. M.},
	da = {2017/09/01},
	doi = {10.1038/nature23879},
	id = {Kandala2017},
	isbn = {1476-4687},
	journal = {Nature},
	number = {7671},
	pages = {242--246},
	title = {Hardware-efficient variational quantum eigensolver for small molecules and quantum magnets},
	ty = {JOUR},
	url = {https://doi.org/10.1038/nature23879},
	volume = {549},
	year = {2017}}

@article{Chen_2024,
  title = {Sequential quantum circuits as maps between gapped phases},
  author = {Chen, Xie and Dua, Arpit and Hermele, Michael and Stephen, David T. and Tantivasadakarn, Nathanan and Vanhove, Robijn and Zhao, Jing-Yu},
  journal = {Phys. Rev. B},
  volume = {109},
  issue = {7},
  pages = {075116},
  numpages = {21},
  year = {2024},
  month = {Feb},
  publisher = {American Physical Society},
  doi = {10.1103/PhysRevB.109.075116},
  url = {https://link.aps.org/doi/10.1103/PhysRevB.109.075116}
}

@article{Wecker_2015,
  title = {Solving strongly correlated electron models on a quantum computer},
  author = {Wecker, D. and Hastings, M. B. and Wiebe, N. and Clark, B. K. and Nayak, C. and Troyer, M.},
  journal = {Phys. Rev. A},
  volume = {92},
  issue = {6},
  pages = {062318},
  numpages = {24},
  year = {2015},
  month = {Dec},
  publisher = {American Physical Society},
  doi = {10.1103/PhysRevA.92.062318},
  url = {https://link.aps.org/doi/10.1103/PhysRevA.92.062318}
}

@article{Wang23,
  title = {Critical behavior of the Ising model by preparing the thermal state on a quantum computer},
  author = {Wang, Xiaoyang and Feng, Xu and Hartung, Tobias and Jansen, Karl and Stornati, Paolo},
  journal = {Phys. Rev. A},
  volume = {108},
  issue = {2},
  pages = {022612},
  numpages = {12},
  year = {2023},
  month = {Aug},
  publisher = {American Physical Society},
  doi = {10.1103/PhysRevA.108.022612},
  url = {https://link.aps.org/doi/10.1103/PhysRevA.108.022612}
}

@article{WOLFF199093,
title = {Critical slowing down},
journal = {Nuclear Physics B - Proceedings Supplements},
volume = {17},
pages = {93-102},
year = {1990},
issn = {0920-5632},
doi = {https://doi.org/10.1016/0920-5632(90)90224-I},
url = {https://www.sciencedirect.com/science/article/pii/092056329090224I},
author = {Ulli Wolff}
}

@article{Peruzzo:2014,
  doi = {10.1038/ncomms5213},
  url = {https://doi.org/10.1038/ncomms5213},
  title = {A variational eigenvalue solver on a photonic quantum processor},
  author = {Peruzzo, A. and McClean, J. and Shadbolt, P. and Yung, M. and Zhou, X. and Love, P. J. and Aspuru-Guzik, A. and O’Brien, J. L.},
  journal = {{Nat. Comm.}},
  volume = {5},
  pages = {4213},
  year = {2014}
}

@misc{lin2022lecturenotesquantumalgorithms,
      title={Lecture Notes on Quantum Algorithms for Scientific Computation}, 
      author={Lin Lin},
      year={2022},
      eprint={2201.08309},
      archivePrefix={arXiv},
      primaryClass={quant-ph},
        page={18},
      url={https://arxiv.org/abs/2201.08309}, 
}

@article{McArdle_20,
  title = {Quantum computational chemistry},
  author = {McArdle, S. and Endo, S. and Aspuru-Guzik, A. and Benjamin, S. C. and Yuan, X.},
  journal = {Rev. Mod. Phys.},
  volume = {92},
  issue = {1},
  pages = {015003},
  numpages = {51},
  year = {2020},
  month = {Mar},
  publisher = {American Physical Society},
  doi = {10.1103/RevModPhys.92.015003},
  url = {https://link.aps.org/doi/10.1103/RevModPhys.92.015003}
}

@article{Angrisani2025,
  title = {Classically Estimating Observables of Noiseless Quantum Circuits},
  author = {Angrisani, Armando and Schmidhuber, Alexander and Rudolph, Manuel S. and Cerezo, M. and Holmes, Zo\"e and Huang, Hsin-Yuan},
  journal = {Phys. Rev. Lett.},
  volume = {135},
  issue = {17},
  pages = {170602},
  numpages = {10},
  year = {2025},
  month = {Oct},
  publisher = {American Physical Society},
  doi = {10.1103/lh6x-7rc3},
  url = {https://link.aps.org/doi/10.1103/lh6x-7rc3}
}

@article{Cerezo2025,
	author = {Cerezo, M. and Larocca, Martin and Garc{\'\i}a-Mart{\'\i}n, Diego and Diaz, N. L. and Braccia, Paolo and Fontana, Enrico and Rudolph, Manuel S. and Bermejo, Pablo and Ijaz, Aroosa and Thanasilp, Supanut and Anschuetz, Eric R. and Holmes, Zo{\"e}},
	da = {2025/08/25},
	doi = {10.1038/s41467-025-63099-6},
	id = {Cerezo2025},
	isbn = {2041-1723},
	journal = {Nature Communications},
	number = {1},
	pages = {7907},
	title = {Does provable absence of barren plateaus imply classical simulability?},
	ty = {JOUR},
	url = {https://doi.org/10.1038/s41467-025-63099-6},
	volume = {16},
	year = {2025}}

@article{farhi2014quantum,
  title={A Quantum Approximate Optimization Algorithm}, 
  author={Farhi, E. and Goldstone, J. and Gutmann, S.},
   journal={arXiv preprint arXiv:1411.4028},
   url={https://arxiv.org/abs/1411.4028},
   year={2014}
}

@article{Wu_2019,
  title = {Variational Thermal Quantum Simulation via Thermofield Double States},
  author = {Wu, Jingxiang and Hsieh, Timothy H.},
  journal = {Phys. Rev. Lett.},
  volume = {123},
  issue = {22},
  pages = {220502},
  numpages = {6},
  year = {2019},
  month = {Nov},
  publisher = {American Physical Society},
  doi = {10.1103/PhysRevLett.123.220502},
  url = {https://link.aps.org/doi/10.1103/PhysRevLett.123.220502}
}

@article{zhang_absence_2024,
	title = {Absence of {Barren} {Plateaus} in {Finite} {Local}-{Depth} {Circuits} with {Long}-{Range} {Entanglement}},
	volume = {132},
	url = {https://link.aps.org/doi/10.1103/PhysRevLett.132.150603},
	doi = {10.1103/PhysRevLett.132.150603},
	number = {15},
	urldate = {2024-05-05},
	journal = {Physical Review Letters},
	author = {Zhang, Hao-Kai and Liu, Shuo and Zhang, Shi-Xin},
	month = apr,
	year = {2024},
	note = {Publisher: American Physical Society},
	pages = {150603},
}

@article{ITensor,
	title={{The ITensor Software Library for Tensor Network Calculations}},
	author={Matthew Fishman and Steven R. White and E. Miles Stoudenmire},
	journal={SciPost Phys. Codebases},
	pages={4},
	year={2022},
	publisher={SciPost},
	doi={10.21468/SciPostPhysCodeb.4},
	url={https://scipost.org/10.21468/SciPostPhysCodeb.4},
}

@misc{orqa,
      title={Scalable Simulation of Quantum Many-Body Dynamics with Or-Represented Quantum Algebra}, 
      author={Lukas Broers and Rong-Yang Sun and Seiji Yunoki},
      year={2025},
      eprint={2506.13241},
      archivePrefix={arXiv},
      primaryClass={quant-ph},
      url={https://arxiv.org/abs/2506.13241}, 
}

@book{huang2008statistical,
  title={Statistical mechanics},
  author={Huang, Kerson},
  year={2008},
  publisher={John Wiley \& Sons}
}

@book{brankov2000theory,
  title={Theory of critical phenomena in finite-size systems: scaling and quantum effects},
  author={Brankov, Jordan G and Danchev, Daniel M and Tonchev, Nicholai S},
  year={2000},
  publisher={World Scientific}
}

@book{Sachdev2011,
  place={Cambridge}, 
  edition={2}, 
  title={Quantum Phase Transitions}, 
  publisher={Cambridge University Press}, 
  author={Sachdev, Subir}, 
  year={2011}
}

@book{negele2018quantum,
  title={Quantum many-particle systems},
  author={Negele, John W and Orland, Henri},
  year={2018},
  publisher={CRC Press}
}

@article{Lepori2013,
  title = {Scaling of the entanglement spectrum near quantum phase transitions},
  author = {Lepori, L. and De Chiara, G. and Sanpera, A.},
  journal = {Phys. Rev. B},
  volume = {87},
  issue = {23},
  pages = {235107},
  numpages = {10},
  year = {2013},
  month = {Jun},
  publisher = {American Physical Society},
  doi = {10.1103/PhysRevB.87.235107},
  url = {https://link.aps.org/doi/10.1103/PhysRevB.87.235107}
}
\bibliographystyle{IEEEtran} 

\appendix
\begin{figure*}
    \centering
    \includegraphics[width=0.95\textwidth]{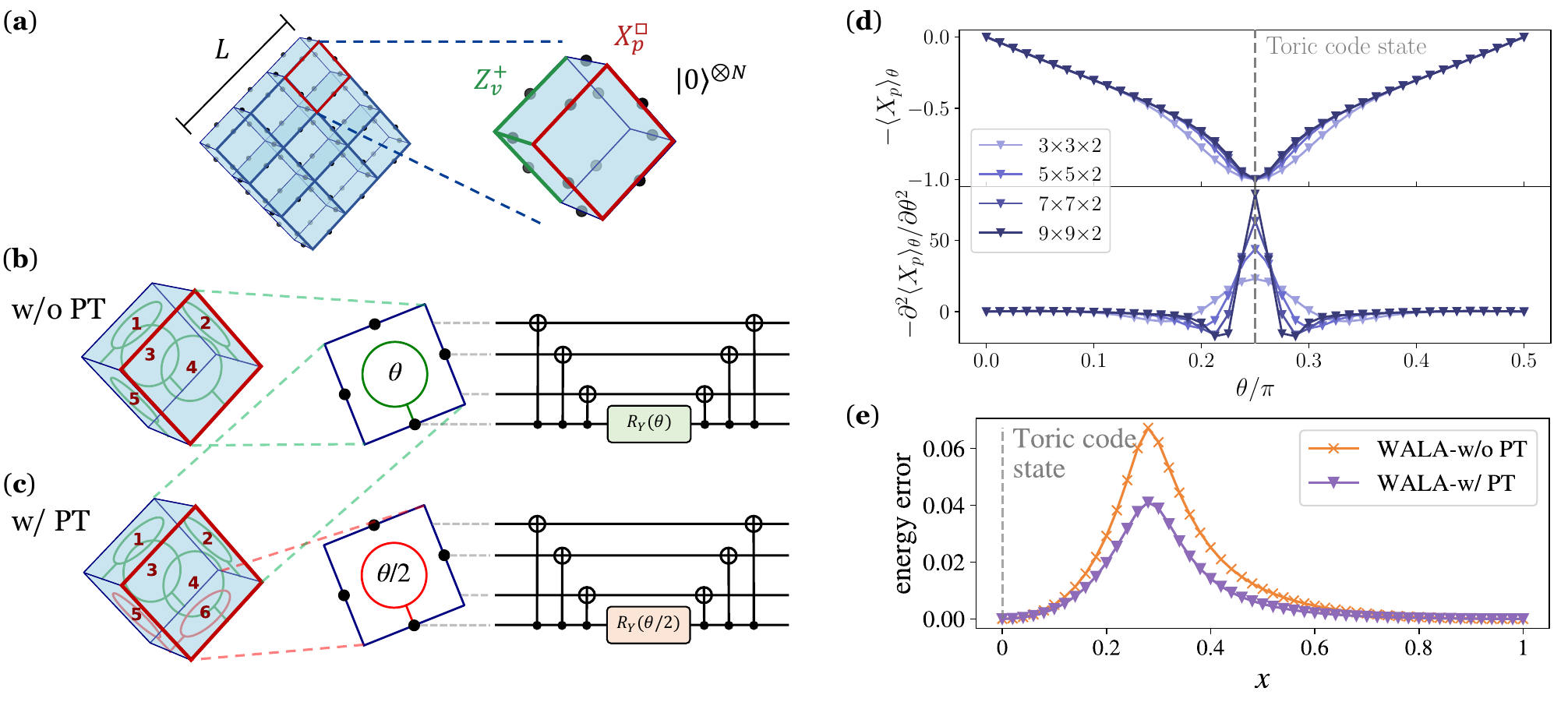}
    \caption{(\textbf{a}) Illustration of an $L\times L\times 2$ lattice of the $\ZZ_2$ gauge theory, with the qubits denoted by black dots initialized to the state $\ket{0}^{\otimes N}$. (\textbf{b},\textbf{c}) Weight-adjustable loop ansatz (WALA) on a cube without and with phase transition (PT). The plaquette rotation $e^{-\im\theta XXXY}$ is applied on each surface of the cube sequentially by the order $\mathbf{1}$-$\mathbf{5}$ and $\mathbf{1}$-$\mathbf{6}$ in (\textbf{b}) and (\textbf{c}), respectively. (\textbf{d}) Plaquette expectation and its 2nd derivative as a function of the WALA angle $\theta$. $\theta=\pi/4$ corresponds to the exact toric code state, which is also the ground state of $H_{\ZZ_2}(0)$. (\textbf{e}) Relative error of the ground state energy $\left|(E_{\text{VQE}}-E_{\text{ED}}\right)/E_{\text{ED}}|$ obtained by VQE and exact diagonalization (ED) as a function of the coupling strength $x$ in $H_{\ZZ_2}(x)$, using WALA with and without phase transition as heuristic ans\"atze.}
    \label{fig:toric}
\end{figure*}

\section*{End Note}

\noindent \textit{Appendix A: Phase transition in the weight-adjustable loop ansatz}~---~The weight-adjustable loop ansatz (WALA) is proposed to prepare the ground state of $\mathbb{Z}_2$ gauge theory with the Hamiltonian
\begin{align}
    H_{\ZZ_2}(x)\equiv -(1-x)\sum_{\text{plaquette}} X_{p}^{\square}-x\sum_{\text{links}} Z_l-\sum_{\text{vertex}} Z_{v}^{+},\nonumber
\end{align}
where the plaquette operators $X_p^{\square} = \prod_{i\in p}X_i$ are products of Pauli-$X$ operators on links of a plaquette, the vertex operators $Z_{v}^{+}=\prod_{i\in v}Z_i$ are products of Pauli-$Z$ operators emanating from a vertex $v$. $x\in[0,1]$ tunes the strength of an external magnetic field along the $z$ direction. The ground state of the two limiting cases $H_{\ZZ_2}(0)$ and $H_{\ZZ_2}(1)$ are the toric code state~\cite{KITAEV20032} and the product state $\ket{0}^{\otimes N}$, respectively.

Consider WALA on a three-dimensional $L\times L\times 2$ lattice consisting of $(L-1)^2$ cubes. For one cube shown in the right panel of Fig.~\ref{fig:toric}(\textbf{a}), each link is associated with one qubit initialized in the $\ket{0}$ state. For each plaquette of the cube, a parametrized four-qubit gate $e^{-i\theta XXXY}$ or $e^{-i\theta XXXY/2}$ is applied, denoted respectively by an oriented green or red circle in Fig.~\ref{fig:toric}(\textbf{b}) and (\textbf{c}). These four-qubit gates are applied sequentially following the order $\mathbf{1}$-$\mathbf{5}$ and $\mathbf{1}$-$\mathbf{6}$ for the two WALAs in Fig.~\ref{fig:toric}(\textbf{b}) and (\textbf{c}), labeled by without (w/o) and with (w/) phase transition (PT) respectively. These two WALAs are different on the bottom two plaquettes of each cube, but both exactly prepare the ground state of $H_{\ZZ_2}(0)$ with $\theta=\pi/4$ and the ground state of $H_{\ZZ_2}(1)$ with $\theta=0$. The WALA in Fig.~\ref{fig:toric}(\textbf{b}) is identical to the original WALA proposed in literature~\cite{PhysRevB.107.L041109,Cochran_2025}, which has no non-analyticity and no phase transition~\cite{supp}.

In contrast, our numerical results suggest that the modified WALA in Fig.~\ref{fig:toric}(\textbf{c}) exhibits non-analyticity and phase transition in the infinite volume limit $L\to \infty$. Fig.~\ref{fig:toric}(\textbf{d}) shows the numerical results of the plaquette expectation $\langle X_p^{\square}\rangle_{\theta}$ of the modified WALA as a function of the WALA angle $\theta$. On lattices from $3\times 3\times 2$ to $9\times 9\times 2$, the numerical results indicate a cusp of the expectation $-\langle X_p^{\square}\rangle_{\theta}$ and a divergent 2nd derivative $-\partial^2\langle X_p^{\square}\rangle_{\theta}/\partial\theta^2$ at $\theta=\pi/4$, indicating that the modified WALA has a phase transition at $\theta_c=\pi/4$.

Compared with the WALA without phase transition, the modified WALA with phase transition as a heuristic VQE ansatz has improved state-preparation accuracy. To see this, on a single cube with $12$ qubits, the WALA angle $\theta$ in both WALA with and without phase transition are optimized to the minimum energy $\langle H_{\ZZ_2}(x)\rangle _{\theta}$ for each $x\in[0,1]$, and the corresponding energy $E_{\text{VQE}}\equiv\langle H_{\ZZ_2}(x)\rangle _{\theta^{*}}$ is obtained. The resulting energy relative error is plotted in Fig.~\ref{fig:toric}(\textbf{e}). We see that WALA with phase transition consistently achieves lower relative error for all $x\in[0,1]$, with the worst-case relative error reduced from 0.067 to 0.041, corresponding to a 38.8\% reduction in error. Similar error reduction is also observed for the Ising whip circuit~\cite{supp}. These numerical results demonstrate the generality of phase transition phenomena in sequential PQCs and the capability of PQCs with phase transition to improve state-preparation accuracy. 
\\\\
\noindent \textit{Appendix B: Symmetry operator}~---~In this section, we prove that the symmetry operator $\hat{T}=\prod_{i\in\BB}Z_i$ performs the transformation $T:\theta\rightarrow \theta+\pi$ to the PQC state $\ket{\phi_{\text{w}}(\theta)}$ given by Eq.~\eqref{eq:symmetry-transformation}. 

First, $T$ transforms $ZY$ rotations in the bulk of the lattice and at the boundary of the lattice as
\begin{equation}
    \begin{aligned}
    e^{-\ii\theta ZY/2}&\rightarrow e^{-\ii(\theta+\pi) ZY/2}=-\ii ZY e^{-\ii\theta ZY/2},\\
    e^{-\ii\theta ZY}&\rightarrow e^{-\ii(\theta+\pi) ZY}=-e^{-\ii\theta ZY}.
    \label{eq:ZYgate-transformation}
\end{aligned}
\end{equation}
We see that only the $ZY$ rotation in the bulk of the lattice gives an additional $ZY$ operator. Then, for each bulk site, the product of these $ZY$ operators leads to an identity
\begin{center}
    \includegraphics[width=0.4\textwidth]{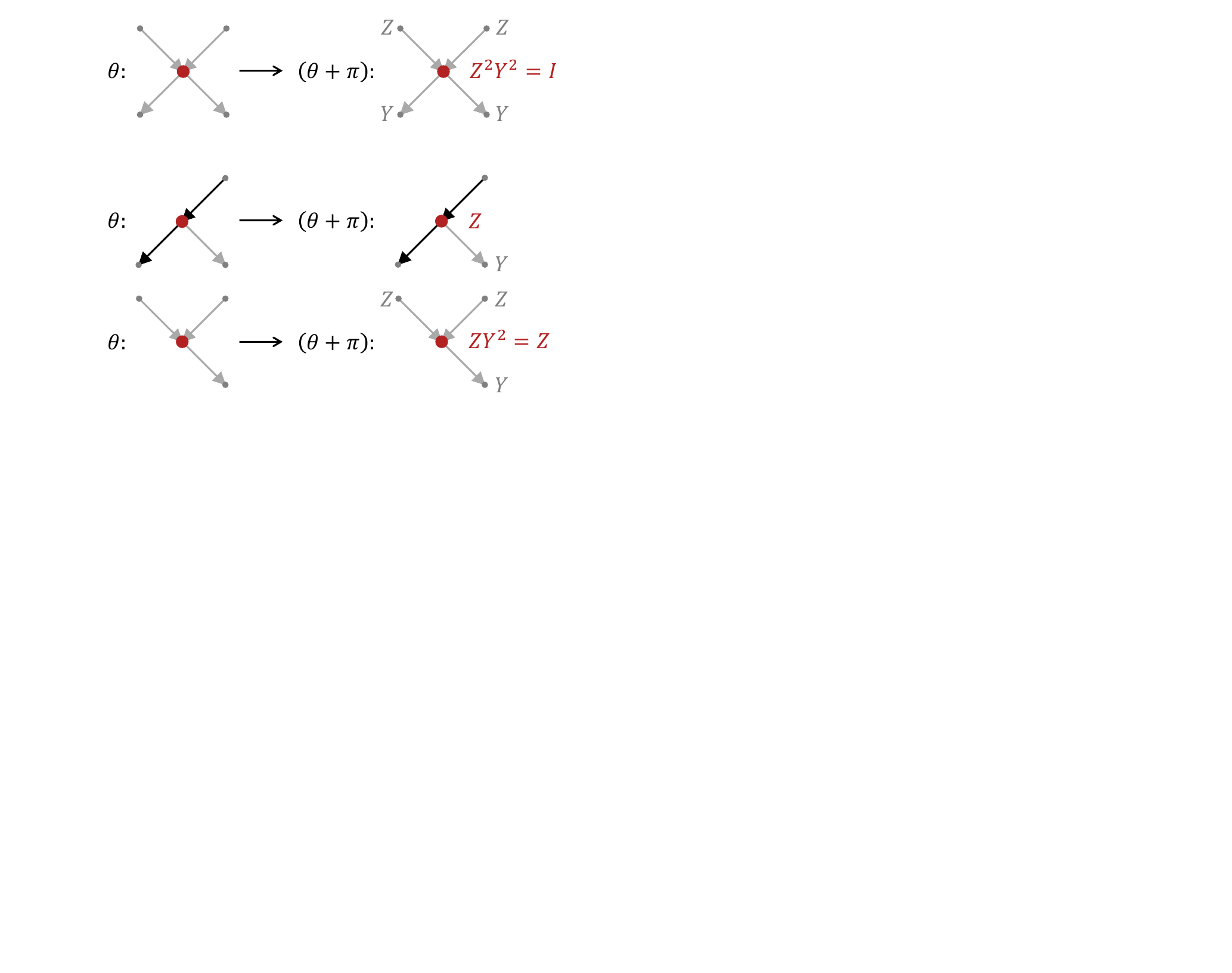}
\end{center}
where each gray arrow denotes a bulk $ZY$ rotation $e^{-\ii \theta ZY/2}$. For the site at the lower boundary $\BB'$ or at the upper boundary $\BB/\BB'$, on the other hand, the product leads to a Pauli-$Z$ operator
\begin{center}
    \includegraphics[width=0.4\textwidth]{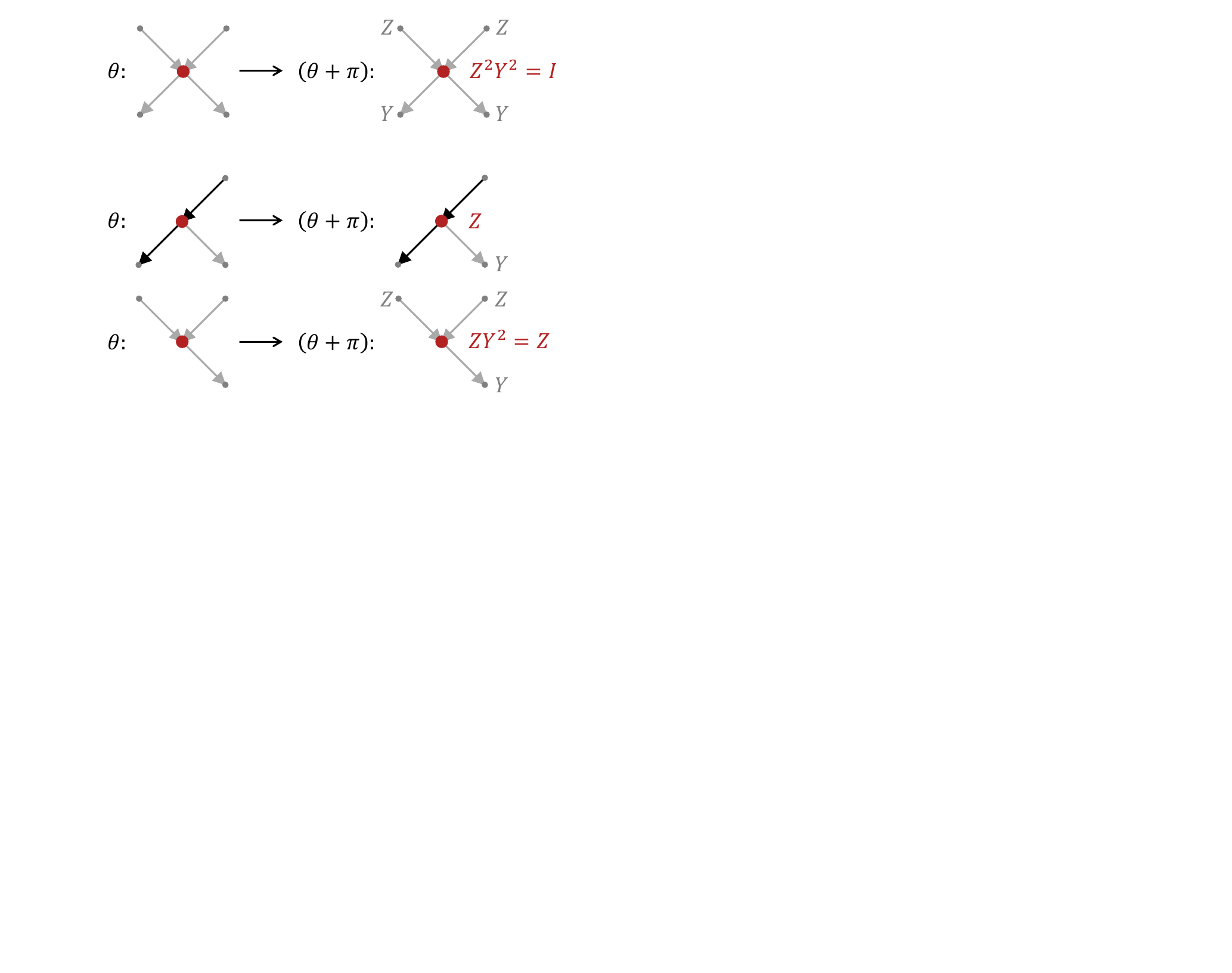}
\end{center}
where each black arrow denotes a boundary $ZY$ rotation $e^{-\ii \theta ZY}$. Therefore, the product of all Pauli-$Z$ operators at the boundary $\BB$ gives the symmetry operator $\hat{T}$. The global phases in Eq.~\eqref{eq:ZYgate-transformation} lead to the global phase $e^{i\phi}$ in Eq.~\eqref{eq:symmetry-transformation}, which is irrelevant and not considered during the derivation above.

\onecolumngrid

\newpage

\

\newpage

\setcounter{page}{1}
\renewcommand{\thefigure}{S\arabic{figure}}
\renewcommand{\theequation}{S\arabic{equation}}


\titleformat{\section}{\normalfont\large\bfseries}{Supplementary Note \thesection:}{1em}{}

\section{Ising whip circuit on arbitrary dimensions}\label{sec:Ising whip circuit on arbitrary dimensions}

\begin{figure}[b]
    \centering
    \includegraphics[width=0.6\textwidth]{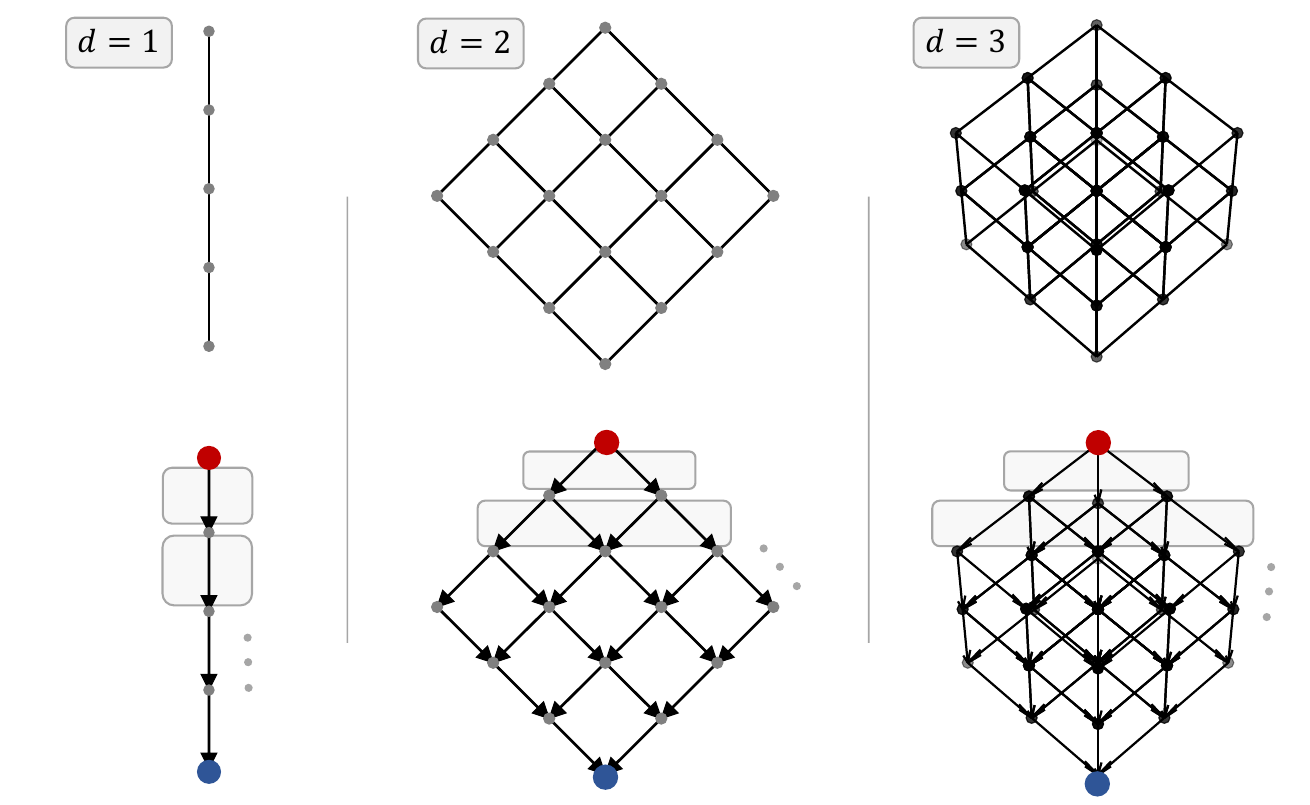}
    \caption{Ising lattices with dimension $d=1,2,3$ and the corresponding directed acyclic graph (DAG). Each DAG has a single source and a single sink labeled by the red and blue dots, respectively. A gray rectangle boxes one layer of the $ZY$ rotations.}
    \label{fig:Ising-lattice}
\end{figure}

In this note, we introduce the whip circuit to prepare the ground state of the Ising model in arbitrary dimensions. The Ising model has the Hamiltonian 
\begin{align}
    H_J= J\sum_{\langle i,j\rangle }Z_iZ_j,
    \label{eq:Ising-Hamiltonian}
\end{align}
where the cases $J=1$ and $J=-1$ correspond to anti-ferromagnetic and ferromagnetic spin systems, respectively. We consider a $d$-dimensional square lattices with open boundary conditions (OBC) and a side length $L$.

In the main text, the whip circuit is constructed for the 2-d Ising model. For an arbitrary $d$-dimensional lattice, the whip circuit is uniquely determined by a directed acyclic graph (DAG) with a single source and a single sink adapted from the $d$-dimensional lattice, as illustrated in Fig.~\ref{fig:Ising-lattice}, where each directed edge denotes a parametrized $ZY$ rotation. The arrow tail and head indicate the qubits that are acted by the $Z$ and $Y$ operators, respectively. Therefore, the Ising whip state reads
\begin{align}
    \ket{\phi_{\text{w},d}(\theta)} =\prod_{\langle i,j\rangle} e^{-\im K_{d,j}\theta Z_i Y_j/2} \ket{+}^{\otimes N},
\end{align}
where $N=L^{d}$ is the total number of qubits, $K_{d,j}\equiv d/\deg^-_j$ is a prefactor depending on $\deg^-_j$~---~the in-degree of the node $j$ in the DAG. The order of applying the $ZY$ rotations, i.e, the order of the product $\prod_{\langle i,j\rangle}$, follows a sequential order from top to bottom of the DAG: all $ZY$ rotations are divided into \textit{layers} boxed by the gray rectangles in Fig.~\ref{fig:Ising-lattice}, and $ZY$ rotations in the same layer can be applied in an arbitrary order, due to the commutation relations $[e^{-\im K_{d,j}\theta Z_i Y_j}, e^{-\im K_{d,j}\theta Z_k Y_j}]=[e^{-\im K_{d,j}\theta Z_i Y_j}, e^{-\im K_k\theta Z_i Y_k}]=0$ for arbitrary $i\neq j\neq k$. The 2-d Ising whip state considered in the main text corresponds to $\ket{\phi_{\text{w}}(\theta)}=\ket{\phi_{\text{w},2}(\theta)}$.

The Ising whip circuit is not limited to the square lattice. As long as the undirected lattice graph is biconnected~---~the graph remains connected if any one node of the graph were to be removed~---~ the DAG with a single source and a single sink can be efficiently obtained, and the whip circuit can be constructed~\cite{wang2025performanceguaranteeslightconevariational}. Thus, the Ising whip circuit can be generalized to triangular lattices, Kagome lattices, and even random lattices used in condensed matter and statistical physics.

The Ising whip circuit exactly prepares the ground state of the Ising Hamiltonian $H_J$. The sequential action of the $ZY$ rotations acting on the initial state $\ket{+}^{\otimes N}$ gives
\begin{equation}
    \begin{aligned}
    \ket{\phi_{\text{w},d}(\theta=\theta_d)} &= \frac{1}{\sqrt{2}}(\ket{0101\cdots 01}+\ket{1010\cdots 10}),\\
    \ket{\phi_{\text{w},d}(\theta=-\theta_d)} &= \frac{1}{\sqrt{2}}(\ket{0000\cdots 00}+\ket{1111\cdots 11})
    \label{eq:ground-states},
\end{aligned}
\end{equation}
where $\theta_d\equiv \pi/(2d)$. These two states are the ground state of $H_J$ with $J=1$ and $J=-1$, respectively. Denoting these two states uniformly as $\ket{\phi_{\text{w},d}(J\theta_d)}$, the ground state energy $E_0$ of these two states reads
\begin{align}
    H_J\ket{\phi_{\text{w},d}(J\theta_d)} = E_0\ket{\phi_{\text{w},d}(J\theta_d)} =-d(L-1)L^{d-1}\ket{\phi_{\text{w},d}(J\theta_d)},
\end{align}
which equals to the number of edges of the $d$-dimensional lattice. In the following two sections, we explore the properties of the Ising whip circuit on one- and two-dimensional lattices.

\section{Whip circuit on a 1-d lattice}

In this section, we study the Ising whip circuit on a 1-d lattice with both open boundary condition (OBC) and periodic boundary condition (PBC), and show that the 1-d circuit has no phase transition according to its Hamiltonian expectation and entanglement entropy in the following two subsections.
\begin{figure}
    \centering
    \includegraphics[width=0.4\textwidth]{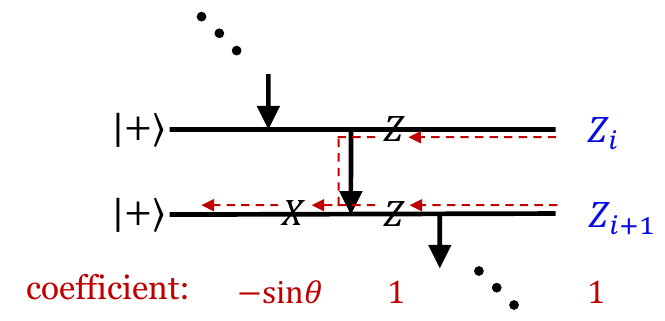}
    \caption{Pauli propagation of the local observable $Z_iZ_{i+1}$ in the whip circuit on the $1$-d chain. Only one Pauli path contributes to the expectation $\bra{\phi_{\text{w},1}(\theta)} Z_iZ_{i+1}\ket{\phi_{\text{w},1}(\theta)}$ given by the Pauli strings $Z_{i}Z_{i+1}$ and $X_{i+1}$. Black arrows denote $ZY$ rotations $e^{-\im \theta ZY/2}$.}
    \label{fig:whip_chain}
\end{figure}

\subsection{Hamiltonian expectation}
\noindent \textbf{Open boundary condition}~---~The Hamiltonian expectation of the Ising model on a $1$-d chain can be evaluated using the language of Pauli propagation. For one $Z_iZ_{i+1}$ term in the Hamiltonian $H_J = J\sum_{i=1}^{L-1}Z_iZ_{i+1}$, only one Pauli path has non-zero contribution to its expectation, as shown in Fig.~\ref{fig:whip_chain}. By the $ZY$ rotation $e^{-\im \theta Z_{i}Y_{i+1}/2}$, $Z_iZ_{i+1}$ is transformed into two terms
\begin{align}
    e^{\im \theta Z_{i}Y_{i+1}/2} Z_iZ_{i+1}e^{-\im \theta Z_{i}Y_{i+1}/2}= \cos\theta Z_iZ_{i+1}-\sin\theta X_{i+1}.
\end{align}
For the first term $Z_iZ_{i+1}$, since no further gates acting on the qubit $i+1$, and the initial $\ket{+}$ state has $\bra{+}Z_{i+1}\ket{+}=0$, the first term has no contribution to the $Z_iZ_{i+1}$ expectation value. On the other hand, the second term with $\bra{+}X_{i+1}\ket{+}=1$ contributes a coefficient $(-\sin\theta)$. Thus, we have 
\begin{align}
    \bra{\phi_{\text{w},1}(\theta)} H_J\ket{\phi_{\text{w},1}(\theta)} = -J(L-1)\sin\theta.
\end{align}
The energy density $\bra{\phi_{\text{w},1}(\theta)} H_J\ket{\phi_{\text{w},1}(\theta)}/(L-1)=-J\sin\theta$ is an analytic function of $\theta$, indicating that the $1$-d whip circuit has no non-analyticity, and thus no phase transition. 

In this example, we see that the $1$-d whip circuit has no non-analyticity if only local Pauli path contributes to the observable expectation. We call the expectation contribution from $\OO(1)$-weight Pauli path as \textit{local contribution}. In contrast, we show in the following content that the non-analyticity comes from Pauli paths whose weights grow with the system size. We call the expectation contribution from these Pauli paths as \textit{non-local contribution}.
\\\\
\noindent \textbf{Periodic boundary condition}~---~Here, we study the Ising whip circuit with PBC on a circular lattice $\CC$ with $L$ nodes and $L$ edges. The Hamiltonian $H_{J}^{\CC} =J \sum_{i=1}^{L}Z_iZ_{i+1}$ has the PBC $Z_{L+1}=Z_1$. The corresponding whip circuit with $L=8$ is illustrated in Fig.~\ref{fig:bipolar-ZY-for-2-regular}. Different from the circuit illustrated in Fig.~2 of the main text, here two kinds of $ZY$ rotations $e^{-\im \theta Z_i Y_j}$ and $e^{-\im \theta Z_i Y_j/2}$ are denoted by black and gray arrows, respectively. For each local observable $Z_iZ_{i+1}$, similar to the case of 1-d chain, it only has local contribution if $i\neq L-1,L$. In contrast, for $i=L-1$ and $i=L$, i.e., $Z_{L-1}Z_L$ and $Z_{L}Z_1$ have not only the local contribution $(-\cos\theta\sin\theta)$, but also the non-local one. The non-local contribution comes from the cycle structure of the whip circuit, as illustrated by the Pauli path in Fig.~\ref{fig:bipolar-ZY-for-2-regular}. Thus, each of $Z_{L-1}Z_L$ and $Z_{L}Z_1$ has a cycle contribution of $(-\sin(2\theta))^{L-2}\sin\theta\cos\theta$. Combining all local contributions, the Hamiltonian expectation reads
\begin{equation}
    \begin{aligned}
    \bra{\phi_{\text{w},\CC}(\theta)}H_{J}^{\CC}\ket{\phi_{\text{w},\CC}(\theta)}
    = -J[\underbrace{(L-2)\sin(2\theta) +2\sin\theta\cos\theta}_{\text{local contribution}}+\underbrace{2(-\sin(2\theta))^{L-2}\sin\theta\cos\theta}_{\text{cycle contribution}})].
    \label{eq:exact-expected-cut-number-2-regular}
\end{aligned}
\end{equation}
We see that the cycle contribution is proportional to $\sin^{L-1}(2\theta)$. In the infinite volume limit $L\to \infty$, this term has a singularity at $\theta=\pi/4+m\pi/2,m\in\mathbb{Z}$. 

\begin{figure}
    \centering
    \includegraphics[width=0.4\textwidth]{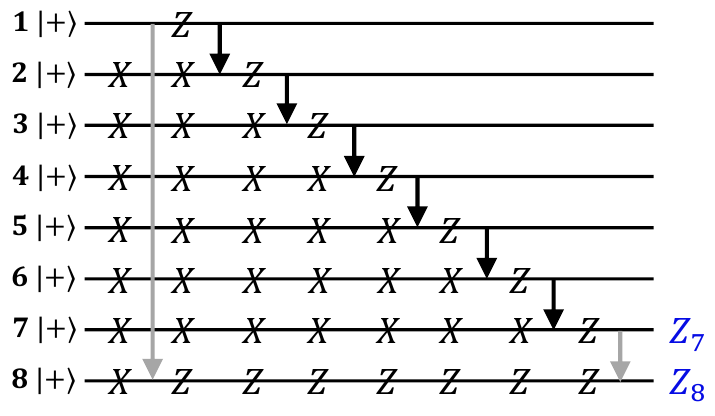}
    \caption{Whip circuit on a cycle with $8$ qubits. Two kinds of $ZY$ rotations $e^{-\im \theta Z_i Y_j}$ and $e^{-\im \theta Z_i Y_j/2}$ are denoted by black and gray arrows, respectively.}
    \label{fig:bipolar-ZY-for-2-regular}
\end{figure}

However, in the energy density $\langle \overline{H_J}\rangle_{\theta}=\bra{\phi_{\text{w},\CC}(\theta)}H_J\ket{\phi_{\text{w},\CC}(\theta)}/L$, the singularity vanishes in the infinite volume limit
\begin{align}
    \lim_{L\to\infty}\left. \sin^{L-1}(2\theta)/L\right|_{\theta=\pi/4}=0.
\end{align}
Thus, the singularity does not lead to non-analyticity in the energy density. The non-local Pauli paths do not lead to non-analyticity, since the Hamiltonian has only $\OO(1)$ non-local terms. As introduced in Sec. II of the main text, the non-analyticity requires \textit{exponentially many} non-local Pauli paths. In contrast, 2-d lattice has numerous cycles such that the non-local Pauli paths are proliferated, leading to a non-analyticity in the energy density.

\subsection{Entanglement entropy}
The claim that the $1$-d whip circuit has no phase transition can further be verified by analytically and numerically calculating its entanglement entropy. Consider the matrix-product-operator (MPO) decomposition of the two-qubit $ZY$ rotation in the computational basis
\begin{equation}
    e^{-\im\theta Z_iY_j/2} = \sum_{x_i,x_i'}\sum_{x_j,x_j'} o_{x_i'x_j'}^{x_ix_j}\ket{x_i'}\bra{x_i}\otimes \ket{x_j'}\bra{x_j},
\end{equation}
where the coefficients $o_{i'j'}^{ij}$ can be factorized as
the following tensor-network notation
\begin{center}
\begin{tikzpicture}
    \SetVertexStyle[FillColor=lightgray,MinSize=0.4\DefaultUnit,FillOpacity=1.0]
\Vertex[y=1,Pseudo]{i_A}
\Vertex[y=-1,Pseudo]{ii_A}
\Vertex[x=0,y=0,color=Cyan]{A}
\Vertex[x=2,y=0,color=Cyan]{B}
\Vertex[x=1,y=0,color=Salmon,shape=diamond]{Link1}
\Vertex[x=2,y=1,Pseudo]{i_B}
\Vertex[x=2,y=-1,Pseudo]{ii_B}
\Edge(A)(i_A)
\Edge(A)(ii_A)
\Edge(A)(Link1)
\Edge(B)(Link1)
\Edge(B)(i_B)
\Edge(B)(ii_B)
\Text[x=3]{$\equiv$}
\Vertex[x=4,y=1,Pseudo]{i_A2}
\Vertex[x=4,y=-1,Pseudo]{ii_A2}
\Vertex[x=4,y=0]{A2}
\Vertex[x=5,y=1,Pseudo]{i_B2}
\Vertex[x=5,y=-1,Pseudo]{ii_B2}
\Vertex[x=5,color=black]{B2}
\Edge(A2)(i_A2)
\Edge(A2)(ii_A2)
\Edge(A2)(B2)
\Edge(B2)(i_B2)
\Edge(B2)(ii_B2)
\end{tikzpicture}\\
\end{center}
Then, according to the Euler identity
\begin{equation}\label{eq:tn-zy-decomp}
    e^{-\im\theta Z_iY_j/2} = \cos(\theta/2)I - \im\sin(\theta/2)Z_iY_j,
\end{equation}
the nonzero tensor elements can be straightforwardly obtained:
\begin{center}
\begin{tikzpicture}
\SetVertexStyle[FillColor=lightgray,MinSize=0.4\DefaultUnit,FillOpacity=1.0]
\Vertex[x=1,y=1,Pseudo]{i_A}
\Vertex[x=1,y=-1,Pseudo]{ii_A}
\Vertex[x=1,y=0]{A}
\Vertex[x=2,y=0,Pseudo]{Link1}
\Edge[label=$i$](A)(i_A)
\Edge[label=$i'$](A)(ii_A)
\Edge[label=$1$](A)(Link1)
\Text[x=1.8,y=0,position=right]{$=\delta_{i'}^{i}{}$,}
\Vertex[x=3,y=1,Pseudo]{i_B}
\Vertex[x=3,y=-1,Pseudo]{ii_B}
\Vertex[x=3,y=0]{B}
\Vertex[x=4,y=0,Pseudo]{Link2}
\Edge[label=$i$](B)(i_B)
\Edge[label=$i'$](B)(ii_B)
\Edge[label=$2$](B)(Link2)
\Text[x=3.8,y=0,position=right]{$={\hat Z}_{i'}^{i}{}$,}
\Vertex[x=6.0,y=1,Pseudo]{i_C}
\Vertex[x=6.0,y=-1,Pseudo]{ii_C}
\Vertex[x=6.0,y=0,color=black]{C}
\Vertex[x=5.0,y=0,Pseudo]{Link3}
\Edge[label=$i$](C)(i_C)
\Edge[label=$i'$](C)(ii_C)
\Edge[label=$1$](C)(Link3)
\Text[x=6.3,y=0,position=right]{$=\delta_{i'}^{i}{\cos(\frac{\theta}{2})}$,}
\Vertex[x=9.5,y=1,Pseudo]{i_D}
\Vertex[x=9.5,y=-1,Pseudo]{ii_D}
\Vertex[x=9.5,y=0,color=black]{D}
\Vertex[x=8.5,y=0,Pseudo]{Link4}
\Edge[label=$i$](D)(i_D)
\Edge[label=$i'$](D)(ii_D)
\Edge[label=$2$](D)(Link4)
\Text[x=9.8,y=0,position=right]{$=-\im{\hat Y}_{i'}^{i}{\sin(\frac{\theta}{2})}$}
\end{tikzpicture}
\end{center}
The initial state $\ket{+}^{\otimes L}$ can also be written as
\begin{equation}
    \ket{+}^{\otimes L} = \frac{1}{\sqrt{2^{L}}}\sum_{x_1,\dots,x_L} \ket{x_1\dots x_L}
    = \sum_{x_1,\dots,x_L} A_{x_1}A_{x_2}\cdots A_{x_L}\ket{x_1}\ket{x_2}\cdots \ket{x_L},
\end{equation}
where $A_{x_i} = (1/\sqrt{2},1/\sqrt{2})_{x_i}^T$, 
i.e., a matrix product state (MPS) with bond dimension one.

The 1-d whip circuit state with and without the boundary gates $e^{-\im Z_1 Y_L/2}$ are represented by the following tensor networks:
\begin{center}
\begin{tabular}{cc}
\begin{tikzpicture}
    \SetVertexStyle[FillColor=BrickRed,MinSize=0.4\DefaultUnit,FillOpacity=1.0]
\Vertex[x=0,y=0,label=$\ket{+}$,position=above]{A1}
\Vertex[x=1.0,y=0,label=$\ket{+}$,position=above]{A2}
\Vertex[x=2.0,y=0,label=$\ket{+}$,position=above]{A3}
\Vertex[x=3.0,y=0,label=$\ket{+}$,position=above]{A4}
\Vertex[x=4.0,y=0,label=$\ket{+}$,position=above]{A5}
\Vertex[x=5.0,y=0,label=$\ket{+}$,position=above]{A6}
\Vertex[x=0,y=-0.5,color=lightgray]{T1_L}
\Vertex[x=1.0,y=-0.5,color=black]{T1_R}
\Edge(T1_L)(T1_R)
\Vertex[x=1.0,y=-1,color=lightgray]{T2_L}
\Vertex[x=2.0,y=-1,color=black]{T2_R}
\Edge(T2_L)(T2_R)
\Vertex[x=2.0,y=-1.5,color=lightgray]{T3_L}
\Vertex[x=3.0,y=-1.5,color=black]{T3_R}
\Edge(T3_L)(T3_R)
\Vertex[x=3.0,y=-2.0,color=lightgray]{T4_L}
\Vertex[x=4.0,y=-2.0,color=black]{T4_R}
\Edge(T4_L)(T4_R)
\Vertex[x=4.0,y=-2.5,color=lightgray]{T5_L}
\Vertex[x=5.0,y=-2.5,color=black]{T5_R}
\Edge(T5_L)(T5_R)
\Vertex[x=0,y=-3,Pseudo]{E1}
\Vertex[x=1.0,y=-3,Pseudo]{E2}
\Vertex[x=2.0,y=-3,Pseudo]{E3}
\Vertex[x=3.0,y=-3,Pseudo]{E4}
\Vertex[x=4.0,y=-3,Pseudo]{E5}
\Vertex[x=5.0,y=-3,Pseudo]{E6}
\Edge(A1)(E1)
\Edge(A2)(E2)
\Edge(A3)(E3)
\Edge(A4)(E4)
\Edge(A5)(E5)
\Edge(A6)(E6)
\end{tikzpicture}
(OBC); &
\begin{tikzpicture}
    \SetVertexStyle[FillColor=BrickRed,MinSize=0.4\DefaultUnit,FillOpacity=1.0]
\Vertex[x=0,y=0,label=$\ket{+}$,position=above]{A1}
\Vertex[x=1.0,y=0,label=$\ket{+}$,position=above]{A2}
\Vertex[x=2.0,y=0,label=$\ket{+}$,position=above]{A3}
\Vertex[x=3.0,y=0,label=$\ket{+}$,position=above]{A4}
\Vertex[x=4.0,y=0,label=$\ket{+}$,position=above]{A5}
\Vertex[x=5.0,y=0,label=$\ket{+}$,position=above]{A6}
\Vertex[x=0,y=-0.5,color=lightgray]{T1_L}
\Vertex[x=1.0,y=-0.5,color=black]{T1_R}
\Edge(T1_L)(T1_R)
\Vertex[x=1.0,y=-1,color=lightgray]{T2_L}
\Vertex[x=2.0,y=-1,color=black]{T2_R}
\Edge(T2_L)(T2_R)
\Vertex[x=2.0,y=-1.5,color=lightgray]{T3_L}
\Vertex[x=3.0,y=-1.5,color=black]{T3_R}
\Edge(T3_L)(T3_R)
\Vertex[x=3.0,y=-2.0,color=lightgray]{T4_L}
\Vertex[x=4.0,y=-2.0,color=black]{T4_R}
\Edge(T4_L)(T4_R)
\Vertex[x=4.0,y=-2.5,color=lightgray]{T5_L}
\Vertex[x=5.0,y=-2.5,color=black]{T5_R}
\Edge(T5_L)(T5_R)
\Vertex[x=0.0,y=-3,color=lightgray]{T6_L}
\Vertex[x=1.0,y=-3,color=white]{T6_1}\Vertex[x=2.0,y=-3,color=white]{T6_2}
\Vertex[x=3.0,y=-3,color=white]{T6_3}\Vertex[x=4.0,y=-3,color=white]{T6_4}
\Vertex[x=5.0,y=-3,color=black]{T6_R}
\Edge(T6_L)(T6_R)
\Vertex[x=0,y=-3.8,Pseudo]{E1}
\Vertex[x=1.0,y=-3.8,Pseudo]{E2}
\Vertex[x=2.0,y=-3.8,Pseudo]{E3}
\Vertex[x=3.0,y=-3.8,Pseudo]{E4}
\Vertex[x=4.0,y=-3.8,Pseudo]{E5}
\Vertex[x=5.0,y=-3.8,Pseudo]{E6}
\Edge(A1)(E1)
\Edge(A2)(E2)
\Edge(A3)(E3)
\Edge(A4)(E4)
\Edge(A5)(E5)
\Edge(A6)(E6)
\end{tikzpicture}
(PBC),
\end{tabular}
\end{center}
where the non-local gate in the PBC case is incorporated by inserting identity tensors (shown as white circles).
Direct contraction of local tensors leads to a final-state MPS with bond dimension $\chi=2$ for the OBC circuit and bond dimension $\chi=4$ for the PBC case. 
In fact, it is natural to contract the bond tensors and site tensors connected by the curved lines together as shown below,
\begin{center}
\begin{tikzpicture}
    \SetVertexStyle[FillColor=Emerald,MinSize=0.4\DefaultUnit,FillOpacity=1.0]
\Vertex[x=0.0,y=0]{F1}
\Vertex[x=0.5,y=0,color=Salmon,shape=diamond]{L1}
\Vertex[x=1.0,y=0]{F2}
\Vertex[x=1.5,y=0,color=Salmon,shape=diamond]{L2}
\Vertex[x=2.0,y=0]{F3}
\Vertex[x=2.5,y=0,color=Salmon,shape=diamond]{L3}
\Vertex[x=3.0,y=0]{F4}
\Vertex[x=3.5,y=0,color=Salmon,shape=diamond]{L4}
\Vertex[x=4.0,y=0]{F5}
\Vertex[x=4.5,y=0,color=Salmon,shape=diamond]{L5}
\Vertex[x=5.0,y=0]{F6}
\Edge(F1)(F6)
\Vertex[x=0,y=-1,Pseudo,size=0]{G1}
\Vertex[x=1.0,y=-1,Pseudo,size=0]{G2}
\Vertex[x=2.0,y=-1,Pseudo,size=0]{G3}
\Vertex[x=3.0,y=-1,Pseudo,size=0]{G4}
\Vertex[x=4.0,y=-1,Pseudo,size=0]{G5}
\Vertex[x=5.0,y=-1,Pseudo,size=0]{G6}
\Edge(G1)(F1)
\Edge(G2)(F2)
\Edge(G3)(F3)
\Edge(G4)(F4)
\Edge(G5)(F5)
\Edge(G6)(F6)
\Edge[lw=1.0,bend=45,color=green](L1)(F2)
\Edge[lw=1.0,bend=45,color=green](L2)(F3)
\Edge[lw=1.0,bend=-45,color=green](L3)(F3)
\Edge[lw=1.0,bend=-45,color=green](L4)(F4)
\Edge[lw=1.0,bend=-45,color=green](L5)(F5)
\Vertex[x=2.75,y=-0.7,size=0,Pseudo]{C1}
\Vertex[x=2.75,y=0.3,size=0,Pseudo]{C2}
\Edge[lw=0.8,style={dashed},label=cut,position={below=1mm}](C1)(C2)
\end{tikzpicture}
\end{center}

For the OBC whip circuit, all local tensors become left- or right-orthogonal, such that the orthogonal center is placed at the middle of the chain.
Performing singular value decomposition (SVD) on the tensor closest to the cut yields the Schmidt decomposition
\begin{equation}
    \ket{\psi} = \sum_{\alpha=0}^{1} \sqrt{\lambda_\alpha} \ket{\phi_\alpha^L}\otimes\ket{\phi_\alpha^R},
\end{equation}
with singular values $\sqrt{\lambda_0}=|\cos(\theta/2)|,\sqrt{\lambda_1}=|\sin(\theta/2)|$, directly related to the bond tensor elements of Eq.~\eqref{eq:tn-zy-decomp}.
Therefore, the entanglement spectrum $\xi_{\alpha}=-\log \lambda_{\alpha}$  is degenerate only at discrete points $\pi/2+m\pi, m\in\ZZ$, and no degeneracy shift across these points. Additionally, the entanglement entropy reads
\begin{equation}
        S(\theta) = -\sum_{\alpha=0}^{1} \lambda_\alpha \log(\lambda_\alpha) = -\cos^2(\theta/2)\log(\cos^2(\theta/2)) - \sin^2(\theta/2)\log(\sin^2(\theta/2)).
\end{equation}
As plotted in Fig.~\ref{fig:energy-landscape-1D}, both the entanglement entropy and spectrum have no distinct behavior on the two sides of the whip angle $\theta=\pm\pi/2$, indicating that the 1-d OBC whip circuit has no phase transition at these angle values.

For the PBC whip circuit, the bulk tensors after contraction still satisfy the left/right orthogonality, but the edge tensors no longer do.
A QR decomposition is then recursively applied on each local tensor from both ends toward the center to restore orthogonality \cite{ITensor}.
The entanglement entropy and spectrum are calculated as above.
Numerical results for a cycle with $L=100$ are shown in Fig.~\ref{fig:energy-landscape-1D}.
In contrast to the 2-d case discussed in the main text, the spectrum degeneracy does not change across the whip angle $\theta=\pm\pi/2$, showing that the 1-d PBC whip circuit also has no phase transition at these angle values.

\begin{figure*}
    \centering
    \includegraphics[width=0.4\textwidth]{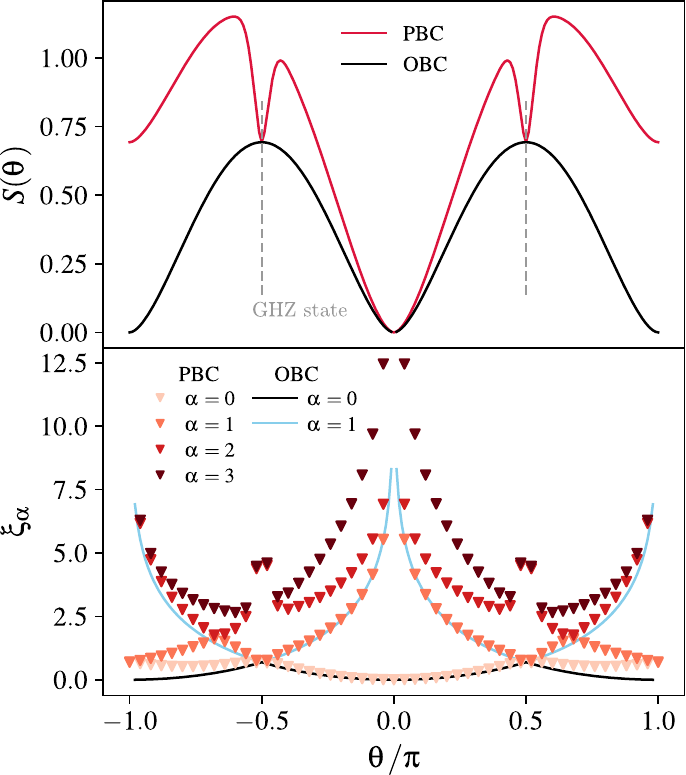}
    \caption{Entanglement entropy and entanglement spectrum of the 1-d Ising whip circuit as functions of the whip angle $\theta$. The OBC results are obtained from the analytic formula, and the PBC results are numerically obtained on a cycle with $L=100$.}
    \label{fig:energy-landscape-1D}
\end{figure*}

\section{Whip circuit on a 2-d lattice}
This note studies the properties of the Ising whip circuit on a 2-d lattice. In Sec.~\ref{sec:ZZ-expectation}, we give the rigorous proof on the expectation value $\langle Z_iZ_j\rangle_{\theta}$ in Eq.~(7) of the main text, and then give in Sec.~\ref{sec:Time complexity of evaluating $ZZ$ expectation} the time complexity of evaluating the expectation $\langle Z_iZ_j\rangle_{\theta}$ using quantum computers and naive Pauli propagation method. Sec.~\ref{sec:correlation-function} studies the correlation function of the 2-d whip circuit, whose critical exponent is analytically obtained. We study in Sec.~\ref{sec:order-parameter} the order parameter and prove Eq.~(10) of the main text. Sec.~\ref{sec:Whip circuits without phase transition} presents the 2-d whip circuit without phase transition. Finally, the VQE performances of the 2-d whip circuits with and without phase transition are compared in Sec.~\ref{sec:VQE-performance}.

\subsection{Hamiltonian expectation}\label{sec:ZZ-expectation}
In this section, we analytically derive the expectation of the Ising Hamiltonian local term $\langle Z_i Z_j\rangle_{\theta}$ of the 2-d whip circuit given in the main text. This result provides strong evidence that the $2$-d whip circuit hosts a continuous phase transition. Since the Ising Hamiltonian has a $ZZ$ term on each link, the $ZZ$ expectation value is equal to the Hamiltonian energy density averaged over all links. We therefore consider a single $ZZ$ term located in the bulk of the $2$-d lattice, as illustrated by the two blue points in Fig.~\ref{fig:loops_analytical_ZZ}(\textbf{a}). Without loss of generality, we construct an $x$-$y$ coordinate system with the origin at one of the $ZZ$ qubits.
We label $ZZ$ operators by their coordinates, e.g., $Z_{0,0}Z_{0,1}$. 
Each arrow in Fig.~\ref{fig:loops_analytical_ZZ}(\textbf{a}) denotes a $ZY$ rotation $e^{-\im\theta ZY/2}$ applied sequentially to the initial state $\ket{+}^{\otimes N}$ from top to bottom of Fig.~\ref{fig:loops_analytical_ZZ}(\textbf{a}). 
\\\\
\noindent \textbf{Local contribution}~---~The local contribution of the whip circuit to the $Z_{0,0}Z_{0,1}$ expectation is identical to $Z_{L-1}Z_L$ or $Z_{L}Z_1$ in Eq.~\eqref{eq:exact-expected-cut-number-2-regular}, where we have 
\begin{align}
    \langle Z_{0,0}Z_{0,1}\rangle_{\theta,\text{local}} = -\sin\theta\cos\theta.
    \label{eq:local-contribution}
\end{align}
At the GHZ state point $\theta=-\pi/4$, $\langle Z_{0,0}Z_{0,1}\rangle_{(-\pi/4),\text{local}}=1/2$. On the other hand, the full expectation of the GHZ state is $ \langle Z_{0,0}Z_{0,1}\rangle_{(-\pi/4)}=1$. This $1/2$ difference comes from non-local cycle contributions that we evaluate in the following content.
\\\\
\noindent \textbf{Cycle contribution}~---~ A non-local Pauli path contributing to the $ZZ$ expectation corresponds to a closed loop in the $2$-d bulk. This can be observed by analyzing the action of the whip circuit to $Z_{0,0}Z_{0,1}$ in the Heisenberg picture. The sequential structure of the whip circuit allows us to separate it into layers of $ZY$ rotations that act on the same diagonal cross-section of the system. As shown in Fig.~\ref{fig:loops_analytical_ZZ}(\textbf{a}), starting from $Z_{0,0}Z_{0,1}$, we denote these layers by the order $\LL$ of their actions on $Z_{0,0}Z_{0,1}$. Due to the commutation relation $[Z_i,e^{-\im\theta Z_i Y_j/2}]=0$, only $ZY$ rotations in layers $\LL\geq 1$ of Fig.~\ref{fig:loops_analytical_ZZ}(\textbf{a}) transform $Z_{0,0}Z_{0,1}$, whereas the gray $ZY$ rotations do not transform $Z_{0,0}Z_{0,1}$.

To see how the layered whip circuit in $\LL\geq 1$ generates non-local Pauli paths, we focus on the action of one layer $\LL$ to the qubit at $(x,y)$, as illustrated in Fig.~\ref{fig:loops_analytical_ZZ}(\textbf{b}).
The qubit is affected by two rotations within $\LL$. Due to the circuit structure, these rotations are also the last rotations throughout the full circuit applied to the qubit $(x,y)$. 
This is important, because during the Heisenberg time-evolution of $Z_{0,0}Z_{0,1}$, after applying the layer $\LL$, any Pauli strings containing $Z_{x,y}$ or $Y_{x,y}$ will not contribute to the final expectation value with respect to the initial state $\ket{+}$ at qubit $(x,y)$. 
This means, we restrict our analysis to Pauli strings that after the gate layer $\LL$, contain only $X_{x,y}$ or identity $I_{x,y}$. As an example, consider the action of the two gates in $\LL =1$ acting on the local $Z_{0,0}$
\begin{equation}
    \begin{aligned}
    &e^{\im\theta Y_{0,0}Z_{1,0}/2}e^{\im\theta Y_{0,0}Z_{0,1}/2}Z_{0,0}e^{-\im\theta Y_{0,0}Z_{1,0}/2}e^{-\im\theta Y_{0,0}Z_{0,1}/2} \\
    &= \cos^2(\theta) Z_{0,0} -  \cos(\theta)\sin(\theta) X_{0,0}Z_{0,1}  - \cos(\theta)\sin(\theta) X_{0,0}Z_{1,0} + \sin(\theta)^2 Z_{1,0}Z_{0,0}Z_{0,1}.
    \label{eq:one-propagation}
\end{aligned}
\end{equation}
As mentioned above, we can ignore the first and fourth term throughout the rest of the circuit of $\LL \geq 2$. Because they contain $Z_{0,0}$ which will not be transformed by any gates in $\LL \geq 2$ in the whip circuit, such that these terms vanish due to $\bra{+}Z\ket{+}=0$. The second and third terms contain $X_{0,0}$ and will therefore contribute to the expectation value since $\bra{+}X\ket{+}=1$. Additionally, these two terms contain one local $Z$ at each qubit of the next layer $\LL=2$, such that the above propagation of $\LL=1$ can be applied directly to $\LL=2$ and subsequent layers after that. Thus in this analysis, a single local observable $Z$ generates Pauli strings that always contain exactly one local $Z$ throughout the layered whip circuit. 

\begin{figure}
    \centering
    \includegraphics[width=0.95\linewidth]{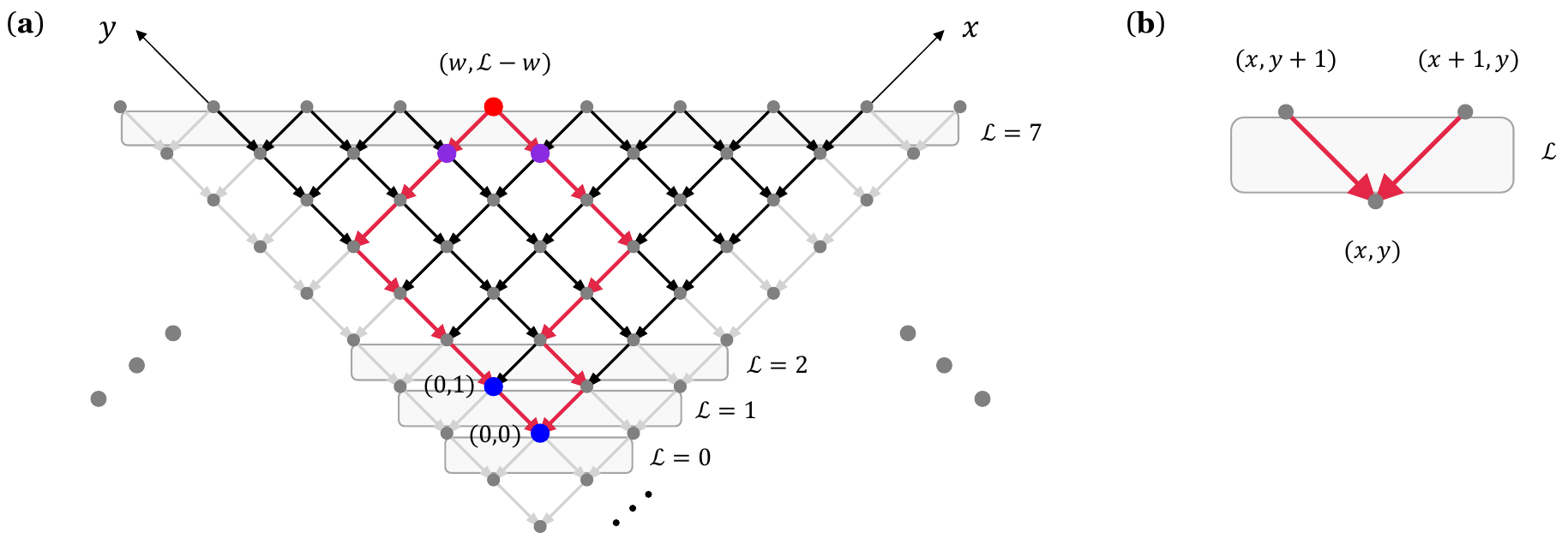}
    \caption{The coordinate system of analytically evaluating the  expectation $\langle Z_{0,0}Z_{0,1}\rangle_{\theta}$. (\textbf{a}) The Pauli propagation path of $Z_{0,0}$ and $Z_{0,1}$ is a non-intersecting oriented cycle (red arrows) with a single source at $(w,\LL-w)$ and a single sink at $(0,0)$. (\textbf{b}) Generation of non-local Pauli paths by the layered structure of the 2-d Ising whip circuit. $\LL$ is the layer index.}
    \label{fig:loops_analytical_ZZ}
\end{figure}

Extending the above argument to $Z_{0,0}Z_{0,1}$, it contains two local $Z$s that are transformed to the linear combination of the product of two other local $Z$s after the $\LL$-layer Pauli propagation. The only exception occurs in the case of two adjacent $Z$s, because they can move towards the same qubit in the next layer, as illustrated by the two purple points and the one red point in Fig.~\ref{fig:loops_analytical_ZZ}(\textbf{a}). At the red point it is $Z_{w,\LL-w}^2=I_{w,\LL-w}$ and the dynamics of this group of Pauli strings stops because the identity commutes with any remaining gate in the circuit. The resulting Pauli strings are the ones that contribute to the expectation value. We denote the two propagation paths that close at $(w,\LL-w)$ as a cycle $\CC$. Thus, all non-local contribution to $Z_{0,0}Z_{0,1}$ expectation value correspond to cycles each with a unique sink at $(0,0)$ and a unique source at $(x_0>0,y_0>0)$, and we refer to this non-local contribution as a \textit{cycle contribution}.

The total expectation value is obtained by a weighted summation over all cycle contributions. The weight is related to the cycle with length $|\CC|=2\LL$ closed after the $\LL$-layer whip circuit. Note that in the second and third term of Eq.\eqref{eq:one-propagation}, each layer for each local $Z$ generates a weight of $-\cos(\theta)\sin(\theta)$. Thus, a cycle closed after $\LL$ layers has a weight $-(\cos(\theta)\sin(\theta))^{2\LL-1}$, where the $-1$ in the exponent appears because $Z_{0,1}$ is propagated by $\LL-1$ layers. 
The number of cycles that close at $(w,\LL-w)$ in Fig.~\ref{fig:loops_analytical_ZZ}(\textbf{a}) can be determined combinatorially by invoking the Lindström-Gessel-Viennot (LGV) lemma. 
LGV lemma can be applied to obtain the number of non-intersecting pairs of paths on a lattice starting from $(w-1,\LL-w)$ and $(w,\LL-w-1)$ [The two purple points in Fig.~\ref{fig:loops_analytical_ZZ}(\textbf{a})], and ending at $(0,0)$ and $(0,1)$. This number of non-intersecting pairs equals to the desired number of cycles and is given by
\begin{equation}
     N_{\CC}(w,\LL)= \begin{pmatrix}
        \LL-2\\w-1
    \end{pmatrix}
    \begin{pmatrix}
        \LL-2\\w-1
    \end{pmatrix}
    -
   \begin{pmatrix}
        \LL-2\\w
    \end{pmatrix}
    \begin{pmatrix}
        \LL-2\\w-2
    \end{pmatrix},
    \label{eq:app-LGV-lemma}
\end{equation}
where the second term is discarded if $w<2$. Since $l:=\LL-w$ and $w$ is the length and width of the rectangle expanding from $(0,0)$ to $(w,\LL-w)$, $N_{\CC}(w,\LL)$ can be rewritten as a function of $l$ and $w$
\begin{equation}
    \begin{aligned}
    N_{\CC}(w,\LL)\rightarrow N_{\CC}(l,w) &= \begin{pmatrix}
        l+w-2\\w-1
    \end{pmatrix}
    \begin{pmatrix}
        l+w-2\\w-1
    \end{pmatrix}
    -
   \begin{pmatrix}
        l+w-2\\w
    \end{pmatrix}
    \begin{pmatrix}
        l+w-2\\w-2
    \end{pmatrix}\\
    &=\frac{lw}{(l+w-1)(l+w)^2}{l+w\choose l}^2,
\end{aligned}
\end{equation}
which gives Eq.~(4) in the main text. 

In summary, the total cycle contribution after $\LL_0$-layers of the whip circuit reads
\begin{align}
    \langle Z_{0,0}Z_{0,1}\rangle_{\theta,\text{cycle}}= -\sum_{\LL=2}^{\LL_0} \sum_{w=1}^{\LL-1} N_{\CC}(w,\LL) (\cos\theta\sin\theta)^{2\LL-1}.
    \label{eq:cycle-contribution}
\end{align}
Here the first sum iterates over the different layers $\LL$, whereas the second sum iterates over all possible endpoints within a given layer $\LL$.

Since the path weight $ (\cos\theta\sin\theta)^{2\LL-1}$ has no dependence on $w$, the following derivation benefits from first processing the summation $\sum_{w=1}^{\LL-1} N_{\CC}(w,\LL)$ in Eq.~\eqref{eq:cycle-contribution}. This summation is simplified by the Chu–Vandermonde identity
\begin{equation}
    \sum_{k=0}^r\begin{pmatrix}
        m\\k
    \end{pmatrix}\begin{pmatrix}
        n\\r-k
    \end{pmatrix}=\begin{pmatrix}
        m+n\\r
    \end{pmatrix},
\end{equation}
for any non-negative integers $r$, $m$, $n$. Thus, the summation over the first term in Eq.~\eqref{eq:app-LGV-lemma} evaluates to
\begin{equation}
\begin{aligned}
    \sum_{w=1}^{\LL-1} \begin{pmatrix}
        \LL-2\\w-1
    \end{pmatrix}
    \begin{pmatrix}
        \LL-2\\w-1
    \end{pmatrix}&= \sum_{w=1}^{\LL-1} \begin{pmatrix}
        \LL-2\\w-1
    \end{pmatrix}
    \begin{pmatrix}
        \LL-2\\ \LL-1-w
    \end{pmatrix}=
    \begin{pmatrix}
        2(\LL-2)\\ \LL-2
    \end{pmatrix}
    \\
    &=
    \frac{(2(\LL-2))!}{((\LL-2)!)^2} = \frac{4^{\LL-2}}{\sqrt{\pi}}\LL(\LL-1)\frac{\Gamma(\LL-\frac{3}{2})}{\LL!},
\end{aligned}
\end{equation}
with the gamma function $\Gamma(z)=\int_0^\infty t^{z-1}e^{-t}dt$. Summation over the second term of Eq.~\eqref{eq:app-LGV-lemma} similarly evaluates to
\begin{equation}
    \begin{aligned}
     \sum_{w=1}^{\LL-1}\begin{pmatrix}
        \LL-2\\w
    \end{pmatrix}
    \begin{pmatrix}
        \LL-2 \\w-2
    \end{pmatrix}
    &=\sum_{w=0}^{\LL-1}\begin{pmatrix}
        \LL-2 \\w
    \end{pmatrix}
    \begin{pmatrix}
        \LL-2 \\\LL-w
    \end{pmatrix}=
    \begin{pmatrix}
        2(\LL-2)\\\LL
    \end{pmatrix}
    \\
    &=\frac{4^{\LL-2}}{\sqrt{\pi}}(\LL-2)(\LL-3)\frac{\Gamma(\LL-\frac{3}{2})}{\LL!}.
\end{aligned}
\end{equation}
In the first equality, the term of $w=0$ is freely added since we define ${\LL-2 \choose w-2}$ to be zero if $w<2$ following the LGV lemma.  Combining these two equations gives
\begin{equation}
    \begin{aligned}
    \sum_{w=1}^{\LL-1} N_{\CC}(w,\LL) &= \frac{4^{\LL-2}\Gamma(\LL-\frac{3}{2})}{\sqrt{\pi}\LL !}[\LL(\LL-1)-(\LL-2)(\LL-3)]= \frac{4^{\LL-1}}{\sqrt{\pi}}\frac{\Gamma(\LL-\frac{1}{2})}{\LL!}.
    \label{eq:cycle-number}
\end{aligned}
\end{equation}


\noindent \textbf{Full contribution}~---~ Combing the local contribution in Eq.~\eqref{eq:local-contribution} with the cycle contribution in Eq.~\eqref{eq:cycle-contribution}, and using the cycle number formula Eq.~\eqref{eq:cycle-number}, we write out the full contribution to the $ Z_{0,0}Z_{0,1}$ expectation value in the infinite volume limit as 
\begin{align}
    \langle Z_{0,0}Z_{0,1}\rangle_{\theta}=-\cos\theta\sin\theta-
    \lim_{{\LL_0}\rightarrow\infty}\sum_{\LL=2}^{\LL_0} \frac{4^{\LL-1}}{\sqrt{\pi}}\frac{\Gamma(\LL-\frac{1}{2})}{\LL!}(\cos\theta\sin\theta)^{2\LL-1}.
\end{align}
Although we did not define Eq.~\eqref{eq:cycle-number} for $\LL=1$, the resulting number of cycles has a well-defined value of $\left.\frac{4^{\LL-1}}{\sqrt{\pi}}\frac{\Gamma(\LL-\frac{1}{2})}{\LL!}\right|_{\LL=1}=1$. Thus, the local and cycle contribution can be combined as
\begin{align}
    \langle Z_{0,0}Z_{0,1}\rangle_{\theta}=-
    \lim_{{\LL_0}\rightarrow\infty}\sum_{\LL=1}^{\LL_0} \frac{4^{\LL-1}}{\sqrt{\pi}}\frac{\Gamma(\LL-\frac{1}{2})}{\LL!}(\cos\theta\sin\theta)^{2\LL-1}.
    \label{eq:ZZ-full-contribution}
\end{align}
By shifting the variable $\LL\to \LL+1$ and taking the definition of gamma function, we have
\begin{equation}
    \begin{aligned}
    \langle Z_{0,0}Z_{0,1}\rangle_{\theta}=-\lim_{{\LL_0}\rightarrow  \infty} \sum_{\LL=0}^{\LL_0} \frac{4^{\LL}}{\sqrt{\pi}}\frac{\int_0^\infty t^{\LL-\frac{1}{2}}e^{-t} dt}{(\LL+1)!} \frac{1}{2^{2\LL+1}}(\sin2\theta)^{2\LL+1}\\
    =-\lim_{{\LL_0}\rightarrow  \infty} \int_0^\infty  t^{-1}\frac{e^{-t} }{2\sqrt{\pi}} \sum_{\LL=0}^{\LL_0}\frac{\sqrt{t}^{2\LL+1}}{(\LL+1)!}(\sin2\theta)^{2\LL+1}dt,
\end{aligned}
\end{equation}
where the infinite summation can be simplified as per the Taylor expansion of an exponential function
\begin{align}
    \lim_{{\LL_0}\rightarrow  \infty}\sum_{\LL=0}^{\LL_0}\frac{\sqrt{t}^{2\LL+1}}{(\LL+1)!}(\sin2\theta)^{2\LL+1} = \frac{1}{\sqrt{t}\sin(2\theta)} (e^{t\sin^2(2\theta)}-1).
\end{align}
Thus we have 
\begin{equation}
    \begin{aligned}
    \langle Z_{0,0}Z_{0,1}\rangle_{\theta}&=-\int_0^\infty \frac{e^{-t} }{2\sqrt{\pi}} \frac{e^{t\sin^2(2\theta)}-1}{t^{\frac{3}{2}}\sin(2\theta)}dt\\
    &=-\frac{1}{\sin(2\theta)} \int_0^\infty \frac{1 }{2\sqrt{\pi}} \frac{e^{-t\cos^2(2\theta)}-e^{-t}}{t^{\frac{3}{2}}}dt\\
    &=-\frac{1}{\sin(2\theta)}\{\underbrace{\left.[ \frac{ -1 }{\sqrt{\pi}} \frac{e^{-t\cos^2(2\theta)}-e^{-t}}{t^{\frac{1}{2}}}]\right|_{0}^\infty}_0 + \int_0^\infty \frac{1 }{\sqrt{\pi}} \frac{-\cos^2(2\theta)e^{-t\cos^2(2\theta)}+e^{-t}}{t^{\frac{1}{2}}}dt\}\\
    &=\frac{1}{\sin(2\theta)} (\sqrt{\cos^2(2\theta)}-1)\\
    &=\frac{|\cos(2\theta)|-1}{\sin(2\theta)},\label{eq:expect}
\end{aligned}
\end{equation}
where we integrated by parts in the third line and applied the definition of the gamma function in the fourth line. 

This final result correctly reproduces $\langle  Z_{0,0}Z_{0,1}\rangle_{\theta}=1$ at $\theta=-\pi/4+m\pi$, and $\langle  Z_{0,0}Z_{0,1}\rangle_{\theta}=-1$ at $\theta=\pi/4+m\pi,m\in\mathbb{Z}$, as required by the generated ferromagnetic and anti-ferromagnetic GHZ state. At these values of $\theta_c\equiv\pi/4+m\pi/2$, $\langle Z_{0,0}Z_{0,1}\rangle_{\theta}$ has a cusp that is continuous but non-differentiable due to the absolute value function $|\cos(2\theta)|$. This cusp is similar to the cusp of the ground state energy curve of the $1$-d transverse-field Ising model at the critical point, which indicates a continuous phase transition.

The derivative of $\langle Z_{0,0}Z_{0,1}\rangle_{\theta}$ with respect to $\theta$ everywhere else of $\theta_c$ gives
\begin{equation}
    \begin{aligned}
    \partial_\theta \langle  Z_{0,0}Z_{0,1}\rangle_{\theta}
    &=2(\cot^2(2\theta)-|\cos(2\theta)|\csc^2(2\theta))\sec(2\theta)\\
    &=2(\frac{\cos(2\theta)-\mathrm{sign}[\cos(2\theta)]}{\sin^2(2\theta)}).
\end{aligned}
\end{equation}
This expression has a clear discontinuity whenever $\cos(2\theta)$ changes its sign at $\theta_c$, which corresponds to a divergence in the 2nd derivative $\partial_{\theta}^2\langle  Z_{0,0}Z_{0,1}\rangle_{\theta}$.
\\\\
\noindent \textbf{Approximating $ZZ$ expectation by $\LL_0$ cut-off}~---~ As shown in Eq.~\eqref{eq:ZZ-full-contribution}, the contribution to the $ZZ$ expectation value as a function of the number of layer $\LL$ reads
\begin{align}
    f(\LL,\theta) = \frac{4^{\LL-1}}{\sqrt{\pi}}\frac{\Gamma(\LL-\frac{1}{2})}{\LL!}(\cos\theta\sin\theta)^{2\LL-1}=\frac{1}{2\sqrt{\pi}} \frac{\Gamma(\LL-\frac{1}{2})}{\Gamma(\LL+1)}(\sin 2\theta)^{2\LL-1}
\end{align}
Its scaling behavior for large $\LL$ is obtained by the Stirling formula, which gives
\begin{align}
    f(\LL,\theta)\sim \frac{1}{\LL^{3/2}} (\sin 2\theta)^{2\LL-1}.
\end{align}
Approximating the $ZZ$ expectation value in the presence of a $\LL_0$ cut-off gives an error in the expectation value
\begin{align}
    \epsilon(\LL_0)=\sum_{\LL=\LL_0}^{\infty}f(\LL,\theta)\sim \int_{\LL_0}^{\infty}\frac{1}{\LL^{3/2}} (\sin 2\theta)^{2\LL-1}d\LL\sim \left\{ \begin{array}{ll}
  1/\sqrt{\LL_0},& \textrm{if $\theta= \theta_c$;}\\
 (\sin 2\theta)^{2\LL_0-1}/(-\LL_0^{3/2}\log|\sin2\theta|), & \textrm{if $\theta\neq \theta_c$,}
  \end{array} \right.
  \label{eq:cut-off-error-scaling}
\end{align}
in the case of large $\LL_0\gg 1$. We see that the error decreases exponentially with respect to $\LL_0$ if $\theta$ is far from the critical point $\theta_c$, but only polynomially at the critical point. This polynomial decreasing behavior leads to the hardness of the classical simulation of the whip circuit compared to quantum computers at the critical point. This is discussed in the following subsection.

\subsection{Time complexity of quantum computers \textit{v.s.} naive Pauli propagation}\label{sec:Time complexity of evaluating $ZZ$ expectation}

For noiseless quantum computers, evaluating one $ZZ$ expectation value requires time complexity given by the following proposition:


\begin{proposition}\label{prop:QC-time-complexity}
    On an $L\times L$ lattice, evaluating the $Z_iZ_j$ expectation of the whip circuit on quantum computers within an error $\epsilon$ requires time complexity
    \begin{align}
        T_{\mathrm{QC}}=\OO\left(\frac{L^2}{\epsilon^2}\right).
        \label{eq:qc-time-complexity}
    \end{align}
\end{proposition}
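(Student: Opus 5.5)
The bound factors into two independent pieces: the cost of preparing one copy of the whip state $\ket{\phi_{\text{w}}(\theta)}$, and the number of copies needed to estimate $\langle Z_iZ_j\rangle_{\theta}$ to accuracy $\epsilon$ by sampling.

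\textbf{Cost per shot.} The circuit is applied to $\ket{+}^{\otimes N}$ with $N=L^2$, which takes one layer of Hadamards, i.e. $\OO(L^2)$ single-qubit gates. It then applies one two-qubit rotation $e^{-\im K\theta Z_iY_j/2}$ per directed edge of the DAG. The 2-d lattice has $|\EE|=2L(L-1)=\OO(L^2)$ edges, so there are $\OO(L^2)$ such rotations. Each rotation compiles into a constant number of native gates, e.g. two CNOTs, a single-qubit $Y$-rotation and basis changes. Measuring $Z_iZ_j$ is a computational-basis readout of all qubits, or just of qubits $i$ and $j$, and costs $\OO(L^2)$ at most. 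Counting total gate operations, one shot therefore costs $\OO(L^2)$.

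Parallel execution could shorten this, since the circuit depth is only $\OO(L)$ across the $2L-1$ diagonal layers. However, the proposition is stated with $L^2$, so I will use gate count as the time measure. This is the conservative choice, and it also covers any qubit-connectivity overhead.

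\textbf{Number of shots.} Each shot gives the product of the two measured bits as a value $s\in\{+1,-1\}$. This $s$ is an unbiased estimator of $\langle Z_iZ_j\rangle_{\theta}$, and it is a bounded random variable. By Hoeffding's inequality, the empirical mean of $M$ independent shots satisfies
\begin{align}
\Pr\big(|\bar s-\langle Z_iZ_j\rangle_\theta|\ge\epsilon\big)\le 2e^{-M\epsilon^2/2}.
\end{align}
Hence $M=\OO(\log(1/\delta)/\epsilon^2)$ shots suffice for success probability at least $1-\delta$, and this is $\OO(1/\epsilon^2)$ for constant $\delta$. Multiplying the two pieces gives $T_{\mathrm{QC}}=\OO(L^2/\epsilon^2)$, which is Eq.~\eqref{eq:qc-time-complexity}.

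\textbf{Main obstacle.} There is no hard step. The only point needing care is to make clear that the bound does not depend on $\theta$, and in particular holds at $\theta_c$. The sampling variance satisfies $1-\langle Z_iZ_j\rangle^2_\theta\le1$ for every $\theta$, so the shot count is uniform in $\theta$. This is exactly the contrast with the cutoff error $\epsilon(\LL_0)\sim1/\sqrt{\LL_0}$ at criticality in Eq.~\eqref{eq:cut-off-error-scaling}, which drives the classical Pauli-propagation cost.

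A second point is the comparison with the finite-$L$ lattice. If the target is the infinite-volume value of Eq.~\eqref{eq:analytic-formula}, then for a bulk link the finite-$L$ expectation equals the truncated series with $\LL_0\sim L$. The truncation error is then $\OO(1/\sqrt{L})$ at criticality, so for $\epsilon\gtrsim 1/\sqrt L$ this bias is absorbed into the stated error. I would state the proposition for the finite-$L$ expectation and note this caveat separately.
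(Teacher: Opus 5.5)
Your proposal is correct and is essentially the paper's argument: $\OO(L^2)$ $ZY$ rotations per shot times $\OO(1/\epsilon^2)$ shots. The only difference is that your Hoeffding bound gives a high-probability guarantee at the cost of a $\log(1/\delta)$ factor, replacing the paper's variance-based count $N_{\text{shot}}\geq(1-\langle ZZ\rangle_\theta^2)/\epsilon^2$, which $N_{\text{shot}}=1/\epsilon^2$ satisfies.
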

\begin{proof}[proof]
To estimate the $ZZ$ expectation within an error $\epsilon$, the required measurement shot number should satisfy~\cite{lin2022lecturenotesquantumalgorithms}
\begin{align}
    N_{\text{shot}}\geq \frac{\text{Var}_{\theta}(ZZ) }{\epsilon^2}=\frac{1-\langle ZZ\rangle_{\theta}^2}{\epsilon^2}.
\end{align}
This requirement is satisfied by choosing $N_{\text{shot}} = 1/\epsilon^2$. For each shot of the quantum computer, $\OO(L^2)$ $ZY$ rotations are performed. Thus, the total time complexity of estimating the $ZZ$ expectation reads
\begin{align}
    T_{\mathrm{QC}}=\OO\left(\frac{L^2}{\epsilon^2}\right).
\end{align}
\end{proof}
We see the time complexity of quantum computer polynomially depending on the system size as long as $\epsilon$ decreases at most polynomially to the system size. In comparison, the time complexity of naive Pauli propagation has scaling behavior given by the following proposition:



\begin{proposition}\label{prop:PP-time-complexity}
    Evaluating the $Z_iZ_j$ expectation of the whip circuit using naive Pauli propagation within an error $\epsilon$ requires time complexity
    \begin{align}
        T_{\text{naive PP}}=\OO(\epsilon^34^{1/\epsilon^2})
    \end{align}
\end{proposition}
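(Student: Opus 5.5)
The proof combines the cut-off error scaling of Eq.~\eqref{eq:cut-off-error-scaling} with a count of the Pauli strings that naive propagation has to track. The cost is dominated by the critical point $\theta=\theta_c$, where the truncation error decays only polynomially, so we work there. By Eq.~\eqref{eq:cut-off-error-scaling}, truncating the layered expansion Eq.~\eqref{eq:ZZ-full-contribution} at $\LL_0$ layers gives an error $\epsilon(\LL_0)\sim 1/\sqrt{\LL_0}$. Guaranteeing an error $\epsilon$ therefore requires propagating $Z_{0,0}Z_{0,1}$ backward through $\LL_0\sim 1/\epsilon^2$ layers. Away from $\theta_c$ the same argument gives only $\LL_0\sim\log(1/\epsilon)$, which is why the claimed bound is attained only at criticality.

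Next we bound the number of Pauli strings that naive propagation generates after $\LL_0$ layers, without the early-evaluation pruning. The backward light cone of the observable after $\LL_0$ diagonal layers covers the triangular region of sites $(x,y)$ with $x+y\le \LL_0$, roughly $\LL_0^2/2$ qubits. Naive propagation does not discard strings containing $Z$ or $Y$ on qubits that have already received their last gate: each $ZY$ rotation that fails to commute with the current string branches it into two. Every qubit in the light cone can therefore carry any of the four Paulis, and the number of distinct strings is bounded by $4^{\#\text{sites}}$. We also need the estimate to be attained up to polynomial factors, so we will argue that branching generically populates these configurations. The first two nonzero-coefficient terms in Eq.~\eqref{eq:one-propagation} already show that both $X$ and $Z$ (and $I$) appear at each site, and compositions produce $Y$.

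The delicate point is the exponent. A naive area count gives $4^{\LL_0^2}=4^{1/\epsilon^4}$, which does not match the claimed $4^{1/\epsilon^2}$. The claimed form $\epsilon^3 4^{1/\epsilon^2}$ corresponds instead to an exponent linear in $\LL_0$, times a polynomial prefactor $\LL_0^{-3/2}\sim\epsilon^3$. This suggests the intended count is over Pauli \emph{paths}, i.e.\ branches along the active front. At each layer the active front consists of the $\OO(\LL)$ qubits on the current diagonal. Each branching step multiplies the path count by a factor $\le 4$, with an effective single factor of $4$ per layer once the two local $Z$s are tracked. The number of paths is then of order $N_\CC$-like counts, $\sim 4^{\LL_0}$ times the polynomial factor from Eq.~\eqref{eq:cycle-number}, namely $\frac{4^{\LL-1}\Gamma(\LL-\frac12)}{\sqrt\pi\,\LL!}\sim 4^{\LL}\LL^{-3/2}$.

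We therefore plan to sum the branch counts $\sum_{\LL\le\LL_0}4^{\LL}\LL^{-3/2}\sim 4^{\LL_0}\LL_0^{-3/2}$. Since each branch costs $\OO(1)$ per update, substituting $\LL_0=1/\epsilon^2$ yields $T_{\text{naive PP}}=\OO(\epsilon^3 4^{1/\epsilon^2})$. The main obstacle is to justify rigorously that naive propagation really incurs the full per-layer factor of $4$ implied by Eq.~\eqref{eq:cycle-number}: the unpruned branches, such as those carrying $Z_{0,0}$ that are later discarded, must be shown not to be merged or cancelled. If that fails, the statement should be read as the cost of enumerating all contributing Pauli paths up to depth $\LL_0$, which is exactly the cumulative cycle count above.
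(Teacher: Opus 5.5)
Your final argument is essentially the paper's proof. The paper likewise takes naive Pauli propagation to enumerate every contributing Pauli path, bounds the cost by the cumulative count $\sum_{\LL\le\LL_0}\frac{4^{\LL-1}\Gamma(\LL-\frac{1}{2})}{\sqrt{\pi}\,\LL!}\sim 4^{\LL_0}\LL_0^{-3/2}$ from Eq.~\eqref{eq:cycle-number}, and sets $\LL_0\sim 1/\epsilon^2$ from the critical-point error in Eq.~\eqref{eq:cut-off-error-scaling}; your $4^{\#\text{sites}}$ light-cone count can be discarded, and the obstacle you flag is handled in the paper simply by that definition of naive propagation.
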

\begin{proof}[proof]
Naive Pauli propagation method evaluating $Z_iZ_j$ expectation by summing over contributions from all Pauli paths. Thus, it requires time complexity at least proportional to the number of Pauli paths $\NN_{\CC}$ contributing to the $ZZ$ expectation. The Pauli propagation method takes a cut-off of Pauli path weight to keep a moderate number of Pauli paths. For the Ising whip circuit, the Pauli path weight is proportional to the layer index $\LL$ introduced in the previous subsection. Thus, the Pauli path weight truncation is equivalent to a layer truncation taking $\LL$ up to $\LL_0$. According to Eq.~\eqref{eq:cycle-number}, the number of Pauli paths at one of the layer $\LL$ has the scaling behavior 
\begin{align}
    \sum_{w=1}^{\LL-1}N_{\CC}(w,\LL)=\frac{4^{\LL-1}}{\sqrt{\pi}}\frac{\Gamma(\LL-\frac{1}{2})}{\Gamma(\LL+1)}\sim \OO\left(\frac{4^{\LL}}{\LL^{3/2}}\right).
\end{align}
Due to the dominant exponential factor $4^{\LL}$, the total number of Pauli paths up to the truncation layer $\LL_0$ has the same scaling behavior
\begin{align}
    \NN_{\CC}= \sum_{\LL=1}^{\LL_0}\sum_{w=1}^{\LL-1}N_{\CC}(w,\LL)\sim\OO\left(\frac{4^{\LL_0}}{\LL_0^{3/2}}\right).
    \label{eq:total-number-of-Pauli-paths}
\end{align}
Then, according to Eq.~\eqref{eq:cut-off-error-scaling}, for a given error $\epsilon$, in the worst case, i.e., $\theta=\theta_c$, the cut-off $\LL_0$ should grow inverse polynomially to $\epsilon$ as $\LL_0\sim 1/\epsilon^2$. Combined with Eq.~\eqref{eq:total-number-of-Pauli-paths}, the time complexity of the naive Pauli propagation method reads
\begin{align}
        T_{\text{naive PP}}=\OO(\NN_{\CC})  = \OO(\epsilon^34^{1/\epsilon^2}).
        \label{eq:naive-Pauli-scaling}
\end{align}
\end{proof}

Comparing the above two propositions, the time complexity of quantum computers and that of the naive Pauli propagation method are most distinguished if we require the error $\epsilon$ suppressed polynomially with the system size $L$. In this regime, we have the time complexity
\begin{align}
    T_{\text{QC}} \sim\OO(\text{poly}(L))
\end{align}
and 
\begin{align}
    T_{\text{naive PP}}= \OO(4^{\text{poly}(L)}).
\end{align}
On the other hand, the exponential time scaling of $T_{\text{naive PP}}$ is reduced to polynomial for $\theta\neq \theta_c$. Since in this regime, following the derivation from Eq.~\eqref{eq:total-number-of-Pauli-paths} to \eqref{eq:naive-Pauli-scaling}, the time complexity is $T_{\text{naive PP}}\sim \left(\frac{1}{\epsilon \log b}\right)^{\log_{b}2}$ with $b=1/|\sin 2\theta|$, which is polynomial to $L$ for $1/\epsilon \sim \text{poly}(L)$.
Thus, the Ising whip circuit cannot be efficiently simulated by the naive Pauli propagation method at and only at the phase transition point $\theta=\theta_c$. This failure point of the naive Pauli propagation method corresponds to the point out of the $1-\delta$ fraction of the random circuits in Ref.~\cite{Angrisani2025}. However, this point is important since it corresponds to the global minimum of the ferromagnetic and anti-ferromagnetic Ising system.

\subsection{Correlation function}\label{sec:correlation-function}

\begin{figure}
    \centering
    \includegraphics[width=0.5\textwidth]{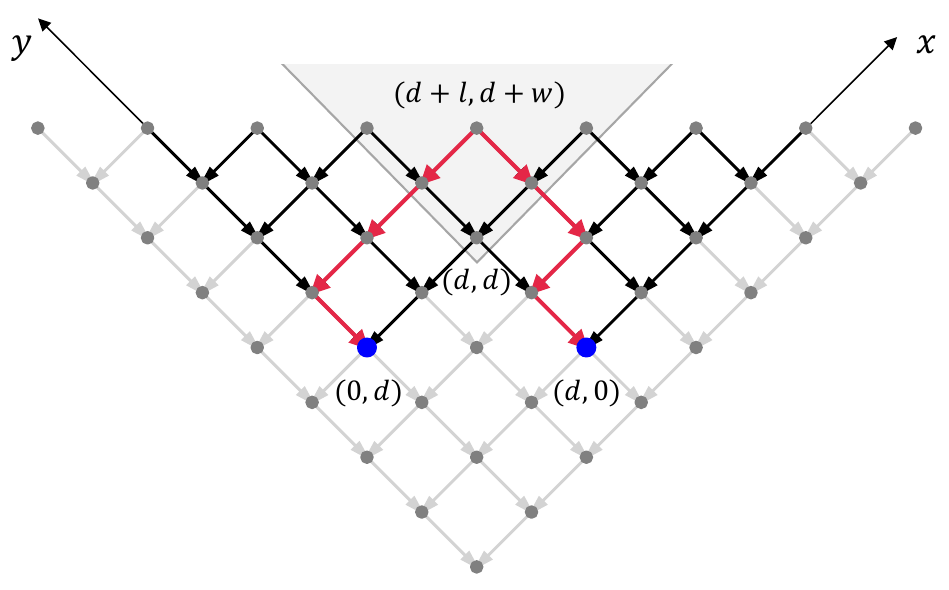}
    \caption{The coordinate system of evaluating the correlation function $\langle Z_{d,0} Z_{0,d}\rangle_{\theta}$. The red arrows denote a non-local Pauli path. The black arrows denote $ZY$ rotations conjugate non-trivially on the observable $Z_{d,0} Z_{0,d}$.}
    \label{fig:correlation-lattice}
\end{figure}

The connected correlation function of two spatially departed local operators in the whip state decays exponentially with their distance
\begin{align}
    C(i,j)\equiv \bra{\phi_{\text{w}}(\theta)}O_i O_j\ket{\phi_{\text{w}}(\theta)}-\bra{\phi_{\text{w}}(\theta)}O_i\ket{\phi_{\text{w}}(\theta)}\bra{\phi_{\text{w}}(\theta)}O_j\ket{\phi_{\text{w}}(\theta)}\sim e^{-|i-j|/\xi}
\end{align}
where $\xi$ is the correlation length, $|i-j|$ is the Manhattan distance of the spatial sites $i$ and $j$. At the critical point $\theta=\theta_c=\pm \pi/4$, the Ising ferromagnetic ground state (with $O_j=Z_j$) and the anti-ferromagnetic one (with $O_j=(-1)^{|j|}Z_j$) in Eq.~\ref{eq:ground-states} have $C(i,j)=1$, corresponding to divergent correlation length $\xi\to \infty$. The critical exponent $\nu$ describe this divergence behavior as $\xi\sim |\theta-\theta_c|^{-\nu}$.


To characterize $\xi$, we evaluate the correlation function $\langle Z_{i}Z_{j}\rangle_{\theta}$ across the 2-d lattice in the limit of $L\rightarrow\infty$. Since the whip circuit is anisotropic, $\langle Z_{i}Z_{j}\rangle_{\theta}$ depends on the horizontal or vertical relative direction of $i$ to $j$.
Here, we focus on the horizontal case $\langle Z_{d,0}Z_{0,d}\rangle$, which is the correlation between sites diagonally separated by a Manhattan distance of $2d$, as illustrated in Fig.~\ref{fig:correlation-lattice}. We can think about these operators as being located in the $d$th cross-section, i.e. $\LL=d$ in Fig.~\ref{fig:loops_analytical_ZZ}, such that the case $d=1$ reproduces the calculation of the previous section.

We now follow the same procedure of calculating the Hamiltonian expectation value. First we note that the earliest point where the operators can meet is $(d,d)$. Then they can meet at one of the sites in the gray-shaded triangle region in Fig.~\ref{fig:correlation-lattice}. Similar to the previous section, the LGV lemma gives the number of non-intersecting paths from the initial configuration to meeting at $(d+l, d+w)$ as
\begin{align}
    N^d(w,l)&=
    \begin{pmatrix}
        d+l+w-1\\w
    \end{pmatrix}
    \begin{pmatrix}
        d+l+w-1\\l
    \end{pmatrix}
    -
    \begin{pmatrix}
        d+l+w-1\\w-1
    \end{pmatrix}
    \begin{pmatrix}
        d+l+w-1\\l-1
    \end{pmatrix}.
    \label{eq:correlation-com-factor}
\end{align}
For the purpose of calculating the expectation value we define $h\equiv l+w$, such that we have the expectation
\begin{equation}
    \begin{aligned}
    \langle Z_{d,0}Z_{0,d} \rangle_\theta 
    &=\sum_{h=0}^{\infty} \sum_{w=0}^{h} N^d(w,h-w) (\cos(\theta)\sin(\theta))^{2h+2d}\\
    &=\sum_{h=0}^{\infty}
    \frac{2d \Gamma(2(d+h))}{h!(2d+h)!}\frac{1}{2^{2h+2d}}
    \sin(2\theta)^{{2h+2d}}\\
    &=\left[\frac{\sin(2\theta)}{1+|\cos(2\theta)|}\right]^{2d}\\
    &=\langle Z_{1,0}Z_{0,1}\rangle^{2d},
    \label{eq:analytic-correlation}
\end{aligned}
\end{equation}
with the detailed derivation following a similar structure to Eq.~(\ref{eq:cycle-contribution}-\ref{eq:expect}). Since the on-site terms $Z_i$ and $Z_j$ anti-commute with the bit-flip operator, their expectation values vanish as $\langle Z_i\rangle_{\theta}=\langle Z_j\rangle_{\theta}=0$. Thus, the connected correlation function reads 
\begin{align}
    C(i,j) = \langle Z_{d,0}Z_{0,d} \rangle_\theta &=\exp\{-2d \log(\frac{1+|\cos(2\theta)|}{\sin(2\theta)})\}=\exp\{-2d \log(\frac{\sin(2\theta)}{1-|\cos(2\theta)|})\}.
\end{align}
Thus, in the vicinity of the critical point $\varphi\equiv \theta-\theta_c \ll 1, \theta_c = \pi/4$, the correlation length is
\begin{align}
    \xi^{-1} =\log(\frac{\cos(2\varphi)}{1-|\sin(2\varphi)|}),
    \label{eq:correlation-length-exact}
\end{align}
whose scaling behavior
\begin{equation}
     \xi^{-1}=  2|\varphi|+\OO(|\varphi|^3)
     \label{eq:analytic-correlation-length}
\end{equation}
corresponds to the critical exponent $\nu=1$.

\begin{figure}
    \centering
    \includegraphics[width=0.9\textwidth]{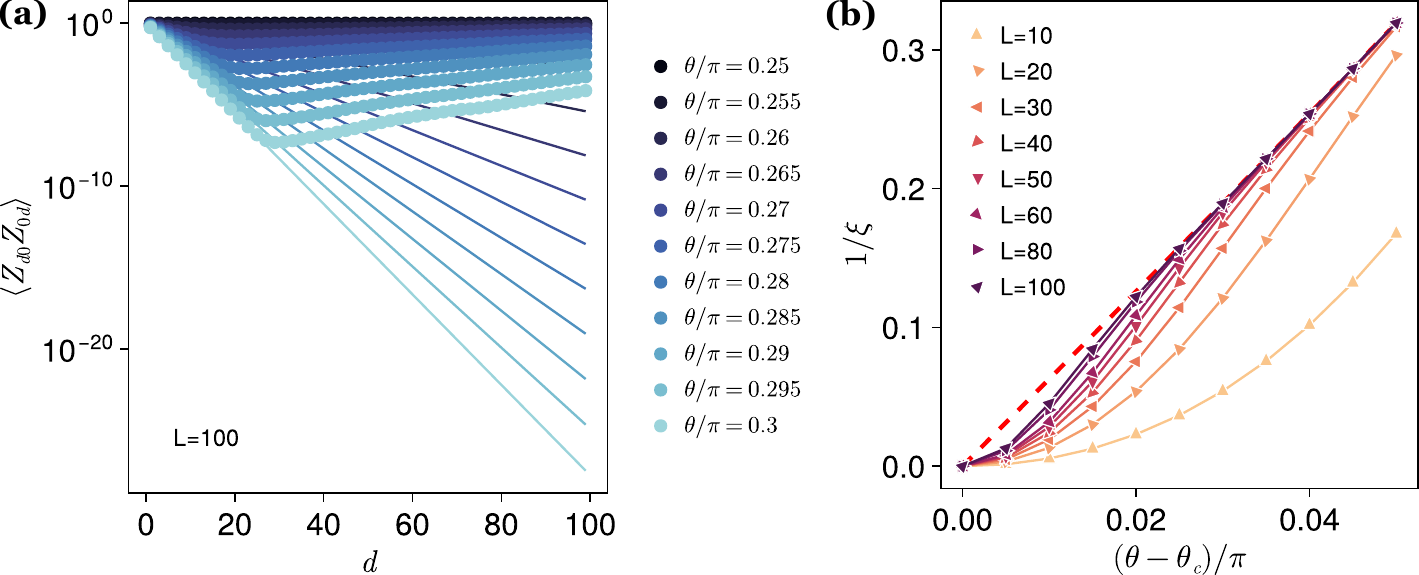}
    \caption{(\textbf{a}) The numerical correlation function $\langle Z_{d,0}Z_{0,d} \rangle$ on a $100\times100$ lattice using PEPS. Solid lines correspond to the analytical results in the $L\to\infty$ limit. (\textbf{b}) Numerical results of the correlation length diverge near the critical point $\theta_c = \pi/4$. The dashed line represents the analytical expression of the correlation length by Eq.~\eqref{eq:correlation-length-exact}.}
    \label{fig:correlation length}
\end{figure}

Figure~\ref{fig:correlation length}(\textbf{a}) plots the numerical results of evaluating the correlation function on a $100\times100$ lattice using PEPS, with different colors corresponding to different values of $\theta$.
Details of the PEPS setup are given in Note \ref{sec:Tensor network time complexity}.
The correlation function exhibits a clear exponential decay at small separations $d$, while its value rises at larger $d$ due to the finite volume effect. The solid lines correspond to the analytical results given by Eq.~\eqref{eq:analytic-correlation}, which agree well with the numerical data in the small-$d$ regime.

Therefore, we compute $\xi$ by extracting the correlation length from the short-distance decay. 
For different system sizes, we plot $\xi$ as a function of $\theta-\theta_c$ in Fig.~\ref{fig:correlation length}(\textbf{b}).
The dashed line indicates the theoretical scaling $\xi^{-1} \sim |\theta-\theta_c|$ in the infinite volume limit. While the finite-size data at $L=10$ deviates from the analytic curve by Eq.~\eqref{eq:correlation-length-exact}, the numerical curves progressively approach the analytic one with the critical exponent $\nu=1$ as the system size increases.
Notably, the deviation is larger as $\theta-\theta_c$ approaches zero due to the strong finite volume effect near the critical point.

\subsection{Order parameter}\label{sec:order-parameter}

\begin{figure}
    \centering
    \includegraphics[width=0.5\textwidth]{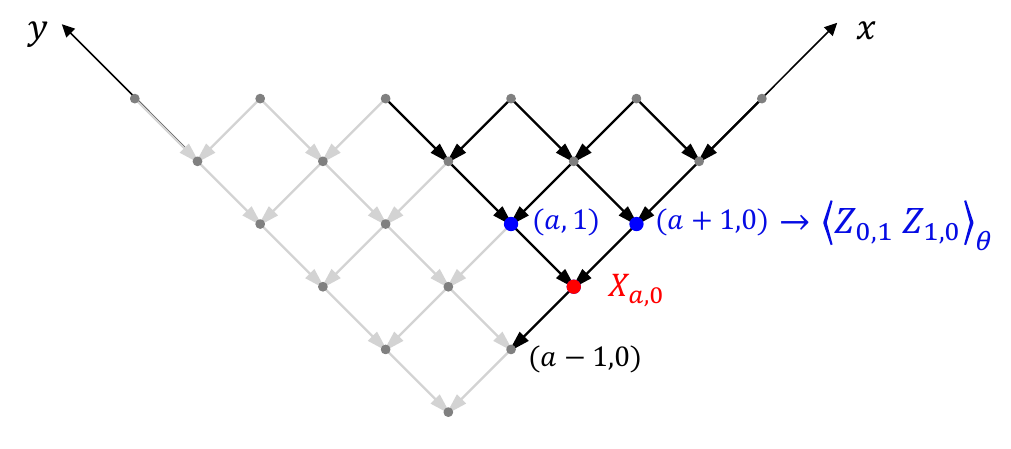}
    \caption{The coordinate system of evaluating the order parameter $\langle X_{a,0}\rangle_{\theta}$. The black arrows denote $ZY$ rotations that conjugate non-trivially on the observable $X_{a,0}$.}
    \label{fig:order-lattice}
\end{figure}

In the main text, we choose the order parameter to be the averaged expectation of local Pauli-$X$s at the lower boundary $\mathcal{B}'$ of the 2-d lattice. In the infinite volume limit, the averaged expectation converges to the expectation of a local $X_{a, 0}$ at the lower boundary of the 2-d lattice, where $a>0$ is an arbitrary positive integer as illustrated by the red dot in Fig.~\ref{fig:order-lattice}. 

Calculating the expectation $\langle X_{a, 0}\rangle_{\theta}$ can be reduced to that of the correlation function $\langle Z_{1,0}Z_{0,1}\rangle_{\theta}$.
To see this, note that the first gate conjugating non-trivially on $X_{a, 0}$ is $e^{-\ii \theta Z_{a, 0}Y_{a-1,0}/2}$, which results in
\begin{equation}
    e^{\ii\theta Z_{a, 0}Y_{a-1, 0}/2}X_{a,0}e^{-\ii\theta Z_{a, 0}Y_{a-1,0}/2} = \cos(\theta)X_{a, 0} - \sin(\theta) Y_{a,0}Y_{a-1,0}.
\end{equation}
The second term proportional to $YY$ does not contribute to the expectation value, since the local $Y_{a-1, 0}$ is unaffected by any future $ZY$ rotations, and vanishes since the initial state gives $\bra{+}Y\ket{+}=0$. Therefore, only the original operator $X_{a, 0}$ remains, with a coefficient of $\cos(\theta)$. The next two gates $e^{-\ii\theta Z_{a+1 , 0}Y_{a, 0}/2}$ and $e^{-\ii\theta Z_{a , 0}Y_{a, 1}/2}$ conjugate on $X_{a, 0}$ as
\begin{equation}
    \cos(\theta) X_{a, 0}\rightarrow \cos^3(\theta) X_{a, 0}+\cos^2(\theta)\sin(\theta)Z_{a+1, 0}Z_{a, 0}+\cos^2(\theta)\sin(\theta)Z_{a, 0}Z_{a, 1}-\cos(\theta)\sin^2(\theta) Z_{a+1, 0} X_{a, 0} Z_{a,1}.
    \label{eq:order-params-first-step}
\end{equation}
In this expression, the first term proportional to $X_{a,0}$ does conjugate by further $ZY$ rotations, leaving the final contribution $\cos^3(\theta)\bra{+}X\ket{+} = \cos^3(\theta)$ to the expectation value. The second and third terms do not contribute to the expectation value due to $\bra{+}Z\ket{+}=0$, and only the fourth term $Z_{a+1, 0} X_{a, 0} Z_{a,1}$ further generates non-local Pauli paths.

For the fourth term $Z_{a+1, 0} X_{a, 0} Z_{a,1}$, further $ZY$ rotations conjugate trivially on the site $(a,0)$, such that $\langle Z_{a+1, 0} X_{a, 0} Z_{a,1}\rangle_{\theta}$ = $\langle Z_{a+1, 0} Z_{a,1}\rangle_{\theta}$ due to $\bra{+}X_{a, 0}\ket{+} = 1$. Thus, we only need to consider the Pauli propagation of $Z_{a+1, 0} Z_{a,1}$, as illustrated by the blue dots in Fig.~\ref{fig:order-lattice}. This Pauli propagation is the same as that of $\langle Z_{d,0}Z_{0,d}\rangle_{\theta}$ with $d=1$ in Eq.~\eqref{eq:analytic-correlation}. Thus, we have
\begin{align}
    \langle Z_{a+1, 0}X_{a, 0} Z_{a,1} \rangle_{\theta} =\left(\frac{\sin(2\theta)}{1+|\cos(2\theta)|}\right)^2.
\end{align}
Inserting this into Eq.~\eqref{eq:order-params-first-step} gives
\begin{align}
    \langle X_{a, 0} \rangle_{\theta} &= \cos^3(\theta) -\cos(\theta) \sin^2(\theta) \langle Z_{a+1, 0} X_{a, 0} Z_{a,1}  \rangle_{\theta} =\frac{\cos(2\theta)+|\cos(2\theta)|}{2\cos(\theta)}.
\end{align}
This gives the analytic formula of the order parameter in the main text.

\subsection{Ising whip circuit without phase transition}\label{sec:Whip circuits without phase transition}

\begin{figure}
    \centering
    \includegraphics[width=0.7\textwidth]{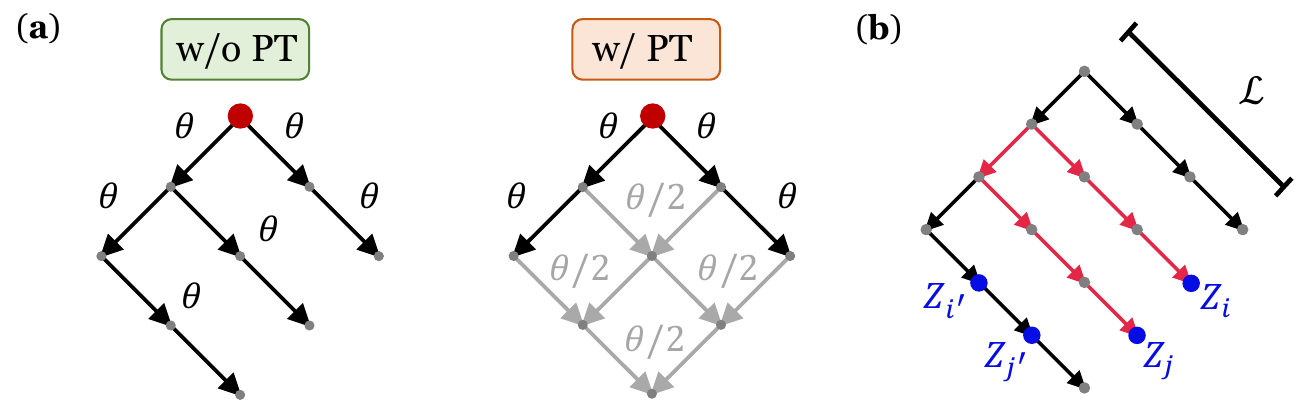}
    \caption{(\textbf{a}) Illustration of the 2-d Ising whip circuit without phase transition (left panel). $ZY$ rotations $e^{-\im\theta Z_iY_j}$ and $e^{-\im\theta Z_iY_j/2}$ are denoted by black and gray arrows, respectively. The circuit has the same sequential structure with the whip circuit with phase transition (right panel) introduced in the main text. The red dot is the DAG source site. (\textbf{b}) Evaluating the expectation of local observable $Z_i Z_j$ in the whip circuit without phase transition. The red-highlighted arrows together with the edge $Z_iZ_j$ consist a cycle contributing to the expectation.} 
    \label{fig:Ising_withoutPT}
\end{figure}

Here we present the 2-d Ising whip circuit without phase transition, which is equivalent to the original sequential circuit proposed in Ref.~\cite{Chen_2024}. The circuit uses the $ZY$ rotations $e^{-\im\theta Z_iY_j}$ as the basic gates, and the rotations are applied sequentially starting from the source site of the DAG, as illustrated in the left panel of Fig.~\ref{fig:Ising_withoutPT}(\textbf{a}). This Ising whip circuit has comparable circuit depth and basic gates to the one with phase transition (right panel of Fig.~\ref{fig:Ising_withoutPT}(\textbf{a})). Additionally, the circuit also exactly prepares the ground state of the 2-d ferromagnetic Ising Hamiltonian $H=-\frac{1}{|\EE|}\sum_{\langle i,j\rangle} Z_iZ_j$, since the GHZ state is generated by $\theta=-\pi/4$~\cite{Chen_2024}
\begin{align}
    \ket{\phi_{\text{w}}^{\text{w/o}}(\theta)}=\left.\prod_{\langle i,j\rangle '} e^{-\im\theta Z_i Y_j}\ket{+}^{\otimes N}\right|_{\theta=-\pi/4}=\frac{1}{\sqrt{2}}(\ket{0}^{\otimes N}+\ket{1}^{\otimes N}).
\end{align}
where $\langle i,j\rangle '$ denotes the nearest-neighbor coupling restricted to the tree-like subgraph in the left panel of Fig.~\ref{fig:Ising_withoutPT}(\textbf{a}).  Beyond $\theta=-\pi/4$, the energy landscapes of $|\phi_{\text{w}}^{\text{w/o}}(\theta)\rangle$ and the one with phase transition $|\phi_{\text{w}}(\theta)\rangle$ are different for general $\theta$ values. In the following, we show that the energy density of $|\phi_{\text{w}}^{\text{w/o}}(\theta)\rangle$ is analytic, indicating that the circuit has no phase transition. 

For an $L\times L$ lattice with $|\EE|=2L(L-1)$ edges, the local $ZZ$ terms in the Hamiltonian are categorized into two kinds: $ZZ$ of the first kind has no $ZY$ rotation applied on the corresponding qubits, and of the second kind has a $ZY$ rotation applied, as illustrated by $Z_{i}Z_{j}$ and $Z_{i'}Z_{j'}$ in Fig.~\ref{fig:Ising_withoutPT}(\textbf{b}), respectively. The $ZZ$ expectation of the first kind only has one cycle contribution as illustrated by the red arrows in Fig.~\ref{fig:Ising_withoutPT}(\textbf{b}). Since the cycle length is $2\LL+2$, the expectation reads
\begin{align}
    \bra{\phi_{\text{w}}^{\text{w/o}}(\theta)} Z_iZ_j
\ket{\phi_{\text{w}}^{\text{w/o}}(\theta)} = (-\sin 2\theta)^{2\LL+1}=-(\sin 2\theta)^{2\LL+1}.
\end{align}
The $ZZ$ expectation of the second kind only has one local contribution, which gives
\begin{align}
    \bra{\phi_{\text{w}}^{\text{w/o}}(\theta)} Z_{i'}Z_{j'}
\ket{\phi_{\text{w}}^{\text{w/o}}(\theta)} = -\sin 2\theta,
\end{align}
and there are $L^2-1$ such $ZZ$ terms in the Ising Hamiltonian. Combining these two kinds of $ZZ$ terms, the Ising energy density is derived from the Hamiltonian expectation
\begin{equation}
    \begin{aligned}
    \bra{\phi_{\text{w}}^{\text{w/o}}(\theta)} H
\ket{\phi_{\text{w}}^{\text{w/o}}(\theta)}=&\frac{1}{2L(L-1)}[(L^2-1)\sin2\theta +(L-1)\sum_{\LL=1}^{L-1}(\sin2\theta)^{2\LL+1} ]\\
=&\frac{1}{2L}[(L+1)\sin2\theta +\sum_{\LL=1}^{L-1}(\sin2\theta)^{2\LL+1}].
\end{aligned}
\end{equation}
In the infinite volume limit $L\to\infty$, the energy density reads 
\begin{align}
    \lim_{L\to\infty}\bra{\phi_{\text{w}}^{\text{w/o}}(\theta)} H
\ket{\phi_{\text{w}}^{\text{w/o}}(\theta)}=\left\{ \begin{array}{ll}
 -1, & \textrm{if $\theta= -\pi/4+m\pi$;}\\
 1, & \textrm{if $\theta= \pi/4+m\pi$;}\\
 -\frac{1}{2}\sin2\theta & \textrm{otherwise,}
  \end{array} \right.
\end{align}
where $m\in\ZZ$. Thus, the energy density is analytic for all $\theta$ with only removable singularities at $\theta=-\pi/4+m\pi/2, m\in\ZZ$. Removable singularities usually indicate a crossover but not a phase transition. Thus, the whip circuit state $\ket{\phi_{\text{w}}^{\text{w/o}}(\theta)}$ has no phase transition.

\subsection{VQE performance}\label{sec:VQE-performance}
\begin{figure}
    \centering
    \includegraphics[width=0.48\textwidth]{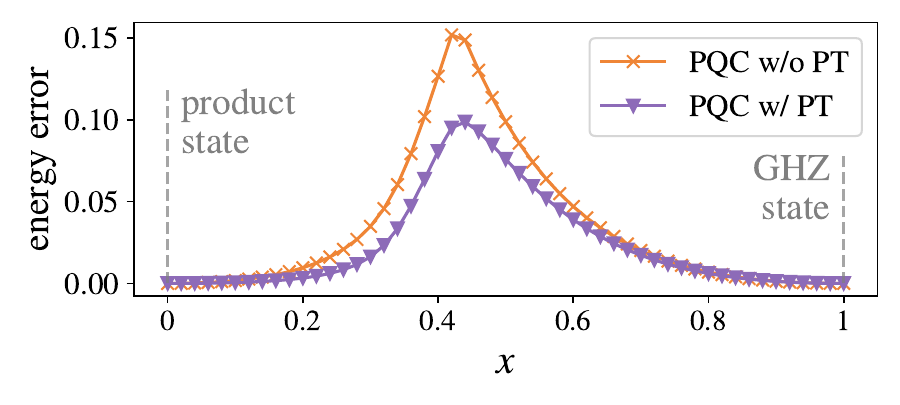}
    \caption{Relative error of the ground state energy $\left|(E_{\text{VQE}}-E_{\text{ED}}\right)/E_{\text{ED}}|$ obtained by VQE and exact diagonalization (ED) as a function of the coupling strength $x$ of the transverse-field Ising model. The errors using PQC with and without phase transition (PT) are marked by lower-triangles and crosses, respectively.}
    \label{fig:VQE_TFIM}
\end{figure}

The Ising whip circuit can be used as a VQE ansatz for high-accuracy state preparation. As a benchmark, we consider the transverse-field Ising model (TFIM), $H_{\text{TFIM}}(x) = -(1-x)\sum_{i\in\Lambda}X_i-x\sum_{\langle i,j\rangle}Z_iZ_j$, on a $4\times 4$ square lattice. 
We compare the performance of the Ising whip circuits with and without phase transition illustrated in Fig.~\ref{fig:Ising_withoutPT}. Both heuristic ans\"atze accurately prepare the ground states of $H_{\text{TFIM}}(x)$ at $x=0$ and $x=1$, i.e., the product state $\ket{+}^{\otimes N}$ and the GHZ state. Beyond these two ends, for each $x\in[0,1]$, variational parameters $\theta$ in both heuristic ans\"atze are optimized to the minimum energy expectation, and the corresponding approximate ground state energy $E_{\text{VQE}}$ is derived. The resulting relative error as a function of $x$ is plotted in Fig.~\ref{fig:VQE_TFIM}. We see that the PQC with phase transition consistently achieves lower energy errors for all $x\in[0,1]$. In particular, the worst-case relative error is reduced from $0.15$ to $0.10$, corresponding to an improvement of 33\%. Thus, a PQC with phase transition signals improved state-preparation accuracy as a VQE ansatz. 

\section{Phase diagram of the 2-d whip circuit}\label{sec:Proof of the equal order}

In this section, we describe the phase diagram of quantum states generated by the $2$-d Ising whip circuit. The phase diagram has a period of $2\pi$ because the $ZY$ rotations in the bulk of the lattice $e^{-\im \theta Z_iY_j/2}$ (with $\deg^-_j=2$)  and at the boundary of the lattice $e^{-\im \theta Z_iY_j}$ (with $\deg^-_j=1$) has the least common period of $2\pi$ (up to a global phase). In the following subsections, we focus on the representative states in two phases: (1) The states with the whip angle $\theta=0,\pi$; (2) The states with $\theta=\pm\pi/2$; and (3) the GHZ-type states at the critical points $\theta=\pm\pi/4, \pm 3\pi/4$. These representative states are summarized in the phase diagram Fig.~(6) of the main text.

\subsection{$\theta=0,\pi$}
The whip states at $\theta=0,\pi$ are product states. Because for $\theta=0$, the state is the initial state $\ket{\phi_{\text{w}}(0)}=\ket{+}^{\otimes N}$. For $\theta=\pi$, $ZY$ rotations at the upper boundary and in the bulk of the 2-d lattice respectively read
\begin{align}
    e^{-\im\pi Z_iY_j} = -1;\quad
    e^{-\im \pi Z_iY_j/2} = -\im Z_iY_j.
\end{align}
Since $Z$ and $Y$ flip the initial product state by $Z\ket{\pm} = \ket{\mp}$ and $Y\ket{\pm} = \pm \im\ket{\mp}$. The resulting whip state $\ket{\phi_{\text{w}}(\pi)}$ is a product of $\ket{+}$ and $\ket{-}$ at each qubit, where $\ket{\pm}$ depends on the parity of the $Z$ and $Y$ number applied on the specific qubit. $\ket{\phi_{\text{w}}(0)}$ and $\ket{\phi_{\text{w}}(\pi)}$ on a $4\times 4$ lattice is illustrated in Fig.~\ref{fig:whip_states}(\textbf{a}), corresponding to the two symmetry breaking phase characterized by the order parameter $\langle \partial X\rangle_{\theta}$.


\subsection{$\theta=\pm\pi/2$}
The two whip states at $\theta=\pm\pi/2$ are Clifford states, since the $ZY$ rotations at $\theta=\pm\pi/2$ are Clifford: the $ZY$ rotations with $\theta=\pm\pi/2$ at the upper boundary of the lattice are Clifford Pauli operators $-\im ZY$, and the rotations in the bulk of the lattice can be decomposed to the standard Clifford gates shown in Fig.~\ref{fig:whip_states}(\textbf{b}). Therefore, since the initial state $\ket{+}^{\otimes N}$ is a Clifford state, the whip states $\ket{\phi_{\text{w}}(\pm\pi/2)}$ are also Clifford. Although $\ket{\phi_{\text{w}}(0)}$ and $\ket{\phi_{\text{w}}(\pi)}$ are also Clifford, their Clifford stabilizers are completely different, leading to the distinguished entanglement and order parameter behavior shown in the main text.

\begin{figure}
    \centering
    \includegraphics[width=0.98\textwidth]{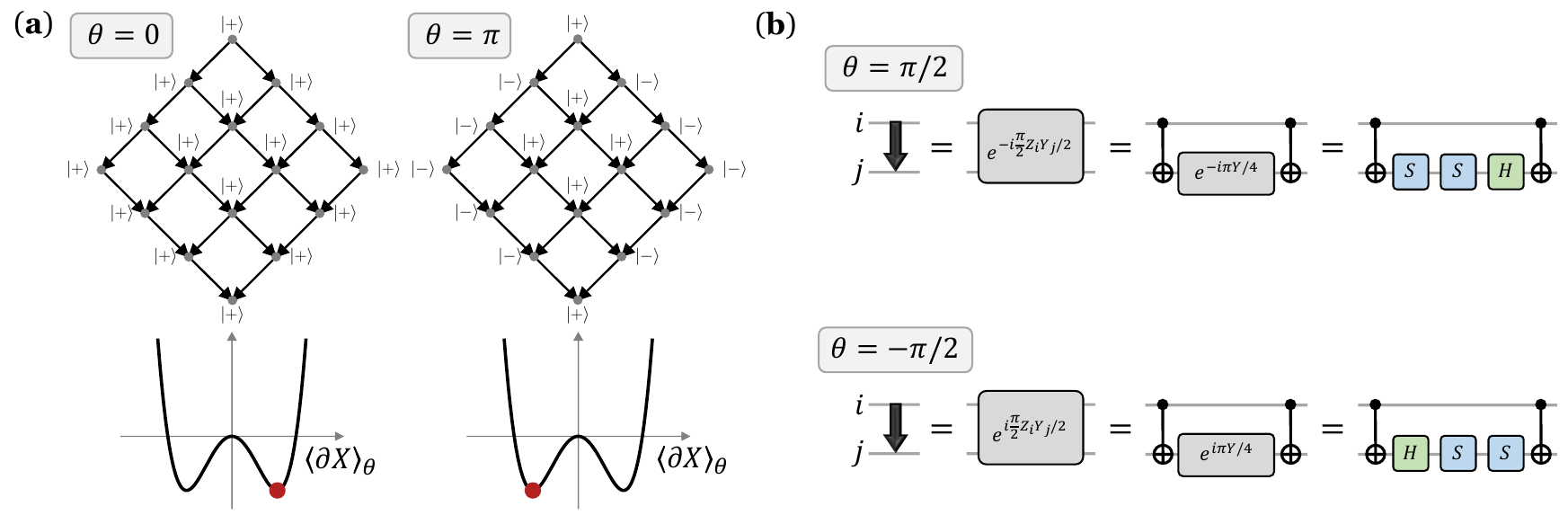}
    \caption{(\textbf{a}) Illustration of the product state $\ket{\phi_{\text{w}}(0)}$ and $\ket{\phi_{\text{w}}(\pi)}$ on the $4\times 4$ lattice, corresponding to the two symmetry breaking phases characterized by the order parameter $\langle \partial X\rangle_{\theta}$. (\textbf{b}) The $ZY$ rotations with the whip angle $\theta=\pm\pi/2$ can be decomposed into the standard Clifford gates $\{H, S, \text{CNOT}\}$. Thus, $\ket{\phi_{\text{w}}(\pm\pi/2)}$ are Clifford states.}
    \label{fig:whip_states}
\end{figure}

\subsection{$\theta=\pm\pi/4,\pm3\pi/4$}
In Supp. Note~\ref{sec:Ising whip circuit on arbitrary dimensions}, we show that $\ket{\phi_{\text{w}}(\pi/4)}$ and $\ket{\phi_{\text{w}}(-\pi/4)}$ are the anti-ferromagnetic Ising ground state and ferromagnetic ground state, respectively. Additionally, according to the symmetry transformation $T:\theta\rightarrow \theta+\pi$, we have 
\begin{equation}
    \begin{aligned}
    \ket{\phi_{\text{w}}(-3\pi/4)}&=e^{i\phi} \hat{T}^{-1}\ket{\phi_{\text{w}}(\pi/4)}=e^{i\phi} \hat{T}\ket{\phi_{\text{w}}(\pi/4)}=e^{i\phi} \ket{\phi_{\text{w}}(\pi/4)}\\
    \ket{\phi_{\text{w}}(3\pi/4)}&=e^{i\phi'}\hat{T}\ket{\phi_{\text{w}}(-\pi/4)}=e^{i\phi'}\ket{\phi_{\text{w}}(-\pi/4)}
\end{aligned}
\end{equation}
where $\phi$ and $\phi'$ are some irrelevant global phases. In the final step of the two formulae, we use the fact that both the anti-ferromagnetic and ferromagnetic GHZ states are invariant by applying the symmetry operator $\hat{T}=\prod_{i\in\BB}Z_i$. Thus, $\ket{\phi_{\text{w}}(-3\pi/4)}$ and $\ket{\phi_{\text{w}}(3\pi/4)}$ are also the anti-ferromagnetic and ferromagnetic ground states. These two states are respectively denoted by the blank and filled dots in the phase diagram Fig.~(6).

\section{Weight-adjustable loop ansatz with and without phase transition}\label{app:WALA}

In this note, we describe the weight-adjustable loop ansatz (WALA) with and without phase transition on the 3-dimensional $L\times L\times 2$ lattice illustrated in Fig.~\ref{fig:WALA-general}(\textbf{b}). The lattice has $(L-1)\times (L-1)$ cubes, and each cube is composed of 6 plaquettes and 12 links. Each link of the lattice is assigned with one qubit (black dot) initialized in the $\ket{0}$ state. For WALA with phase transition (PT), in each plaquette of the lattice with qubits $\{i,j,k,l\}$, we assign a plaquette rotation $e^{-\im\theta X_iX_jX_kY_l/N^Y_l}$ denoted by a circle oriented to the qubit $l$, as shown in Fig.~\ref{fig:WALA-general}(\textbf{c}). The plaquette rotation is realized by one $R_Y(\theta/N^Y_l) = e^{-\im\theta Y_l/N^Y_l}$ and $6$ CNOT gates. $N^Y_l$ is the total number of $R_Y$ rotations applied on the qubit $l$ in the WALA, which is similar to the role of $\deg^{-}_j$ in the Ising whip circuit.

\begin{figure}
    \centering
    \includegraphics[width=0.9\linewidth]{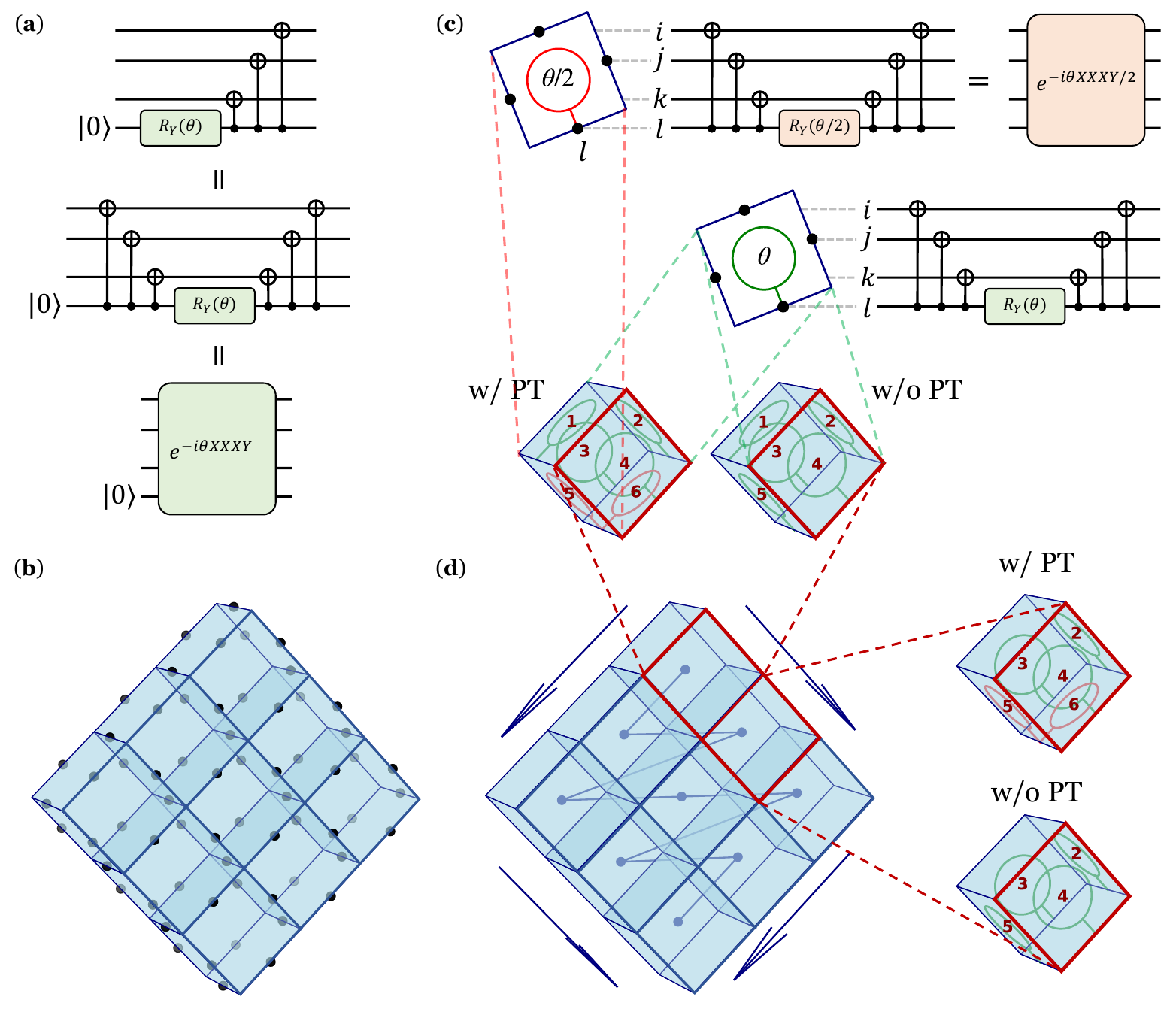}
    \caption{Weight-adjustable loop ansatz (WALA) with and without phase transition (PT). (\textbf{a}) The original WALA~\cite{PhysRevB.107.L041109,Cochran_2025} preparing the toric code state has the basic gate equivalent to the plaquette rotation $e^{-i\theta XXXY}$. (\textbf{b}) An $L\times L\times 2$ lattice of the $\ZZ_2$ gauge theory. Each link of the lattice is assigned by one qubit (black dot). (\textbf{c}) WALA in a single cube of the lattice apply plaquette rotations sequentially following the order \textbf{1}-\textbf{6} and \textbf{1}-\textbf{5} respectively for the WALA with and without PT. The red and green oriented cycles denote plaquette rotations $e^{-i\theta XXXY/2}$ and $e^{-i\theta XXXY}$, respectively. (\textbf{d}) WALA in the whole lattice apply plaquette rotations sequentially from the the top cube to the bottom one explicitly denoted by the broken line inside the lattice.}
    \label{fig:WALA-general}
\end{figure}

The plaquette rotation $e^{-\im\theta X_iX_jX_kY_l}$ is equivalent to the original WALA proposed in Ref.~\cite{PhysRevB.107.L041109}. The original WALA applies a $R_Y$ rotation to the qubit $l$ initialized to the state $\ket{0}$, then three CNOTs are applied to the remaining three qubits of the plaquette, as shown in the first row of Fig.~\ref{fig:WALA-general}(\textbf{a}). Since the qubit $l$ is initialized as $\ket{0}$, the circuit in the first row is equivalent to the second row, and thus the whole circuit is equivalent to the rotation $e^{-\im\theta XXXY}$. Additionally, the plaquette rotation is closely related to the imaginary time evolution of the plaquette operator $X_p^{\square} = XXXX$~\cite{PhysRevA.111.032612}, because
\begin{equation}
    \begin{aligned}
    e^{\tau XXXX}\ket{0000} &= [\cosh(\tau)+\sinh(\tau) XXXX]\ket{0000}=\cosh(\tau)\ket{0000}+\sinh(\tau) \ket{1111}\\
    &\propto [\sin(\theta)-i\cos(\theta) XXXY]\ket{0000}=e^{-\im\theta XXXY}\ket{0000}.
\end{aligned}
\end{equation}
Thus, WALA chooses the basic gates following the same idea as the Ising whip circuit.

The WALA state can be written as
\begin{align}
    \ket{\phi_{\mathbb{Z}_2}^{\text{w/}}(\theta)} \equiv \prod_{p=\{i,j,k,l\}} e^{-\im\theta X_iX_jX_kY_l/N^Y_l}\ket{0}^{\otimes N},
    \label{eq:WALA-state}
\end{align}
and the WALA state without PT is denoted by $\ket{\phi_{\mathbb{Z}_2}^{\text{w/o}}(\theta)}$ which is similar to $\ket{\phi_{\mathbb{Z}_2}^{\text{w/}}(\theta)}$ as we discuss later. In the following content, we describe the order of the product $\prod_{p=\{i,j,k,l\}}$ in Eq.~\eqref{eq:WALA-state}, and show that the exact toric code state 
\begin{align}
    \ket{\Psi_{\text{T}}}\equiv \prod_{p} \left(\frac{1}{\sqrt{2}}\II+\frac{1}{\sqrt{2}}X_p^{\square}\right) \ket{0}^{\otimes N}
\end{align}
is prepared exactly by choosing $\theta=\pi/4$ in $\ket{\phi_{\mathbb{Z}_2}^{\text{w/}}(\theta)}$.

\noindent \textbf{Product order in one cube}~---~The plaquette rotations are applied sequentially following the order $\mathbf{1}$-$\mathbf{6}$ and $\mathbf{1}$-$\mathbf{5}$ for WALA with and without PT respectively, as shown by the two cubes at the bottom of Fig.~\ref{fig:WALA-general}(\textbf{c}). These two orders ensure that the plaquette rotation has $R_Y$ applied directly to the $\ket{0}$ state, such that the toric code state can be correctly prepared at $\theta=\pi/4$~\cite{KITAEV20032,PhysRevB.107.L041109}. The only difference between WALA with and without PT is at the two bottom plaquettes of the two cubes in Fig.~\ref{fig:WALA-general}(\textbf{c}). In WALA without PT, the plaquette rotation with angle $\theta$ is applied on one of the two bottom plaquettes, whereas WALA with PT has two plaquette rotations with angle $\theta/2$ applied on both plaquettes. WALA without PT exactly prepares the toric code state $\ket{\Psi_{\text{T}}}$ at $\theta=\theta_c\equiv\pi/4$, as has been shown in Ref.~\cite{PhysRevB.107.L041109}, and WALA with PT $\ket{\phi_{\mathbb{Z}_2}^{\text{w/}}(\theta_c)}$ can be shown to be equal to $\ket{\phi_{\mathbb{Z}_2}^{\text{w/o}}(\theta_c)}$: Denote the quantum state after applying plaquette rotations $p\in [\mathbf{1},\mathbf{4}]\equiv\{\mathbf{1},\mathbf{2},\mathbf{3},\mathbf{4}\}$ as $\ket{\Psi_{[\mathbf{1},\mathbf{4}]}(\theta_c)}$, and the qubits of plaquette $\mathbf{5}$ and $\mathbf{6}$ are $\{i,j,k,l\}$ and $\{i',j',k',l\}$, respectively. Then we have the sequential plaquette rotations applied to $\ket{\Psi_{[\mathbf{1},\mathbf{4}]}(\theta_c)}$:
\begin{equation}
    \begin{aligned}
    \ket{\phi_{\mathbb{Z}_2}^{\text{w/}}(\theta_c)}&=e^{-\im\theta_c X_{i'}X_{j'}X_{k'}Y_{l}/2}e^{-\im\theta_c X_{i}X_{j}X_{k}Y_{l}/2}\ket{\Psi_{[\mathbf{1},\mathbf{4}]}(\theta_c)}\\
    &=e^{-\im\theta_c X_{i}X_{j}X_{k}Y_{l}/2}e^{-\im\theta_c X_{i'}X_{j'}X_{k'}Y_{l}/2}\ket{\Psi_{[\mathbf{1},\mathbf{4}]}(\theta_c)}\\
    &=e^{-\im\theta_c X_{i}X_{j}X_{k}Y_{l}/2}(\cos\frac{\theta_c}{2}-\ii\sin\frac{\theta_c}{2}X_{i'}X_{j'}X_{k'}Y_{l})\ket{\Psi_{[\mathbf{1},\mathbf{4}]}(\theta_c)}\\
    &=e^{-\im\theta_c X_{i}X_{j}X_{k}Y_{l}/2}(\cos\frac{\theta_c}{2}-\ii\sin\frac{\theta_c}{2}X_{i}X_{j}X_{k}Y_{l})\ket{\Psi_{[\mathbf{1},\mathbf{4}]}(\theta_c)}\\
    &=e^{-\im\theta_c X_{i}X_{j}X_{k}Y_{l}}\ket{\Psi_{[\mathbf{1},\mathbf{4}]}(\theta_c)}\\
    &=\ket{\phi_{\mathbb{Z}_2}^{\text{w/o}}(\theta_c)},
\end{aligned}
\end{equation}
where in the second line, we use the commutation relation $[X_{i'}X_{j'}X_{k'}Y_{l},X_{i}X_{j}X_{k}Y_{l}]=0$ between the two plaquette rotations both oriented to qubit $l$. In the forth line, we use the fact that $\ket{\Psi_{[\mathbf{1},\mathbf{4}]}(\theta_c)}$ is the stabilizer state of the plaquette operators $X_p^{\square}$ with $p\in [\mathbf{1},\mathbf{4}]$, and their product is also stabilizer
\begin{align}
    \ket{\Psi_{[\mathbf{1},\mathbf{4}]}(\theta_c)} = \prod_{p\in [\mathbf{1},\mathbf{4}]} X_p^{\square}\ket{\Psi_{[\mathbf{1},\mathbf{4}]}(\theta_c)}=X_iX_j X_kX_{i'}X_{j'} X_{k'}\ket{\Psi_{[\mathbf{1},\mathbf{4}]}(\theta_c)},
\end{align}
where $\{i,j,k,i',j',k'\}$ is exactly the boundary of the four plaquettes $[\mathbf{1},\mathbf{4}]$. Therefore, both $\ket{\phi_{\mathbb{Z}_2}^{\text{w/}}(\theta_c)}=\ket{\phi_{\mathbb{Z}_2}^{\text{w/o}}(\theta_c)}=\ket{\Psi_{\text{T}}}$ is the toric code state for $\theta_c=\pi/4$, whereas these two states are not equal for general $\theta\neq\theta_c$.
\\\\
\noindent \textbf{Product order in the lattice}~---~For the WALA on the whole $L\times L\times 2$ lattice, quantum gates in unit of cubes are applied sequentially to the initial state from the top cube to the bottom cube as shown by the four arrows in Fig.~\ref{fig:WALA-general}(\textbf{d}). The broken line in the figure connecting the cube centers explicitly denotes the order of applying plaquette rotations, which is similar to the order of applying $ZY$ rotations in the Ising whip circuit. This cube order guarantees that each plaquette rotation oriented to qubit $l$ inside all cubes has the $R_Y$ rotation applied directly to the initial $\ket{0}_l$ state. Additionally, since each plaquette in the lattice should have the plaquette rotation applied only once, a later cube does not apply the plaquette rotation if a previous one has already applied it. For example, two cubes in the right panel of Fig.~\ref{fig:WALA-general}(\textbf{d}) only apply plaquette rotations $[\mathbf{2},\mathbf{6}]$ and $[\mathbf{2},\mathbf{5}]$ for WALA with and without PT, respectively.

Combining the plaquette order in one cube and the cube order in the lattice gives the explicit product order of $\ket{\phi_{\mathbb{Z}_2}^{\text{w/}}(\theta)}$ in Eq.~\eqref{eq:WALA-state}, and $\ket{\phi_{\mathbb{Z}_2}^{\text{w/o}}(\theta)}$ with the similar form also holds a well-defined product order. Our previous argument on the equivalence of $\ket{\phi_{\mathbb{Z}_2}^{\text{w/}}(\theta_c)}$ and $\ket{\phi_{\mathbb{Z}_2}^{\text{w/o}}(\theta_c)}$ in one cube still holds for the $L\times L\times 2$ lattice, i.e, the toric code state on the $L\times L\times 2$ lattice is prepared exactly by
\begin{align}
    \ket{\Psi_{\text{T}}} = \ket{\phi_{\mathbb{Z}_2}^{\text{w/}}(\theta_c)} = \ket{\phi_{\mathbb{Z}_2}^{\text{w/o}}(\theta_c)}.
    \label{eq:WALA-toric-code-state}
\end{align}

The toric code state is the ground state of $H_{\ZZ_2}(x) = -(1-x)\sum_{p} X_{p}^{\square}-x\sum_{l} Z_l-\sum_{v} Z_{v}^{+}$ with $x=0$, since $\ket{\Psi_{\text{T}}}$ is the stabilizer state of both $X_{p}^{\square}$ and $Z_{v}^{+}$ for all plaquettes $p$ and vertex $v$, i.e,
\begin{align}
    \bra{\Psi_{\text{T}}}X_{p}^{\square}\ket{\Psi_{\text{T}}} = \bra{\Psi_{\text{T}}}Z_{v}^{+}\ket{\Psi_{\text{T}}}=1.
\end{align}
Thus, $X_{p}^{\square}$ and $Z_{v}^{+}$ expectation values of the WALA state $\ket{\phi_{\mathbb{Z}_2}^{\text{w/o}}(\theta_c)}$ and $\ket{\phi_{\mathbb{Z}_2}^{\text{w/}}(\theta_c)}$ are $1$ by Eq.~\eqref{eq:WALA-toric-code-state}. In the following subsection, we study the $\theta$ dependence of the plaquette expectation
\begin{align}
    \langle X_{p}^{\square}\rangle ^{\text{w/o}}_{\theta}  \equiv \bra{\phi_{\mathbb{Z}_2}^{\text{w/o}}(\theta)} X_{p}^{\square}\ket{\phi_{\mathbb{Z}_2}^{\text{w/o}}(\theta)}.
\end{align}
The plaquette expectation of WALA with PT $\quad \langle X_{p}^{\square}\rangle_{\theta}  \equiv \bra{\phi_{\mathbb{Z}_2}^{\text{w/}}(\theta)} X_{p}^{\square}\ket{\phi_{\mathbb{Z}_2}^{\text{w/}}(\theta)}$ is evaluated numerically in the End Note. The $\theta$ dependence of the $Z_{v}^{+}$ expectation is trivial, since the plaquette rotations $e^{-\im\theta X_iX_jX_k Y_l}$ commute with $Z_{v}^{+}$, such that $\langle Z_{v}^{+}\rangle ^{\text{w/o}}_{\theta}=\langle Z_{v}^{+}\rangle ^{\text{w/}}_{\theta}=1, \forall \theta\in\mathbb{R}$.

\subsection{No non-analyticity in WALA without phase transition}

\begin{figure}
    \centering
    \includegraphics[width=0.94\linewidth]{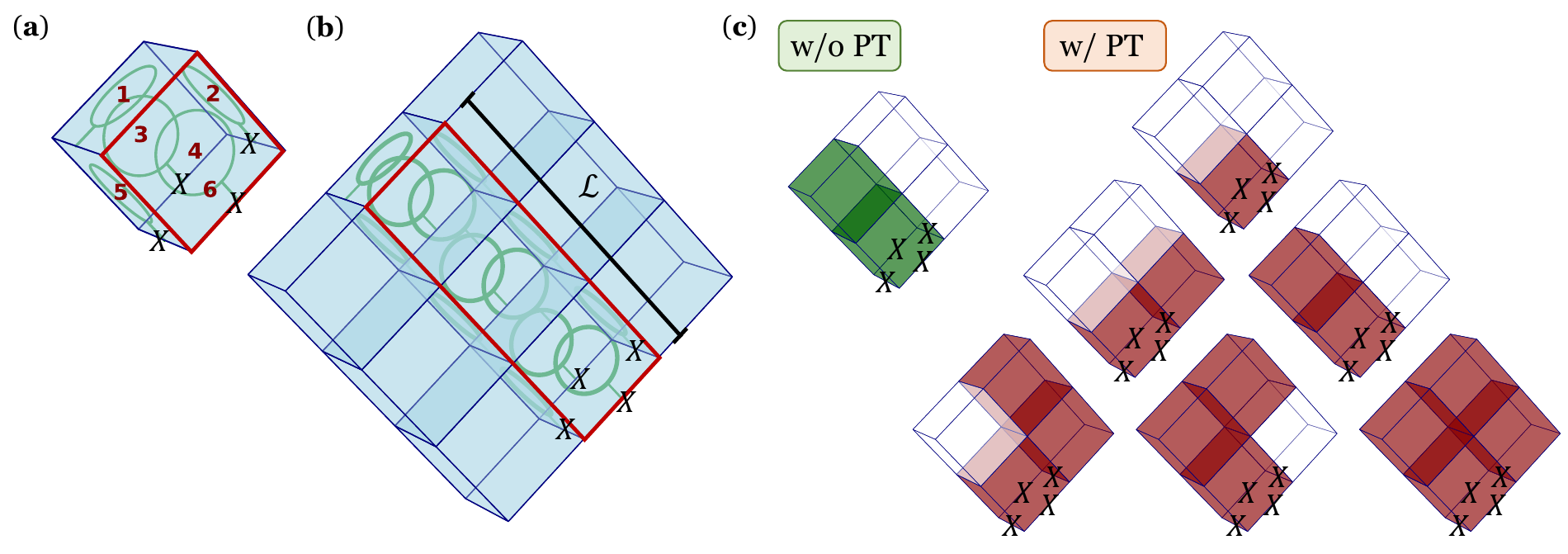}
    \caption{Plaquette expectation of WALA without PT in (\textbf{a}) a single cube; (\textbf{b}) a column of cubes with height $\LL$. Each green oriented cycle denote a plaquette rotation $e^{-i\theta XXXY}$. (\textbf{c}) Non-local sphere contributions to the plaquette expectation of WALA without (left panel) and with (right panel) phase transition.}
    \label{fig:WALA_wo_PT}
\end{figure}

In this section, we analytically evaluate the plaquette expectation of $\ket{\phi_{\mathbb{Z}_2}^{\text{w/o}}(\theta)}$ in the infinite volume limit $L\to \infty$. Similar to the Ising whip circuit without PT studied in Supp. Note~\ref{sec:Whip circuits without phase transition}, we show that the expectation has no non-analyticity, indicating that $\ket{\phi_{\mathbb{Z}_2}^{\text{w/o}}(\theta)}$ has no phase transition. 

We firstly evaluate the expectation of the plaquette operator in the cube illustrated at the bottom of Fig.~\ref{fig:WALA-general}(\textbf{c}). Due to the product order $\mathbf{1}$-$\mathbf{5}$, the plaquette operator on $p\in[\mathbf{1},\mathbf{5}]$ can be derived by
\begin{align}
    \bra{0}^{\otimes N} e^{\im\theta X_i X_j X_k Y_l} X_iX_jX_k X_l e^{-\im\theta X_i X_j X_k Y_l}\ket{0}^{\otimes N}=\sin 2\theta,
\end{align}
which contains only the local contribution. The non-local contribution can be observed for the plaquette operator on $p=\textbf{6}$, as illustrated in Fig~\ref{fig:WALA_wo_PT}(\textbf{a}). The plaquette operator of $\bos{6}$ is transformed by all plaquette rotations of $p\in[\bos1,\bos5]$. Thus, the expectation reads
\begin{align}
     \langle X_{\bos 6}^{\square}\rangle ^{\text{w/o}}_{\theta,1}=(\sin 2\theta)^5.
\end{align}
This non-local contribution comes from the closed sphere covering the whole cube~---~similar to the non-local contribution by the closed loops in the Ising whip circuit. The exponent $5$ can be derived directly from the \textit{surface area} of the sphere, which is $|\SS|=6$ in unit of the plaquette area in this single cube example.

Generalizing this example to the $L\times L\times 2$ lattice, the only non-local contribution comes from  $X_{\bos{6}}^{\square}$ of all cubes, and each non-local contribution corresponds to a closed sphere in the lattice: The closed sphere of $X_{\bos{6}}^{\square}$ is consisted of cubes at the same column of $X_{\bos{6}}^{\square}$, as indicated by the red rectangle in Fig~\ref{fig:WALA_wo_PT}(\textbf{b}). The area of the closed sphere is $|\SS| = 4\LL+2$. Thus, the plaquette expectation reads
\begin{align}
\langle X_{\bos 6}^{\square}\rangle ^{\text{w/o}}_{\theta,\LL}=(\sin 2\theta)^{4\LL+1}.
\label{eq:plaquette-contribution}
\end{align}
The non-analytic part of the $\mathbb{Z}_2$ gauge energy density $\langle H_{\ZZ_2}\rangle_{\theta}$ is proportional to the averaged plaquette expectation within the same column. In the infinite volume limit $L\to\infty$, the averaged plaquette expectation reads
\begin{align}
    \lim_{L\to\infty}\frac{1}{L}\sum_{\LL=1}^{L}\langle X_{\bos 6}^{\square}\rangle ^{\text{w/o}}_{\theta,\LL} =\left\{ \begin{array}{ll}
 \lim_{L\to\infty}\frac{(\sin 2\theta)^{5}}{L}\frac{1-(\sin 2\theta)^{4L}}{1-(\sin 2\theta)^{4}}=0, & \textrm{if $\theta\neq \theta_c$;}\\
 1, & \textrm{if $\theta= \theta_c$,}
  \end{array} \right.
\end{align}
i.e., the energy density only has a removable singularity at $\theta=\theta_c = \pi/4+m\pi/2, m\in\ZZ$. Similar to the Ising whip circuit without PT, $\ket{\phi_{\mathbb{Z}_2}^{\text{w/o}}(\theta_c)}$ has no phase transition, which results from only one non-local sphere contribute to the plaquette expectation in Eq.~\eqref{eq:plaquette-contribution}.

In contrast, WALA with PT introduce more sphere contributions to the plaquette expectation, leading to the non-analyticity in the plaquette expectation $\langle X_{p}^{\square}\rangle_{\theta}$ shown in Appendix A of the End Note. These sphere contributions appear due to the additional plaquette rotation $e^{-\im\theta XXXY/2}$ at $p=\bos 6$ in Fig~\ref{fig:WALA-general}(\textbf{c}), such that the sphere can be closed across the whole lattice. Fig~\ref{fig:WALA_wo_PT}(\textbf{c}) illustrates the six spheres contributing to the plaquette expectation $\langle X_{\bos 6}^{\square}\rangle ^{\text{w/}}_{\theta,\LL}$. The number of these spheres is much more than the single sphere contribution of $\langle X_{\bos 6}^{\square}\rangle ^{\text{w/o}}_{\theta,\LL}$ in the left panel of Fig~\ref{fig:WALA_wo_PT}(\textbf{c}), as required by the non-analyticity mechanism described in the main text.

\section{Classical simulation of the 2-d Ising whip circuit}
In this note, targeting on the 2-d Ising whip circuit, we present the simulation time complexity using the Pauli propagation with the early-evaluation strategy and the tensor network method. We show the efficiency of classically simulating the 2-d Ising whip circuit, and the difficulty of generalizing these methods to the 3-d and higher dimensional Ising whip circuit.

\subsection{Pauli propagation time complexity}

The exponential time complexity of the the naive Pauli propagation method shown in Note~\ref{sec:Time complexity of evaluating $ZZ$ expectation} comes from the exponential number of Pauli paths. Here, we propose an efficient simulation method that avoids evaluating each Pauli path individually. The initial product state $|\phi_0\rangle=|+\rangle^{N}$ has the property that $\langle \phi_0 |\sigma_I|\phi_0\rangle=0$ for any Pauli string $\sigma_I$ containing at least one local $Y$ or $Z$. 
Owing to the sequential structure of the whip circuit, this means we can remove any Pauli strings from the simulation \textit{early} if it contains a local $Y_j$ or a $Z_j$ on any site $j$ that will no longer be transformed by any remaining gates in the rest of the circuit. This procedure amounts to taking the expectation $\bra{+}X_j\ket{+}=\bra{+}\II_j \ket{+}=1$, $\bra{+}Y_j\ket{+}=\bra{+}Z_j \ket{+}=0$ in advance at the site $j$ that we remove, and update the remaining coefficients accordingly. 

Concretely, to evaluate the expectation $\langle Z_iZ_j\rangle_{\theta}$ with nearest-neighbor $i=(0,1),j=(0,0)$ shown in Fig.~\ref{fig:loops_analytical_ZZ}(\textbf{a}), we divide the Ising whip circuit on an $L\times L$ lattice into layers of $ZY$ rotations 
\begin{align}
    \ket{\phi_{\text{w}}(\theta)} = U_{1}U_{2}\cdots U_{\LL} \cdots  U_{2L-2} \ket{+}^{\otimes N},
\end{align}
where $U_{\LL}$ denotes $ZY$ rotations in a horizontal layer in Fig.~\ref{fig:loops_analytical_ZZ}(\textbf{a}). For a local $Z$ operator at a site $(x,y)$, as illustrated in Fig.~\ref{fig:loops_analytical_ZZ}(\textbf{b}), the action of the layer $U_{\LL}$ reads
\begin{align}
    U_{\LL}^{\dagger}Z_{x,y}U_{\LL} = \cos^2(\theta) Z_{x,y} -  \cos(\theta)\sin(\theta) X_{x,y}Z_{x,y+1}  - \cos(\theta)\sin(\theta) X_{x,y}Z_{x+1,y} + \sin(\theta)^2 Z_{x+1,y}Z_{x,y}Z_{x,y+1},
    \label{eq:UZU-conjugation}
\end{align}
After applying $U_{\LL}$, the site $(x,y)$ is no longer acted on by any further $ZY$ rotations. Thus, the first and the fourth term are removed by the early evaluation $\bra{+}Z_{(x,y)} \ket{+}=0$, whereas the second and the third term are retained due to $\bra{+}X_{(x,y)} \ket{+}=1$, leaving a single $Z_{x,y+1}$ and $Z_{x+1,y}$ for further propagation. The effect of this kind of transformation is illustrated as 
\begin{center}
    \includegraphics[width=0.5\textwidth]{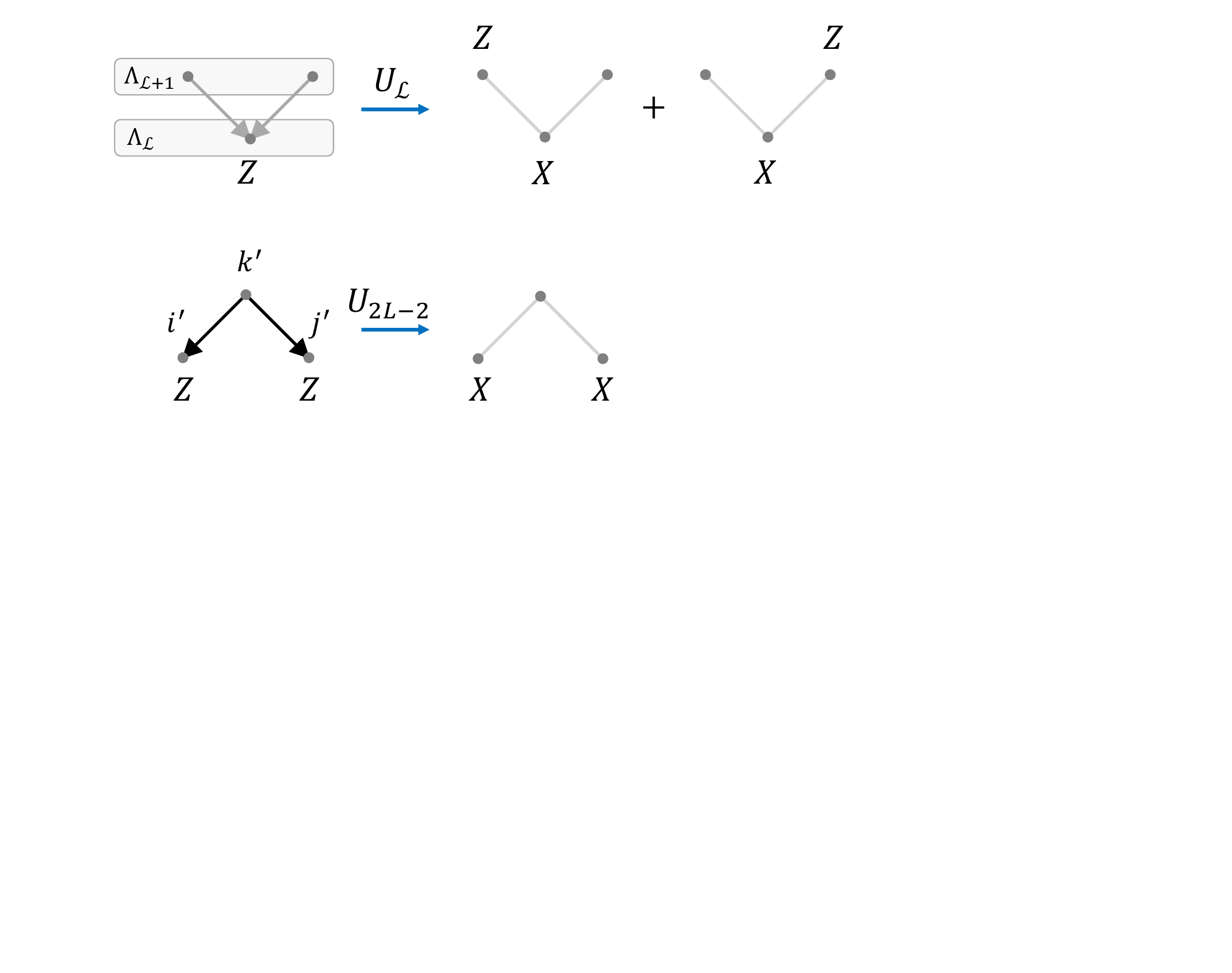}
\end{center}
where the gray arrow denotes the $ZY$ rotation $e^{-\ii\theta ZY/2}$. A key property of this transformation is that one layer $U_{\LL}$ transforms a local $Z_j$ with $j\in \Lambda_{\LL}$ into the linear combination of only \textit{local} $Z_{j'}$s with $j'\in \Lambda_{\LL+1}$, where $\Lambda_{\LL}$ denotes sites at the lower side of the layer $U_{\LL}$, as illustrated by Fig.~\ref{fig:Lambda_lattice}. This property inspires a modified approach of Pauli propagation with the \textit{early-evaluation} strategy to simulate the sequential transformation of $Z_{0,1}Z_{0,0}$. 

\begin{figure}[b]
    \centering
    \includegraphics[width=0.45\linewidth]{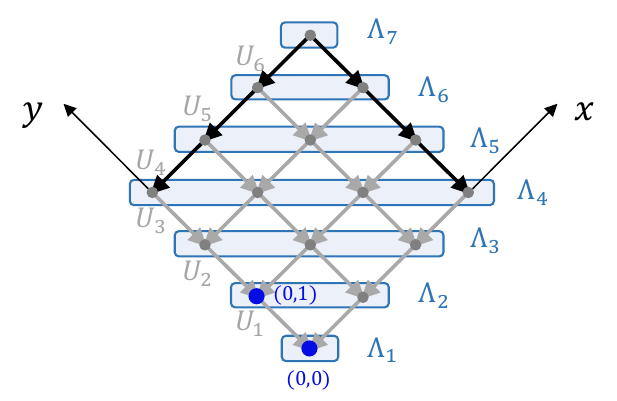}
    \caption{The modified Pauli propagation with the early-evaluation strategy to calculate the expectation value $\langle Z_{0,0}Z_{0,1}\rangle_{\theta}$ on an $L\times L= 4\times 4$ lattice. $\Lambda_\LL$ with $\LL\in\{1,\ldots,7\}$ is the set of sites, and $U_\LL$ with $\LL\in\{1,\ldots,6\}$ is the set of $ZY$ rotations in the $\LL$th layer.}
    \label{fig:Lambda_lattice}
\end{figure}

This modified Pauli propagation maintains a linear combination of two-body Pauli-$Z$ strings
\begin{align}
    \overline{ZZ}(\LL)\equiv \left\{ \begin{array}{ll}
 Z_{0,1}Z_{0,0} &\text{if } \LL=1;\\
\sum_{i,j\in\Lambda_{\LL};~ i \text{ is left to } j} c_{ij}^{(\LL)} Z_i Z_j+c_{\LL}&\text{if } 1<\LL\leq 2L-1,
  \end{array} \right.
  \label{eq:ZZbar-definition}
\end{align}
where $c_{ij}^{(\LL)}, c_{\LL}$ are real coefficients as functions of $\theta$. To start with, the first layer $U_1$ transforms $\overline{ZZ}(1)$ to $\overline{ZZ}(2)$, which reads
\begin{align}
    \overline{ZZ}(2) = Z_{0,1}\times(-\cos\theta\sin\theta Z_{0,1}-\cos\theta\sin\theta Z_{1,0}) = -\cos\theta\sin\theta -\cos\theta\sin\theta Z_{0,1}Z_{1,0},
\end{align}
where the bracket in the second term is derived from $U_{1}^{\dagger}Z_{0,0}U_{1}$, and we have used Eq.~\eqref{eq:UZU-conjugation}. Thus, we have $c_{(0,1);(1,0)}^{(1)}= -\cos\theta\sin\theta$ and $c_1 = -\cos\theta\sin\theta$. Then, for $U_{2}$, both $Z_{1,0}$ and $Z_{0,1}$ are updated using Eq.~\eqref{eq:UZU-conjugation}, and so on. Until the last step, $\overline{ZZ}(2L-2)$ reads 
\begin{align}
    \overline{ZZ}(2L-2) = c^{(2L-2)}_{i'j'}Z_{i'}Z_{j'} +c_{(2L-2)},
\end{align}
where $i'$ and $j'$ denote the site $i' = (L-2, L-1), j'=(L-1, L-2)$. The last layer $U_{2L-2}=e^{-\ii\theta Z_{k'}Y_{j'}} e^{-\ii\theta Z_{k'}Y_{i'}}, k'=(L-1,L-1)$ transforms $Z_{i'}Z_{j'}$ as
\begin{equation}
    \begin{aligned}
    &U_{2L-2}^{\dagger} Z_{i'}Z_{j'}U_{2L-2}= \cos^2(2\theta)Z_{i'} Z_{j'}-\cos(2\theta)\sin(2\theta)Z_{i'}Z_{k'}X_{j'}-\cos(2\theta)\sin(2\theta)Z_{j'}Z_{k'}X_{i'}+\sin^2(2\theta)X_{i'} X_{j'}.\nonumber
\end{aligned}
\end{equation}
Here, only the last term does not vanish by the initial product state $\ket{+}_{i'}\otimes \ket{+}_{j'}$, which can be illustrated by 
\begin{center}
    \includegraphics[width=0.3\textwidth]{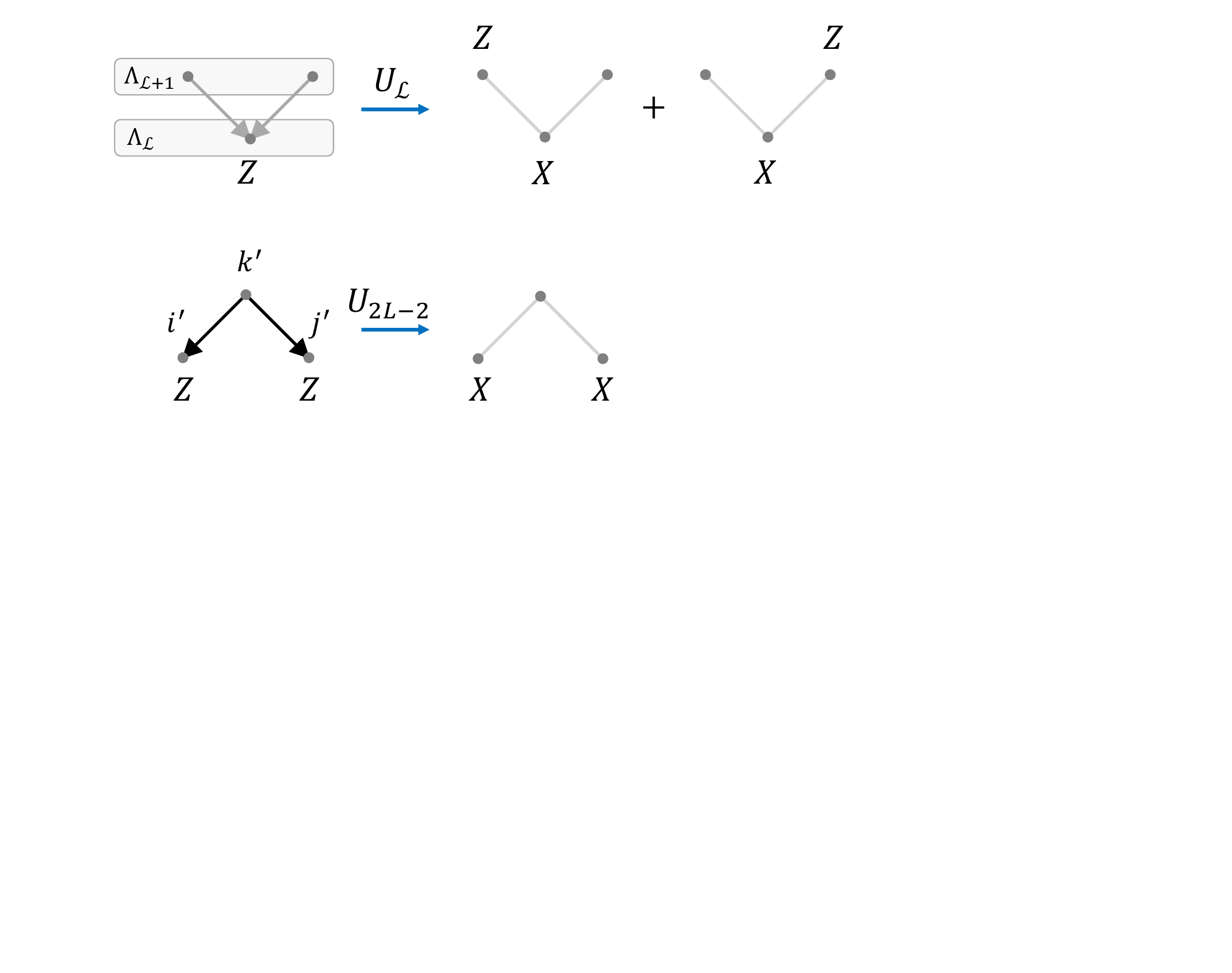}
\end{center}
where the black arrows denote the $ZY$ rotations $e^{-\ii\theta ZY}$. In other words, $\overline{ZZ}(2L-2)$ is transformed by $U_{2L-2}$ as 
\begin{align}
    \overline{ZZ}(2L-2) \xrightarrow{U_{2L-2}} \overline{ZZ}(2L-1) =c_{(2L-1)}= c^{(2L-2)}_{i'j'}\sin^2(2\theta)+c_{(2L-2)} = \langle Z_{0,0}Z_{0,1}\rangle_{\theta}.
\end{align}
We see $\overline{ZZ}(2L-1)$ contains only the coefficient $c_{(2L-1)}$, which is equal to $\langle Z_{0,0}Z_{0,1}\rangle_{\theta}$.

The time complexity of evaluating $\langle Z_{0,0}Z_{0,1}\rangle_{\theta}$ by this procedure grows polynomially with the lattice size $L$. To see this, note that each transformation from $\overline{ZZ}(\LL)$ to $\overline{ZZ}(\LL+1)$ in $U_{\LL}$ requires time proportional to the number of $ZZ$ terms in $\overline{ZZ}(\LL)$. 
According to the definition in Eq.~\eqref{eq:ZZbar-definition}, the number of $ZZ$ term in $\overline{ZZ}(\LL)$ is ${|\Lambda_{\LL}|\choose 2}$, where $|\Lambda_{\LL}|$ is the number of sites in the $\LL$th layer. On an $L\times L$ lattice, as illustrated by Fig.~\ref{fig:Lambda_lattice}, $|\Lambda_{\LL}|$ reads
\begin{align}
    |\Lambda_{\LL}|=\left\{ \begin{array}{ll}
 \LL & \textrm{if $1\leq\LL\leq L$;}\\
 2L-\LL & \textrm{if $L<\LL\leq 2L-2$.}
  \end{array} \right.
\end{align}
This is also proportional to the number of individual transformations in $U_\LL$, such that the time complexity of Pauli propagation with the early-evaluation strategy is
\begin{align}
T_{\text{PP}}\sim\sum_{\LL=1}^{2L-2}|\Lambda_\mathcal{L}|\begin{pmatrix}
        |\Lambda_{\LL}|\\2
    \end{pmatrix}
    = \sum_{\LL=1}^{L}L\begin{pmatrix}
        L\\2
    \end{pmatrix}+\sum_{\LL=L+1}^{2L-2}(2L-\LL)\begin{pmatrix}
        2L-\LL\\2
    \end{pmatrix}
    \sim \sum_{\LL=1}^{L}\LL^3\sim \OO(L^4).
\end{align}
In Fig.~\ref{fig:orqa_scaling} we show the wall-clock time and number of Pauli strings of the above modified Pauli propagation without any truncation performed using the ORQA-framework~\cite{orqa}. We see that the number of Pauli strings grows proportionally to $L^2$, and the corresponding time complexity is consistent with the predicted scaling of $\OO(L^4)$.

\begin{figure}
    \centering
    \includegraphics[width=0.6\linewidth]{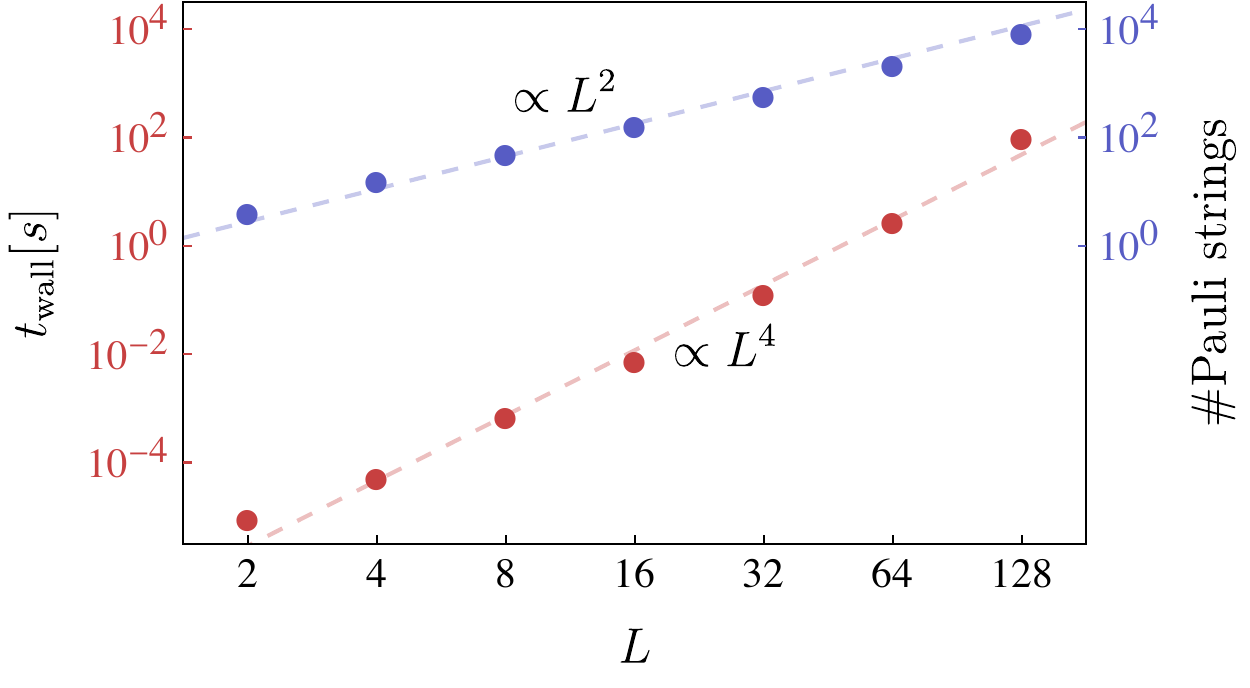}
    \caption{Overall wall-clock time in seconds of --- and the maximum number of Pauli strings retained in --- our Pauli propagation calculations of the $ZZ$ expectation for the 2-d whip circuit on the $L\times L$ lattice. We show these quantities in red and blue, respectively. Here $\theta=\pi/4$, but we find no difference in wall-time or number of Pauli strings for other choices of $\theta$, except at the trivial points $\theta=m\pi/2$, $m\in\mathbb{Z}$. The dashed lines corresponds to the power-laws $\propto L^4$ (red) and $\propto L^2$ (blue).}
    \label{fig:orqa_scaling}
\end{figure}

\vspace{2mm}
\noindent \textbf{Application to the 3-d Ising whip circuit}~---~The efficiency of the modified Pauli propagation method with the early-evaluation strategy can not straightforwardly be generalized to the 3-d Ising whip circuit. In the 2-d case, a local $Z_j$ is transformed by at most two $ZY$ rotations. Thus, according to Eq.~\eqref{eq:UZU-conjugation}, $Z_j$ with $j\in \Lambda_{\LL}$ is transformed to another local $Z_j'$ with $j'\in \Lambda_{\LL+1}$, such that the two-body Pauli-$Z$ linear combination is transformed to another two-body linear combination as
\begin{align}
    \overline{ZZ}(\LL)\xrightarrow{U_{\LL}}\overline{ZZ}(\LL+1).
\end{align}
However, for the 3-d Ising whip circuit, a local $Z_j$ is transformed by at most three $ZY$ rotations. In this case, a calculation similar to Eq.~\eqref{eq:UZU-conjugation} gives the following transformation
\begin{center}
    \includegraphics[width=0.7\textwidth]{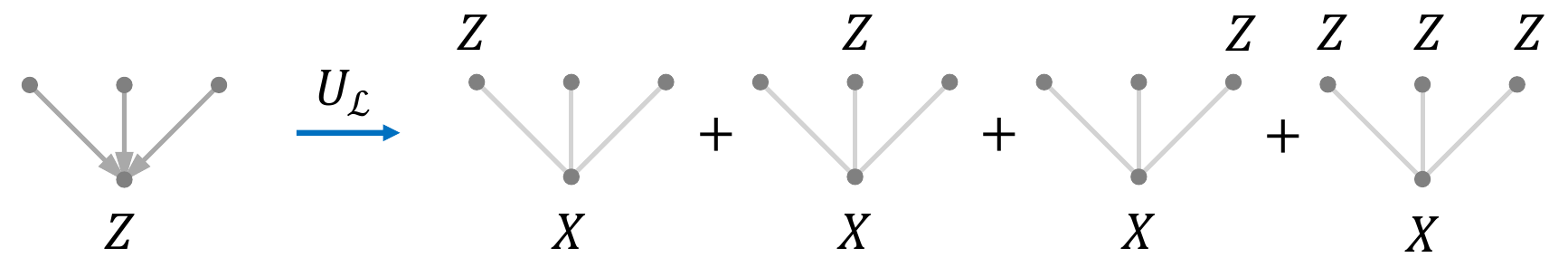}
\end{center}
The first three terms still transform a local $Z$ into another local $Z$. However, the last term extends the local $Z$ into the tensor product of three $Z$s. Thus, for the 3-d Ising whip circuit, starting from a two-body Pauli-$Z$ linear combination, the transformation by $U_{\LL}$ reads
\begin{align}
    \overline{Z^{\otimes 2}}(\LL)\xrightarrow{U_{\LL}}\overline{Z^{\otimes 6}}(\LL+1)\xrightarrow{U_{\LL+1}}\overline{Z^{\otimes 18}}(\LL+2)\cdots,
\end{align}
and the number of terms in these linear combinations grows exponentially with $\LL$. Thus, the Pauli propagation retaining only the two-body Pauli-$Z$ strings does not work for the 3-d Ising whip circuit. Similar difficulties are encountered in the higher dimensional Ising whip circuits with more than three $ZY$ rotations applied to a local $Z$.

\subsection{Tensor network time complexity}\label{sec:Tensor network time complexity}

\begin{figure}
    \centering
    \includegraphics[width=\linewidth]{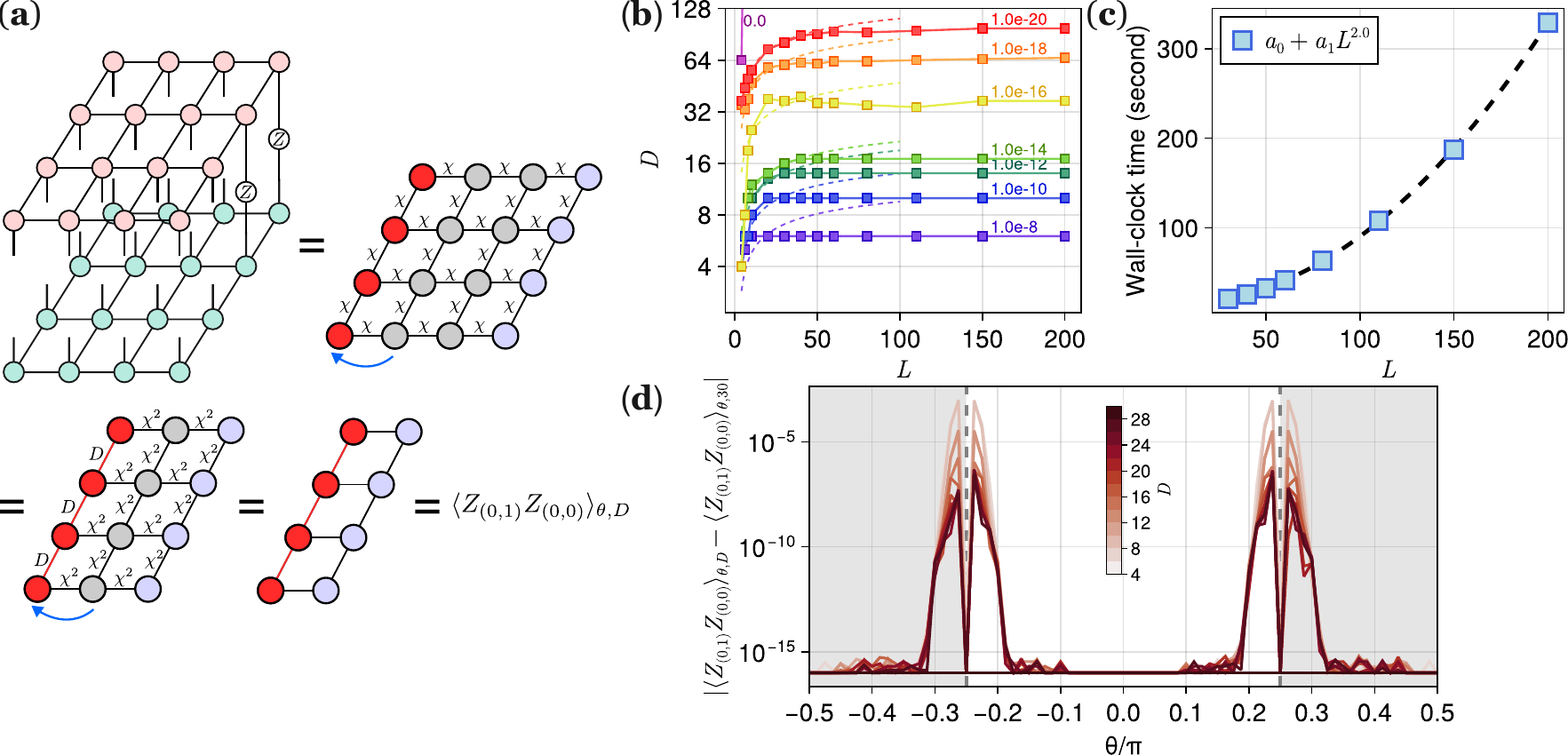}
    \caption{(a) Tensor-network contraction scheme for evaluating $\langle Z_{(0,1)}Z_{(0,0)}\rangle_{\theta,D}$ at arbitrary $\theta$ values, utilizing the MPO--MPS multiplication method.
    (b) Maximum bond dimension $D$ as a function of system size $L$ for different truncation cutoffs during tensor-network contraction; dashed lines represent the curve fitting for $\log L$. Here $D$ takes the maximum value at $\theta=0.2\pi$ and $0.225\pi$.
    (c) Wall-clock time required to evaluate $\langle Z_{(0,1)}Z_{(0,0)}\rangle_{\theta,D}$ near the critical point versus $L$ for $\theta=0.225\pi$ and $0.2375\pi$. Maximum bond dimension is fixed as $D=32$.
    (d) Difference between the expectation values $\langle Z_{(0,1)}Z_{(0,0)}\rangle_{\theta,D}$ with bond dimension $4\leq D<30$ and $\langle Z_{(0,1)}Z_{(0,0)}\rangle_{\theta,30}$. The lattice size is $L\times L=100\times 100$. The difference smaller than $10^{-16}$ is regarded as $10^{-16}$ in avoid of round-off errors at machine precision.}
    \label{fig:tn-contraction}
\end{figure}

We numerically study the time complexity of projected entangled pair states (PEPS) to evaluate the expectation value $\langle Z_{(0,1)}Z_{(0,0)}\rangle_{\theta}$, with $ Z_{(0,1)}Z_{(0,0)}$ on the 2-d coordinate. The initial state $\ket{\phi_0}=\ket{+}^{\otimes N}$ is represented as PEPS with a bond dimension $\chi=1$.
After applying the whip circuit, the resulting state $\ket{\phi_{\text{w}}(\theta)}$ is described by a PEPS with a bond dimension $\chi=2$.
The expectation value $\langle Z_{(0,1)}Z_{(0,0)}\rangle_{\theta}$ can be expressed as two local $Z$ tensors sandwiched by the bra and ket PEPS representations of $\ket{\phi_{\text{w}}(\theta)}$, as illustrated in Fig.~\ref{fig:tn-contraction}(\textbf{a}).
Contracting physical indices on each lattice site leads to a 2-d tensor network with an effective bond dimension $\chi^2=4$, covering the whole square lattice.

To perform the contraction stably and efficiently, we reinterpret the left-most column of tensors in Fig.~\ref{fig:tn-contraction}(\textbf{a}) as a MPS and the right-adjacent column as a MPO .
We then recursively apply the MPO--MPS multiplication using the density-matrix scheme \cite{ITensor} until reaching the final two columns, i.e., the second tensor network in the second row of Fig.~\ref{fig:tn-contraction}(\textbf{a}).
We denote by $D$ the maximum bond dimension of the left-most-column MPS during the contraction, and its right-adjacent column is treated as another MPS. The expectation value $\langle Z_{(0,1)}Z_{(0,0)}\rangle_{\theta}$ is approximated by computing the inner product between these two resulting MPSs. We denote the evaluation results as $\langle Z_{(0,1)}Z_{(0,0)}\rangle_{\theta,D}$ with the given maximum bond dimension $D$.

Fig.~\ref{fig:tn-contraction}(\textbf{b}) plots the maximum bond dimension $D$ as a function of lattice size $L$, where we set different truncation levels from $10^{-8}$ to $10^{-20}$ marked by different colors.
We see that the maximum bond dimension rapidly saturates as $L$ increasing for all truncation levels considered. Thus, for a given truncation level~\cite{ITensor}, the required bond dimension during contraction has no dependence on the system size, and the computational cost of each MPO--MPS multiplication in each column of the $L\times L$ lattice scales as $\order{L}$.
Since the MPO--MPS multiplication is repeated for $\order{L}$ times, the overall time complexity of the tensor-network contraction scales as $\order{L^2}$. This quadratic computational cost is consistent with the numerical time consumption shown in Fig.~\ref{fig:tn-contraction}(\textbf{c}).

To demonstrate the effect of criticality to the PEPS evaluation, Fig.~\ref{fig:tn-contraction}(\textbf{d}) plots the difference between $\langle Z_{(0,1)}Z_{(0,0)}\rangle_{\theta,D}$ with $D<30$ and $\langle Z_{(0,1)}Z_{(0,0)}\rangle_{\theta,30}$ versus $\theta$. Larger deviations are observed in the vicinity of the critical points $\theta=\theta_c=\pm \pi/4$. Therefore, the PEPS evaluation is more sensitive to the bond dimension truncation near the phase transition point of the Ising PQC, again showcasing the increased classical simulation complexity due to the criticality of PQCs.

\end{document}